\documentclass[letterpaper,11pt]{article}

\usepackage{amsmath}
\usepackage{amssymb}
\usepackage{amsthm}
\usepackage{xcolor}
\usepackage{complexity}
\usepackage{fullpage}
\usepackage[whole]{bxcjkjatype}
\usepackage{microtype}
\usepackage{enumitem}
\usepackage{thm-restate}
\usepackage{epigraph}
\usepackage{authblk}
\usepackage{hyperref}
\usepackage{cleveref}
\usepackage[
  backend=biber,
  style=alphabetic,
  natbib=false,
  maxnames=99,
  maxalphanames=4,
  minalphanames=3,
  useprefix=true
]{biblatex}

\hypersetup{
  linkcolor={red!75!black},
  citecolor={blue!75!black},
  urlcolor={red!75!black},
}

\theoremstyle{plain}
\newtheorem{theorem}{Theorem}[section]
\newtheorem{lemma}[theorem]{Lemma}
\newtheorem{proposition}[theorem]{Proposition}
\newtheorem{fact}[theorem]{Fact}

\theoremstyle{definition}
\newtheorem{definition}[theorem]{Definition}

\theoremstyle{remark}
\newtheorem{remark}[theorem]{Remark}

\newcommand{\Z}{\mathbb{Z}}
\newcommand{\F}{\mathbb{F}}
\newcommand{\card}[1]{\left| {#1} \right|}
\renewcommand{\vec}[1]{\ensuremath{\boldsymbol{#1}}}
\newcommand{\problem}[1]{\ensuremath{\mathrm{#1}}}
\newcommand{\lat}{\ensuremath{\mathcal{L}}}
\newcommand{\GapSVP}{\problem{GapSVP}}
\newcommand{\GapCVP}{\problem{GapCVP}}
\providecommand{\GapNCP}{\problem{GapNCP}}
\providecommand{\GapMDP}{\problem{GapMDP}}
\providecommand{\OR}{\problem{OR}}

\renewcommand{\C}{\mathcal{C}}

\usepackage[draft,multiuser,inline,nomargin]{fixme}
\usepackage[left=1in, right=1in, top=1in]{geometry}
\begin{document}

\title{One-Sided-Error Parameterized Reductions for \\the Minimum Distance and Shortest Vector Problems}

\author[1]{Shuichi Hirahara}
\author[2]{Kazuki Ogitsuka}
\affil[1]{National Institute of Informatics, Japan \protect\\ \texttt{\href{mailto:s_hirahara@nii.ac.jp}{s\_hirahara@nii.ac.jp}}}
\affil[2]{The Graduate University for Advanced Studies, SOKENDAI \protect\\ \texttt{\href{mailto:ogitsuka@nii.ac.jp}{ogitsuka@nii.ac.jp}}}

\date{}
\maketitle

\begin{abstract}
It is notoriously difficult to obtain deterministic reductions for the Minimum Distance Problem (MDP) and the Shortest Vector Problem (SVP).
Under two-sided-error randomized reductions,
Bennett, Cheraghchi, Guruswami, and Ribeiro (STOC 2023) proved parameterized hardness of approximation for these problems.
We partially derandomize their reductions and present one-sided-error randomized reductions:
MDP is $\W[1]$-hard to approximate within an arbitrary constant factor under $\FPT$ many-one one-sided-error randomized reductions;
For every $p \ge 1$,
SVP in the $\ell_p$ norm is $\W[1]$-hard to approximate within an arbitrary constant factor below $2^{1/p}$.

We demonstrate the usefulness of one-sided-error randomized reductions by showing that they can be conditionally derandomized when the target problem has an OR function.
Under a standard hardness-vs-randomness assumption, namely a plausible lower-bound assumption against nondeterministic circuits, we prove a general theorem formalizing this derandomization.
Here, an OR function combines several instances into one instance that preserves their disjunction.
We construct such OR functions for the relevant MDP and SVP gap problems, and thereby obtain deterministic $\W[1]$-hardness for approximating MDP over every fixed finite field within every constant factor, and for approximating SVP in $\ell_p$ norms for every fixed integer $p$ within every factor below $2^{1/p}$.
Applying the same framework to Micciancio's one-sided-error randomized reduction (ToC 2012) yields, under the same circuit lower-bound assumption, deterministic polynomial-time $\NP$-hardness of approximating Euclidean SVP within every constant factor.
\end{abstract}

\newpage
\section{Introduction}
As fundamental computational problems on codes and lattices, the Minimum Distance Problem (MDP) and the Shortest Vector Problem (SVP) have long been studied.
MDP asks for the minimum Hamming weight of a nonzero codeword in a given linear code, and can be viewed as the homogeneous analogue of the Nearest Codeword Problem (NCP).
Similarly, SVP asks for the shortest nonzero vector in a given integer lattice, and can be viewed as the homogeneous analogue of the Closest Vector Problem (CVP).
Both problems play central roles in coding theory, lattice algorithms, cryptography, and computational complexity theory.
For further details, we refer the reader to surveys and standard textbooks \cite{survey/AgrellEVZ02,book/Lin_Li_2021,survey/Peikert16,survey/weger2024surveycodebasedcryptography}.

In approximation hardness results, it is standard to consider the corresponding gap versions.
For an approximation factor \(\gamma \ge 1\), \(\gamma\)-\(\GapMDP_q\) over a finite field \(\mathbb{F}_q\) asks us to distinguish between the case where the given code contains a nonzero codeword of Hamming weight at most \(k\), and the case where every nonzero codeword has Hamming weight greater than \(\gamma k\).
Similarly, \(\gamma\)-\(\GapSVP_p\) asks us to distinguish between the case where the given lattice contains a nonzero vector of \(\ell_p\) norm at most \(d\), and the case where every nonzero lattice vector has \(\ell_p\) norm greater than \(\gamma d\).

\subparagraph*{Classical complexity of MDP and SVP.}
In the classical setting of computational complexity, there is a large body of work on the hardness of approximating MDP and SVP.
For MDP, Dumer, Micciancio, and Sudan \cite{journals/DumerMS03} proved the \(\NP\)-hardness of \(\gamma\)-\(\GapMDP_q\) for every constant \(\gamma \ge 1\) under randomized reductions.
This randomized hardness result was later derandomized by Cheng and Wan \cite{journals/IEEE/ChengW12}.
Austrin and Khot \cite{journals/AustrinK14} gave a simpler deterministic reduction, and Micciancio \cite{conf/ccc/Micciancio14} further simplified the construction.

For SVP, van Emde Boas \cite{vanEmdeBoas} proved the \(\NP\)-hardness of the decision version of SVP in the \(\ell_\infty\) norm, which corresponds to the case \(\gamma=1\) in the gap formulation above, and conjectured that SVP is \(\NP\)-hard also in the Euclidean \((\ell_2\)) norm.
Ajtai \cite{conf/stoc/Ajtai98} affirmatively resolved this conjecture under randomized reductions by proving the \(\NP\)-hardness of the decision version of SVP in the \(\ell_2\) norm.
Subsequent work by Khot \cite{journals/JACM/Kho05,journals/jcss/Kho06}, Haviv and Regev \cite{journals/toc/HR12}, and Micciancio \cite{journals/sicomp/Micciancio01,journals/toc/Micciancio12} established stronger \(\NP\)-hardness results for \(\gamma\)-\(\GapSVP_p\) for larger approximation factors and various \(\ell_p\) norms under randomized reductions.
Bennett and Peikert \cite{conf/random/BennettP23} gave a simple alternative proof of Micciancio's \(\NP\)-hardness result \cite{journals/sicomp/Micciancio01} for \(\gamma\)-\(\GapSVP_p\) at approximation factors \(\gamma \in [1, 2^{1/p})\) under randomized reductions, and their proof appears more suitable for derandomization.

In very recent breakthroughs, Hecht and Safra \cite{conf/stoc/HechtS26} and Hair and Sahai \cite{conf/stoc/HairS26} proved hardness of approximation for \(\GapSVP_p\) for every \(p>2\) under deterministic reductions.
Subsequently, Hair and Sahai \cite{hair2026finite} extended this hardness to all finite \(\ell_p\) norms under deterministic subexponential-time reductions.
Wan \cite{wan2026} resolved van Emde Boas's long-standing conjecture by proving the \(\NP\)-hardness of the decision version of SVP in the \(\ell_2\) norm under deterministic polynomial-time reductions.
More generally, \cite{wan2026} proved the \(\NP\)-hardness of \(\gamma\)-\(\GapSVP_p\) for approximation factors \(\gamma \in [1,2^{1/p})\) under deterministic polynomial-time reductions for every fixed \(p \ge 1\).

\subparagraph*{Parameterized complexity of MDP and SVP.}
\emph{Parameterized complexity} emerged as a systematic framework in the late 1980s and early 1990s, mainly through the work of Downey and Fellows \cite{journals/sicomp/DF95a,journals/tcs/DF95b,book/DowneyF99}, building on earlier work with Langston; see also \cite{book/CyganFKLMPPS15}.
In this framework, the running time of an algorithm is measured not only as a function of the input length \(n\), but also as a function of an additional parameter \(k\).
A parameterized problem is fixed-parameter tractable if it can be solved in time \(f(k)\cdot n^{O(1)}\) for some computable function \(f\).
The class of such problems is denoted by \(\FPT\).
The standard hardness notion used to rule out fixed-parameter tractability is \(\W[1]\)-hardness, which plays a role analogous to \(\NP\)-hardness in classical complexity theory.

For \(\gamma\)-\(\GapMDP_q\) and \(\gamma\)-\(\GapSVP_p\), one of the natural parameters is the distance threshold.
The binary case of parameterized MDP is closely related to the \(k\)-Even Set problem, which asks whether a binary linear code contains a nonzero codeword of Hamming weight at most \(k\).
This problem had long been a central open problem in parameterized complexity.
Bhattacharyya, Bonnet, Egri, Ghoshal, Karthik C. S., Lin, Manurangsi, and Marx \cite{journals/BhattacharyyaBE+21} made major progress by proving that, for every constant \(\gamma \ge 1\), \(\gamma\)-\(\GapMDP_2\) is \(\W[1]\)-hard under randomized reductions.
They also proved that, for every fixed \(p>1\), \(\gamma\)-\(\GapSVP_p\) is \(\W[1]\)-hard for every \(\gamma \in [1,(1/2+1/2^p)^{-1/p})\), again under randomized reductions.

Bennett, Cheraghchi, Guruswami, and Ribeiro \cite{conf/stoc/BennettCGR23} later extended these parameterized inapproximability results.
They proved that, for every fixed finite field \(\mathbb{F}_q\) and every constant \(\gamma \ge 1\), \(\gamma\)-\(\GapMDP_q\) is \(\W[1]\)-hard under randomized reductions.
They also proved parameterized inapproximability results for \(\gamma\)-\(\GapSVP_p\) in all finite \(\ell_p\) norms.
More precisely, for every fixed \(p>1\), they proved \(\W[1]\)-hardness for every constant approximation factor \(\gamma \ge 1\).
For the remaining case \(p=1\), they proved \(\W[1]\)-hardness for every \(\gamma \in [1,2)\).

\subparagraph*{One-sided error vs two-sided error.}
For randomized many-one reductions, there is a distinction between \emph{two-sided error} and \emph{one-sided error}.
In a reduction with two-sided error, YES instances are mapped to YES instances with high probability, and NO instances are mapped to NO instances with high probability.
Thus, an unlucky choice of randomness may cause an error on either side.
In a reduction with one-sided error, errors are allowed only on one side.
In this paper, one-sided error means that YES instances are mapped to YES instances with high probability, whereas NO instances are always mapped to NO instances.

This distinction played an important role in Micciancio's work on the classical hardness of \(\gamma\)-\(\GapSVP_2\).
In particular, Micciancio \cite{journals/toc/Micciancio12} explicitly viewed randomized reductions with one-sided error as a \emph{partial derandomization}, and presented them as progress toward deterministic reductions.
Similar randomized reductions with one-sided error for \(\gamma\)-\(\GapSVP_p\) had already appeared in Micciancio's earlier work \cite{journals/sicomp/Micciancio01}, and were also used by Bennett and Peikert \cite{conf/random/BennettP23}.
Wan \cite{wan2026} subsequently derandomized this line of reductions, proving the \(\NP\)-hardness of \(\gamma\)-\(\GapSVP_p\) under deterministic polynomial-time reductions for every fixed \(p\ge 1\) and every constant \(\gamma \in [1,2^{1/p})\).

For parameterized MDP and SVP, one-sided error was previously known only in the binary MDP setting, while the broader parameterized hardness reductions remained two-sided.
Bhattacharyya, Bonnet, Egri, Ghoshal, Karthik C. S., Lin, Manurangsi, and Marx \cite{journals/BhattacharyyaBE+21} proved that, for every constant \(\gamma\ge 1\), \(\gamma\)-\(\GapMDP_2\) is \(\W[1]\)-hard under randomized reductions with one-sided error.
Their parameterized inapproximability result for \(\gamma\)-\(\GapSVP_p\), however, was obtained under randomized reductions with two-sided error.
The later reductions of Bennett, Cheraghchi, Guruswami, and Ribeiro~\cite{conf/stoc/BennettCGR23} also have two-sided error in all cases, both for \(\gamma\)-\(\GapMDP_q\) over any fixed finite field and for \(\gamma\)-\(\GapSVP_p\) in finite \(\ell_p\) norms.

\subsection{Our Results}

Our first contribution is to strengthen the parameterized hardness reductions of Bennett, Cheraghchi, Guruswami, and Ribeiro~\cite{conf/stoc/BennettCGR23} from two-sided error to one-sided error.

Our first theorem proves this for \(\gamma\)-\(\GapMDP_q\) over every fixed finite field, without losing their hardness \cite{conf/stoc/BennettCGR23} for every constant approximation factor.

\begin{theorem}
\label{thm:MDP-is-W[1]hard}
For every fixed prime power \(q \ge 2\) and every constant \(\gamma \ge 1\), \(\gamma\)-\(\GapMDP_q\) is \(\W[1]\)-hard under randomized \(\FPT\) many-one reductions with one-sided error.
\end{theorem}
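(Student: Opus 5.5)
We follow the BCGR pipeline but rework the step that introduces two-sided error. The chain starts from a $\W[1]$-hard gap version of the Nearest Codeword Problem, $\gamma'$-$\GapNCP_q$, in which $k$ is the parameter and the hardness holds for every constant $\gamma'$. We take this deterministic starting point from the parameterized inapproximability literature (via $k$-Clique and the PIH-free gap results for $k$-Vector-Sum / NCP used by BCGR). We then reduce $\GapNCP_q$ to $\GapMDP_q$ with some constant gap $\gamma_0>1$ using a locally dense code. Finally we amplify the gap to an arbitrary constant $\gamma$ by taking tensor powers $\C^{\otimes t}$ with $t=O(\log\gamma/\log\gamma_0)$. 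Tensoring is deterministic and preserves the homogeneous gap, since $d(\C_1\otimes\C_2)=d(\C_1)d(\C_2)$, and it blows up the parameter only by the constant power $k^t$. So the randomized step is the only place where error can occur, and it suffices to make it one-sided.

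Recall the structure of the locally dense code step. We are given an NCP instance $(\C,\vec t,k)$ and a locally dense code $\lat$ of minimum distance $d>\gamma_0 k'$. The code $\lat$ has many codewords in a Hamming ball of radius $r<d/\gamma_0'$ around a center $\vec s$. We also have a linear map $T$ such that $T$ maps the codewords in the ball onto all targets. We build the code $\{(\vec c - x\vec t)^{\otimes \ell} \oplus (\vec w - x\vec s) : \ldots\}$, that is, we glue the NCP code (repeated $\ell$ times) with $\lat$, linked through $T$.

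\textbf{NO side.} Any nonzero codeword either uses $x=0$, and then has weight at least $\min(\ell\cdot d(\C), d(\lat))$ (this is where the homogeneous parts must have large distance), or uses $x\neq 0$, and then has weight more than $\ell\gamma' k$. Provided the ingredients have the required distances, this analysis is entirely deterministic. The two-sided error in BCGR comes from the random choice of $\vec s$ and $T$: with small probability $\lat$ or the combined code might have short codewords. We eliminate this by constructing the dense code deterministically, with guaranteed minimum distance, for instance as a parameterized-size BCH or Reed--Solomon-concatenated code of length $\mathrm{poly}(n)$ and designed distance $d=\Theta(k)$. All of the randomness is then placed in the choice of the center $\vec s$ and the projection $T$, and neither of these affects the minimum distance analysis. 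We must also check that $x\vec t$ with $x\neq0$ cannot combine with a short $\vec w - x\vec s$. For this we will ensure that every vector in $\lat - x\vec s$ has weight at least $r$, so that NO instances deterministically give distance more than $\gamma_0 k''$.

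\textbf{YES side.} This is the step we expect to be the main obstacle. We need that, with high probability over the random $\vec s$ and $T$, the witness NCP codeword's syndrome is hit by the image under $T$ of some lattice codeword within radius $r$ of $\vec s$. Following BCGR, we use a counting argument: the ball around a random $\vec s$ contains many codewords (a pigeonhole average, requiring $\binom{N}{r}(q-1)^r/q^{N-\dim}\gg q^{m}$). We then use leftover hash / pairwise independence of a random linear $T$ to show that the image covers a fixed target with probability at least $1-\varepsilon$. The difficulty is to choose $r$ versus $d$ so that the volume bound for the deterministic code beats the target-space size $q^{m}$ with $m=O(k\log n)$, while keeping $d/r>\gamma_0$. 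We first try the regime $r=\Theta(k)$ with BCH codes of codimension about $(d/2)\log_q N$, and $N=n^{C}$ for a large constant $C$. Once this works, the NCP witness lifts to a codeword of weight at most $\ell k + r$, whereas NO instances have distance greater than $\gamma_0(\ell k + r)$. Composing with deterministic tensoring gives the theorem with one-sided error.
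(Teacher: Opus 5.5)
There is a genuine gap, and it is on the NO side, not in the YES-side counting you flagged.

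\textbf{The NO-side bound.} For $x=0$ you bound the weight by $\min(\ell\cdot d(\C),d(\lat))$ and say the homogeneous parts ``must have large distance''. But $d(\C)$ is the minimum distance of the \emph{input} NCP code. The reduction does not control it, and it can be $1$. The $\W[1]$-hardness of $\GapNCP_q$ comes with no such promise; if it did, homogenizing with $[\mathbf G\mid -\vec t]$ would essentially already give MDP hardness. Repeating the NCP block $\ell$ times does not help. Your YES witness then has weight $\ell k+r$, while a weight-$1$ codeword of $\C$, paired with a zero dense block, gives a nonzero output codeword of weight only $\ell$.

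These short codewords $\mathbf G\vec x$ (with $x=0$ and zero dense block) are exactly the ``annoying'' vectors, and they are the real source of BCGR's two-sided error. The dense code is not. The BCGR locally dense code already satisfies $\lambda(\C(\mathbf A))>d$ with probability $1$, and its randomness only affects density. What is randomized on the NO side is the sparsification, which must kill every annoying vector and does so only with high probability. So derandomizing the dense code buys nothing. Your extra condition on $\lat-x\vec s$ is also unnecessary: the case $x\neq 0$ is handled by the NCP promise on the first block alone, since $\operatorname{dist}(\C,x\vec t)=\operatorname{dist}(\C,\vec t)$.

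\textbf{The link through $T$.} The vague link does not rescue the argument under either reading.
\begin{itemize}
\item If $T$ is an Ajtai--Micciancio coefficient map, the case $x=0$ is fine, but the YES side fails in FPT. A radius-$O(k)$ ball contains at most $(qN)^{O(k)}$ points with $N\le f(k)\,\mathrm{poly}(n)$, far too few to reach an arbitrary witness in $\F_q^n$.
\item If $T$ links to an $O(k\log n)$-dimensional syndrome, as your YES paragraph suggests, it only helps if the NCP-side map is injective on short vectors and a violated link costs more than the threshold. You provide neither.
\end{itemize}

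\textbf{The missing idea (the paper's construction).} Take a deterministic BCH parity-check matrix $\mathbf P$ of designed distance $d+1$, where $d=4qk$. Then $\mathbf P\vec u\neq\vec 0$ for every nonzero $\vec u$ with $\|\vec u\|_0\le d$, and the codimension is $h=O(k\log m)$. Sample a uniformly random $\mathbf R$ acting only on the dense block. Let $\mathbf E$ repeat each syndrome coordinate $d+1$ times. Output the code of all $(\vec u,\vec w,\mathbf E(\mathbf P\vec u+\mathbf R\vec w))$, where $\vec u=\mathbf G\vec x-\beta\vec t$ and $\vec w=\mathbf A\vec y-\beta\vec s$, with parameter $k'=k+\alpha d$.

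On NO inputs the following holds for every choice of $\mathbf R$:
\begin{itemize}
\item if $\beta\neq0$, the NCP promise gives $\|\vec u\|_0>d$;
\item if $\beta=0$ and $\vec w\neq\vec 0$, the dense-code distance gives $\|\vec w\|_0>d$;
\item if $\beta=0$, $\vec w=\vec 0$, and $\vec u\neq\vec 0$ with $\|\vec u\|_0\le d$, then $\mathbf P\vec u\neq\vec 0$, so the last block has weight at least $d+1$.
\end{itemize}

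On YES inputs your pairwise-independence count is exactly what is needed, with target $-\mathbf P\vec u'\in\F_q^h$. The bound $q^h\le (q(4q+3)m)^d$ is beaten by at least $(qm)^{2d}/(200q)$ pairwise linearly independent ball vectors once $m$ exceeds a constant depending on $q$. No $\ell$-fold repetition and no derandomized dense code are needed. Your tensoring step then matches the paper's.
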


Our second theorem proves the corresponding statement for \(\gamma\)-\(\GapSVP_p\) in the range \(\gamma<2^{1/p}\).

\begin{theorem}
\label{thm:SVP-is-W[1]hard}
For any fixed \(p \in [1,\infty)\) and constant \(\gamma \in [1,2^{1/p})\), \(\gamma\)-\(\GapSVP_p\) is \(\W[1]\)-hard under randomized \(\FPT\) many-one reductions with one-sided error.
\end{theorem}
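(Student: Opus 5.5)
The reduction will go through the inhomogeneous problem and the Bennett--Peikert locally-dense-lattice template, adapted to the parameterized setting. I start from parameterized hardness of an exact or gap version of the Closest Vector Problem with parameter \(k\). Concretely, I take \(\gamma'\)-\(\GapCVP_p\) instances in which the YES case has a \(\{0,1\}\)-combination \(\vec{x}\) of Hamming weight exactly \(k\) with \(B\vec{x}=\vec{t}\), and the NO case has every integer combination satisfying \(\|B\vec{x}-\vec{t}\|_p^p > \gamma'^p k\). This follows from \(\W[1]\)-hardness of gap NCP/CVP with a weight-\(k\) witness, derivable deterministically from \(k\)-Clique via the BBE+21 and BCGR23 chains, whose CVP part is deterministic.

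To homogenize, I use a \emph{locally dense} gadget \((A,\vec{s},r)\). Here \(A\) generates a lattice (built deterministically from a code) with minimum distance \(\lambda_1(\lat(A))^p > r^p\cdot 2/\gamma''^p\), and \(\vec{s}\) is a target for which many lattice vectors \(A\vec{z}\) satisfy \(\|A\vec{z}-\vec{s}\|_p^p \le r^p\). The key point for one-sided error is that the only randomness is a random linear map \(T\) (over \(\F_q\), lifted to the integers) that should project the set of close vectors onto all of \(\{0,1\}^{?}\) or onto the needed coordinate space. The NO-case analysis, by contrast, holds for \emph{every} choice of \(T\), because it uses only the minimum distance of \(\lat(A)\) and the CVP NO-guarantee.

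The embedding is the standard Micciancio/Bennett--Peikert basis
\[
\begin{pmatrix} \alpha B & \alpha \vec{t} \\ A & \vec{s}\\ \end{pmatrix},
\]
combined with the constraint coming from \(T\) through a factor-\(\beta\) block that forces \(T\vec{z}=\vec{x}\). The NO-case argument is a case split on any nonzero vector \((\vec{x},\vec{z},c)\). If \(c=0\), the \(A\)-part has norm at least \(\lambda_1\). If \(c\neq 0\), the \(B\)-part contributes more than \(\gamma'^p k\alpha^p\). In both cases, because \(\ell_p^p\) adds across the blocks, the total \(p\)-th power exceeds \(\gamma^p\) times the YES bound \(\alpha^p k + r^p\), provided \(\gamma^p<2\). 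This is exactly where the \(2^{1/p}\) barrier comes from: the gadget has \(\lambda_1^p\approx 2 r^p\) in the Bennett--Peikert construction, using Reed--Solomon or BCH lattices with \(\lambda_1\) relative to the local density.

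The YES case needs, with high probability over \(T\), a \(\vec{z}\) with \(\|A\vec{z}-\vec{s}\|_p\le r\) and \(T\vec{z}\equiv\vec{x}\). I get this from a leftover-hash or pairwise-independence argument: the number of close vectors is at least \(q^{\,m}\cdot\mathrm{poly}\) for the range dimension \(m\), so a random linear map is surjective onto the needed set except with small probability. The parameters depend only on \(k\), which keeps the reduction FPT.

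The main obstacle is building the gadget with lattice dimension and number of close vectors bounded by \(f(k)\cdot\mathrm{poly}(n)\) while keeping \(\lambda_1^p/r^p\) close to \(2\). For this I plan to use the BCH/Reed--Solomon construction of Bennett--Peikert with the code dimension scaled in \(k\), and to count close vectors via the weight distribution of the code's cosets. For \(p=1\) this requires care, which I would handle with the \(\F_q\)-based Construction-A lattice and counting weight-\(h\) codewords. If direct counting fails, the first fallback is the BCGR23 gadget with the two-sided step replaced by the above surjectivity argument.
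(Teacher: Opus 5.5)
Your plan has a genuine gap exactly where you write $\{0,1\}^{?}$. The random constraint $T\vec z=\vec x$ ties the gadget to the whole coefficient vector $\vec x$ of the CVP instance. So the range of $T$ has dimension $n$, the number of columns of $B$, which no function of $k$ bounds.

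For uniform $T$ and any fixed $\vec w\not\equiv 0$, the image $T\vec w$ is uniform on $\F_q^n$. A union bound therefore shows you need roughly $q^n$ close vectors just to hit the single witness $\vec x$ with constant probability.

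That many close vectors is impossible here. The output is an integer basis, so the gadget blocks $A\vec z-\vec s$ you are counting are integer vectors of $\ell_p$ norm at most the output threshold $d_{\mathrm{out}}$. Each therefore has support at most $d_{\mathrm{out}}^p$ and entries at most $d_{\mathrm{out}}$. In dimension $m'\le f(k)\,\mathrm{poly}(n)$ there are at most $(1+2d_{\mathrm{out}}m')^{d_{\mathrm{out}}^p}$ such vectors. Requiring this to exceed $2^n$ forces $d_{\mathrm{out}}$ to grow with $n$. The output parameter is then not bounded by $g(k)$, the reduction is not FPT, and your claim that ``the parameters depend only on $k$'' is false. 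This is the obstruction the paper names for the Ajtai--Micciancio route: a YES instance of the known $\W[1]$-hard $\GapCVP_p$ gives no witness compatible with a fixed map $T$.

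The missing idea is a deterministic compression applied to something guaranteed to be short in any short output vector. The paper keeps Khot's intermediate lattice, with independent blocks $\vec u=B\vec x-a\vec t$ and $\vec w=A\vec y-a\vec s$ and a coordinate $a$. It then appends the checking block $\beta(\widehat P\vec u+\widehat R\vec w+q\vec z)$. Here $P$ is a Reed--Solomon parity-check matrix with only $r=\min\{m,\lceil(\gamma d)^p\rceil\}$ rows, over a prime $q>2\max\{m,\gamma d\}$.

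Every nonzero residual $\vec u$ with $\|\vec u\|_p\le\gamma d$ has support at most $(\gamma d)^p$ and entries below $q/2$, hence a nonzero syndrome. The random $R$ therefore only has to hit one fixed syndrome in $\F_q^r$. That needs about $q^{r+1}$ close vectors, which the locally dense lattice of \cite{conf/stoc/BennettCGR23} supplies.

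Soundness then holds for every $R$. A short vector must have $a=0$, then $\vec w=0$ by the minimum distance of $\lat(A)$. After that, either $\vec u\neq 0$, giving a nonzero syndrome and cost $\beta$, or $\vec z\neq 0$, giving cost $\beta q$.

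Two further problems remain in your NO case and your source problem:
\begin{itemize}
\item Your step ``$c\neq 0$ implies the $B$-part is large'' needs $\operatorname{dist}_p(\lat(B),c\vec t)>\gamma d$ for \emph{every} nonzero integer $c$. Your source problem states this only for $c=1$. The paper uses the variant of $\GapCVP_p$ with exactly this strengthened promise.
\item Your source problem, with a binary weight-exactly-$k$ witness satisfying $B\vec x=\vec t$, is not what \cite{journals/BhattacharyyaBE+21,conf/stoc/BennettCGR23} prove, and you give no reduction to it.
\end{itemize}
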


In the classical setting, Micciancio~\cite{journals/toc/Micciancio12} viewed his randomized reduction with one-sided error for the \(\NP\)-hardness of \(\gamma\)-\(\GapSVP_2\) as a \emph{partial derandomization}.
Following this terminology, \Cref{thm:MDP-is-W[1]hard,thm:SVP-is-W[1]hard} can be viewed as a partial derandomization of the randomized parameterized hardness reductions of MDP and SVP.

The significance of one-sidedness is that it provides a route to removing the remaining randomness under standard hardness-vs-randomness assumptions.
Our second contribution is to show that, for MDP and SVP, the randomized hardness results with one-sided error above can be converted into deterministic hardness results under a standard circuit lower-bound assumption.
We assume that there exists an \(\varepsilon>0\) such that
\begin{align*}
    \E := \DTIME(2^{O(n)}) \nsubseteq i.o.\text{-}\mathrm{NSIZE}(2^{\varepsilon n}).
\end{align*}
Here, \(\mathrm{NSIZE}(s(n))\) denotes the class of languages decidable by nondeterministic circuits of size \(O(s(n))\), and \(\mathrm{i.o.}\) stands for ``infinitely often''.
\begin{theorem}
\label{thm:MDP-is-W[1]hard-conditional}
Assume that there exists an \(\varepsilon > 0\) such that
\begin{align*}
    \E \nsubseteq i.o.\text{-}\mathrm{NSIZE}(2^{\varepsilon n}).
\end{align*}
Then, for every fixed prime power \(q \ge 2\) and constant \(\gamma \ge 1\), \(\gamma\)-\(\GapMDP_q\) is \(\W[1]\)-hard under deterministic \(\FPT\) many-one reductions.
\end{theorem}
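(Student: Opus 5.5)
The plan is to combine \Cref{thm:MDP-is-W[1]hard} with the assumption in a hardness-vs-randomness step, and to use an OR function for $\GapMDP_q$ to turn the one-sided error into no error.

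\textbf{Step 1: the one-sided reduction.} By \Cref{thm:MDP-is-W[1]hard} there is a randomized FPT reduction $R(x,r)$ from a $\W[1]$-complete problem (say $k$-Clique) to $\gamma$-$\GapMDP_q$ with parameter $k'=g(k)$. If $x$ is a NO instance, then $R(x,r)$ is a NO instance for every $r$. If $x$ is a YES instance, then $R(x,r)$ is a YES instance for at least half of the strings $r$. The random string has length $m=f(k)\,\mathrm{poly}(n)$.

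\textbf{Step 2: an OR function.} Next I would build a deterministic map $\OR$ that combines instances $(C_1,k'),\dots,(C_t,k')$ of $\gamma$-$\GapMDP_q$ into one instance $(C,k')$. The output should be a YES instance if some $C_i$ is YES and a NO instance if all $C_i$ are NO. The natural choice is the direct sum $C=C_1\oplus\cdots\oplus C_t$, padded to a common block length. A nonzero codeword of the direct sum restricts to a nonzero codeword on some coordinate block, so its weight is at least the minimum of the individual minimum distances. Conversely, each $C_i$ embeds into $C$ with the same weights. Therefore $d(C)=\min_i d(C_i)$, and the gap $\gamma$ and the parameter $k'$ are preserved exactly. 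This runs in time polynomial in the total size.

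\textbf{Step 3: derandomization.} The deterministic reduction outputs
\[
x\mapsto \OR\bigl(R(x,G(s_1)),\dots,R(x,G(s_t))\bigr),
\]
where $G$ ranges over all seeds of a pseudorandom generator that fools nondeterministic circuits (or co-nondeterministic circuits; the relevant test is a co-NP-type predicate). If $x$ is a NO instance, every component is a NO instance for every $r$, so the OR is a NO instance regardless of what the generator outputs. If $x$ is a YES instance, I need some seed $s$ for which $R(x,G(s))$ is a YES instance. Consider the circuit $D_x(r)$ that accepts iff $R(x,r)$ is a YES instance of $\GapMDP_q$, that is, iff there exists a nonzero codeword of weight at most $k'$. $D_x$ is a nondeterministic circuit of size $\mathrm{poly}(f(k)n)$: it guesses the message vector and checks the weight. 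Since $D_x$ accepts at least half of its inputs, a generator that $1/4$-fools size-$\mathrm{poly}(m)$ nondeterministic circuits must output some accepted $r$. Under $\E\nsubseteq$ i.o.-$\mathrm{NSIZE}(2^{\varepsilon n})$, the known constructions (Shaltiel–Umans, Miltersen–Vinodchandran, Klivans–van Melkebeek) give such a generator with seed length $O(\log m)$ computable in time $\mathrm{poly}(m)$. Thus $t=\mathrm{poly}(m)$ and the whole reduction runs in FPT time with output parameter $k'$.

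\textbf{Main obstacle.} The hardest step is making the test precise. "YES instance" of a promise problem is not a nondeterministic predicate in general, so I would define $D_x$ as the NP predicate $d(R(x,r))\le k'$ rather than as the promise condition. In the YES case this predicate holds with probability $\ge 1/2$, and the argument only uses the YES side. The NO side is handled entirely by one-sidedness together with the OR function, and the OR construction itself is routine.
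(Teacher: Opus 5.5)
Your proof is correct and follows essentially the paper's route: a one-sided randomized reduction, a logarithmic-seed PRG from the assumption that must hit a successful random tape because the success test is an $\NP$ predicate (the paper's PRG fools nonadaptive $\SAT$-oracle circuits, which subsume your nondeterministic test), and an OR function so that one-sidedness alone handles NO inputs. The only real difference is that your direct-sum OR needs all combined instances to share one parameter. That holds here only because in the concrete reduction the output parameter ($((4q-1)k)^c$, or a fixed instance in the deterministic preprocessing branches) depends on the input and not on the random tape, whereas the paper's FPT OR function equalizes differing thresholds by repeating rows up to $\operatorname{lcm}(1,\ldots,\kappa)$; you should also pad the random tape so the test circuit has size polynomial in the PRG output length, as the paper does.
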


\begin{theorem}
\label{thm:SVP-is-W[1]hard-conditional}
Assume that there exists an \(\varepsilon > 0\) such that
\begin{align*}
    \E \nsubseteq i.o.\text{-}\mathrm{NSIZE}(2^{\varepsilon n}).
\end{align*}
For every fixed integer $p \ge 1$ and constant \(\gamma \in [1,2^{1/p})\), \(\gamma\)-\(\GapSVP_p\) is \(\W[1]\)-hard under deterministic \(\FPT\) many-one reductions.%
\footnote{
More generally, the result holds for every fixed \(p\ge 1\) such that the language of pairs \((\mathbf{B},d)\), where \(\mathbf{B}\in\mathbb{Z}^{m\times n}\) is a lattice basis, \(d\in\mathbb{Z}_{>0}\), and there exists \(\vec{z}\in\mathbb{Z}^n\setminus\{\vec{0}\}\) with \(\|\mathbf{B}\vec{z}\|_p\le d\), belongs to \(\NP\).
}
\end{theorem}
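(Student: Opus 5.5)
Assuming the circuit lower bound, the plan is to derandomize the one-sided-error reduction of \Cref{thm:SVP-is-W[1]hard} by OR-ing over all seeds of a pseudorandom generator that fools nondeterministic circuits.

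\textbf{Step 1 (generator).} The hypothesis $\E \nsubseteq \mathrm{i.o.}\text{-}\mathrm{NSIZE}(2^{\varepsilon n})$ gives, by the standard hardness-vs-randomness results (Klivans--van Melkebeek, Shaltiel--Umans, Miltersen--Vinodchandran), a generator $G\colon\{0,1\}^{O(\log s)}\to\{0,1\}^{s}$. It is computable in time $\mathrm{poly}(s)$ and $(1/10)$-fools every nondeterministic circuit of size $s$. Any Merlin--Arthur-type error predicate can therefore be run on pseudorandom strings, leaving only polynomially many seeds.

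\textbf{Step 2 (the randomized reduction as a nondeterministic test).} Let $R(x,r)$ be the reduction of \Cref{thm:SVP-is-W[1]hard}, taken from a $\W[1]$-hard problem such as $k$-Clique or a suitable gap-CVP/NCP variant. It runs in FPT time $f(k)\mathrm{poly}(n)$ and uses $r$ of length $\mathrm{poly}(f(k) n)$.
\begin{itemize}[label={}]
\item For YES inputs $x$, $\Pr_r[R(x,r)\text{ is YES}]\ge 2/3$.
\item For NO inputs $x$, $R(x,r)$ is NO for every $r$.
\end{itemize}
The event ``$R(x,r)$ is a YES instance'' means there is a nonzero $\vec{z}$ with $\|\mathbf{B}\vec{z}\|_p\le d$. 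Because $p$ is a fixed integer, $\|\mathbf{B}\vec z\|_p^p\le d^p$ is an integer inequality. We may also restrict to coefficient vectors $\vec z$ of polynomial bit length, which is standard for SVP. This is exactly the NP-membership condition from the footnote. Hence the event is decided by a nondeterministic circuit $C_x(r)$ of size $\mathrm{poly}(f(k)n)$ with $x$ hardwired.

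We then apply $G$ with $s=\mathrm{poly}(f(k)n)$. Its seed length is $O(\log f(k)+\log n)$, so there are at most $\mathrm{poly}(f(k)n)$ seeds, an FPT number. Fooling gives
\[
\Pr_{\sigma}\big[C_x(G(\sigma))=1\big]\ge 2/3-1/10>0
\]
for YES inputs. Thus some seed yields a YES instance, and for NO inputs every seed yields a NO instance.

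\textbf{Step 3 (OR function).} The deterministic reduction outputs $\OR(R(x,G(\sigma_1)),\dots,R(x,G(\sigma_N)))$. For $\gamma$-$\GapSVP_p$ this OR must satisfy the following.
\begin{itemize}[label={}]
\item If some instance has a nonzero vector of norm at most $d$, the output does too.
\item If all instances have minimum greater than $\gamma d$, so does the output.
\item The parameter grows by at most a function of $k$.
\end{itemize}
First rescale all instances to a common $d$, which the reduction provides uniformly. The natural choice is then the direct sum $\mathbf{B}_1\oplus\cdots\oplus\mathbf{B}_N$, i.e. the block-diagonal basis. A nonzero vector of the direct sum has some nonzero block, so its $\ell_p$ norm is at least that block's minimum. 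It is therefore greater than $\gamma d$ when every instance is NO. Conversely, a short vector in one block, padded with zeros, remains short. The threshold $d$ is unchanged and the size is polynomial, so the composition is an FPT many-one reduction.

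\textbf{Main obstacle.} The delicate step is Step 2: showing that the YES-event is computable by nondeterministic circuits of size polynomial in the FPT running time. This requires bounding the witness coefficients and using integrality of $p$ to verify $\|\mathbf{B}\vec z\|_p\le d$ exactly. Here is why the one-sided error matters. The NO side needs no fooling at all, and the YES side's ``exists a short vector'' is a $\Sigma_1$ predicate. That is exactly what generators fooling nondeterministic circuits handle, whereas two-sided error would require fooling a coNP-type predicate as well.
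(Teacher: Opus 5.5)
Your proposal is correct and follows essentially the same route as the paper. The shared ingredients are: a logarithmic-seed PRG obtained from the circuit lower bound; a success predicate ``$R(x;r)$ is a YES instance'' that lies in NP because $p$ is a fixed integer, and is therefore recognized by small nondeterministic (equivalently, one-query nonadaptive SAT-oracle) circuits; one-sidedness to handle NO inputs; and a block-diagonal OR over the outputs for all seeds.

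The differences are inessential. The paper states the PRG for nonadaptive SAT-oracle circuits and builds a general FPT OR function that rescales thresholds to $\operatorname{lcm}(1,\ldots,\kappa)$. You instead use that the output threshold $d'$ of the concrete reduction does not depend on the random tape, so no rescaling is needed. You should still, as the paper does, enumerate all random tapes on the finitely many small inputs where the PRG is not guaranteed to exist.
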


Finally, the conditional derandomization principle also applies to the classical, non-parameterized setting.
Starting from Micciancio's randomized hardness result with one-sided error for \(\gamma\)-\(\GapSVP_2\) \cite{journals/toc/Micciancio12}, we obtain a deterministic hardness result under the same circuit lower-bound assumption.

\begin{theorem}
\label{thm:SVP-is-NPhard-conditional}
Assume that there exists an \(\varepsilon > 0\) such that
\begin{align*}
    \E \nsubseteq i.o.\text{-}\mathrm{NSIZE}(2^{\varepsilon n}).
\end{align*}
Then, for every fixed constant \(\gamma \ge 1\), \(\gamma\)-\(\GapSVP_2\) is \(\NP\)-hard under deterministic polynomial-time many-one reductions.
\end{theorem}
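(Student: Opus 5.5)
The plan is to combine Micciancio's one-sided-error reduction \cite{journals/toc/Micciancio12} with a conditional derandomization step, following the same template as \Cref{thm:MDP-is-W[1]hard-conditional,thm:SVP-is-W[1]hard-conditional}. Micciancio's reduction, for every constant $\gamma\ge 1$, maps an $\NP$-complete problem (e.g.\ an instance $x$ of SAT, or a gap version of CVP) to a $\gamma$-$\GapSVP_2$ instance $R(x,r)$ using $\mathrm{poly}(|x|)$ random bits $r$. It has one-sided error: if $x$ is a YES instance then $R(x,r)$ is a YES instance with probability at least $1/2$ over $r$, while if $x$ is a NO instance then $R(x,r)$ is a NO instance for \emph{every} $r$.

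\textbf{Step 1: prefer a small disjunction.} Take a pseudorandom generator $G\colon\{0,1\}^{O(\log n)}\to\{0,1\}^{\mathrm{poly}(n)}$ that fools nondeterministic circuits of size $\mathrm{poly}(n)$. Under the assumption $\E\nsubseteq \mathrm{i.o.}\text{-}\mathrm{NSIZE}(2^{\varepsilon n})$, such a generator exists and is computable in polynomial time; this is the Klivans--van Melkebeek / Miltersen--Vinodchandran / Shaltiel--Umans line of hitting-set generators against nondeterministic circuits. The test we must fool is $D_x(r)=[R(x,r)\in \text{YES}]$. Note that $\gamma$-$\GapSVP_2$ YES-membership (``there is a nonzero $\vec z$ with $\|\mathbf B\vec z\|_2\le d$'') is in $\NP$, so $D_x$ is computable by a polynomial-size nondeterministic circuit with $x$ hardwired. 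Since $D_x$ has acceptance probability at least $1/2$ for YES instances, some seed $s$ gives $D_x(G(s))=1$; a hitting set suffices, so only the YES side needs to be fooled. For NO instances, every $R(x,G(s))$ is a NO instance by one-sidedness. Hence $x\in\text{YES}$ if and only if $\bigvee_s [R(x,G(s))\in\text{YES}]$, which is a polynomial-size disjunction of $\gamma$-$\GapSVP_2$ instances, with promise preserved.

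\textbf{Step 2: an OR function for $\gamma$-$\GapSVP_2$.} We need a deterministic map sending instances $(\mathbf B_1,d_1),\dots,(\mathbf B_m,d_m)$ to one instance that is YES if some $(\mathbf B_i,d_i)$ is YES and NO if all are NO. First rescale so that all $d_i$ equal a common $d$: multiply $\mathbf B_i$ by $\prod_{j\ne i} d_j$, which keeps entries integral and the size polynomial. Then take the block-diagonal direct sum $\mathbf B=\bigoplus_i \mathbf B_i$. A nonzero vector of the sum has some nonzero block $\vec v_i$, and its norm is at least $\|\vec v_i\|_2$. So if every $\lambda_1(\mathbf B_i)>\gamma d$, then $\lambda_1(\mathbf B)>\gamma d$; conversely a short vector in one block, padded with zeros, is short in $\mathbf B$. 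This OR is easy for $\ell_p$ and is exactly why the homogeneous problem works.

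\textbf{Composing and the main obstacle.} The final reduction is $x\mapsto \OR\bigl(R(x,G(s))\bigr)_{s}$, which is deterministic and polynomial-time. I expect the main obstacle to be Step 1. One must check that the generator from the assumption really fools \emph{nondeterministic} tests of the needed polynomial size, with seed length $O(\log n)$ so that the enumeration over seeds is polynomial. One must also verify that Micciancio's YES condition, which involves the randomized sublattice/coset step, is witnessed by an $\NP$ predicate on the output instance, and that the gap statement holds for all constants $\gamma$ after his tensoring amplification. Tensoring is deterministic, so it can be applied after the OR, or before it if convenient.
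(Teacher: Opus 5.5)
Your proposal is correct and is essentially the paper's proof. It has the same ingredients: Micciancio's one-sided reduction (which already gives every constant $\gamma$); a logarithmic-seed generator from the assumption that hits the success predicate $r\mapsto[R(x;r)\in\text{YES}]$, an $\NP$ predicate that the paper phrases as a single nonadaptive $\SAT$-oracle query and handles via \Cref{thm:prg-under-assumption}; one-sidedness on NO inputs; and the block-diagonal OR function with $\mathbf{B}_i$ scaled by $\prod_{j\ne i}d_j$, exactly as in \Cref{lem:gapsvp-poly-or-function}.

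One caution: drop the aside that tensoring could be applied after the OR. In $\ell_2$ the minimum distance is not multiplicative under tensor products of arbitrary lattices, so the NO case would not be preserved, and no tensoring is needed outside Micciancio's black box anyway.
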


Compared with Wan's unconditional deterministic polynomial-time many-one hardness result for \(1 \le \gamma < \sqrt{2}\)~\cite{wan2026}, \Cref{thm:SVP-is-NPhard-conditional} extends the approximation range to every constant \(\gamma \ge 1\), at the cost of the circuit lower-bound assumption above.

Under the assumption $\E \nsubseteq i.o.\text{-}\mathrm{NSIZE}(2^{\varepsilon n})$, we show in \Cref{sec:derandomize} that suitable one-sided-error randomized reductions to promise problems with OR functions can be made deterministic.
An OR function combines a finite list of instances into a single instance that is YES if at least one input is YES, and is NO if all inputs are NO.
In the parameterized setting, we additionally require the output parameter to be bounded by a computable function of the largest input parameter.

To apply this framework to MDP and SVP, we show that the relevant gap problems have the required OR functions.

\begin{proposition}[Informal; see \Cref{lem:gapmdp-fpt-or-function,lem:gapsvp-poly-or-function,lem:gapsvp-fpt-or-function}]
\label{prop:OR-functions-intr}
Let \(\gamma\ge 1\) be a constant.
The following hold.
\begin{enumerate}
    \item For every fixed prime power \(q\ge 2\), \(\gamma\)-\(\GapMDP_q\), parameterized by the distance threshold, has an FPT OR function.

    \item For every fixed \(p\ge 1\), \(\gamma\)-\(\GapSVP_p\) has a polynomial-time OR function.
    Moreover, when parameterized by the distance threshold, \(\gamma\)-\(\GapSVP_p\) has an FPT OR function.
\end{enumerate}
\end{proposition}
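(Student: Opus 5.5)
The plan is to build all three OR functions from two operations: \emph{rescaling}, which makes every instance share the same threshold, and a \emph{direct sum}, which takes the disjunction. The point is that both the minimum distance of a code and the minimum norm of a lattice behave well under these two operations. If the input list is empty, we output a fixed NO instance. Otherwise, from the list we drop any instance that is trivially decidable, for example an instance whose threshold is at least the block length; we decide such instances directly, and if one of them is YES we output a fixed YES instance.

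\textbf{Step 1: direct sums compute the minimum.}
For codes \(\C_1,\dots,\C_t\subseteq\F_q^{n_i}\) given by generator matrices \(G_i\), let \(\C=\C_1\oplus\cdots\oplus\C_t\) be the code with block-diagonal generator matrix \(\mathrm{diag}(G_1,\dots,G_t)\).
\begin{itemize}
  \item A nonzero codeword \((c_1,\dots,c_t)\) has some \(c_i\neq 0\), and its weight is \(\sum_j \mathrm{wt}(c_j)\ge \mathrm{wt}(c_i)\).
  \item Conversely, padding any codeword of \(\C_i\) with zeros is a codeword of \(\C\) of the same weight.
\end{itemize}
Hence \(d(\C)=\min_i d(\C_i)\). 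The identical argument with block-diagonal bases \(\mathrm{diag}(\mathbf{B}_1,\dots,\mathbf{B}_t)\) gives \(\lambda_1^{(p)}(\lat)=\min_i \lambda_1^{(p)}(\lat_i)\). This works in every \(\ell_p\) norm, because \(\|(v_1,\dots,v_t)\|_p\ge\|v_i\|_p\).

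Consequently, if all instances share a common threshold \(k\) (respectively \(d\)), the direct sum with threshold \(k\) is an OR function:
\begin{itemize}
  \item if some instance is YES, the sum has a nonzero vector within the threshold;
  \item if all instances are NO, every nonzero vector of the sum exceeds \(\gamma k\) (respectively \(\gamma d\)).
\end{itemize}

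\textbf{Step 2: equalizing thresholds.}
For a code, replacing each coordinate by \(r\) copies (a generator matrix with columns repeated \(r\) times) multiplies every weight by exactly \(r\). So the instance \((\C_i,k_i)\) becomes an equivalent instance with threshold \(rk_i\), with the gap \(\gamma\) preserved. For lattices, multiplying the basis \(\mathbf{B}_i\) by a positive integer \(r\) multiplies every \(\ell_p\) norm by \(r\).
\begin{itemize}
  \item \emph{FPT cases (MDP and parameterized SVP).} Let \(k^*=\max_i k_i\) and \(K=\mathrm{lcm}(1,\dots,k^*)\le k^*!\). Scale instance \(i\) by \(K/k_i\), take the direct sum, and output threshold \(K\). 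The output parameter \(K\) is a computable function of \(k^*\), and the size grows by at most a factor of \(k^*!\), so the map runs in FPT time.
  \item \emph{Polynomial-time SVP case.} The thresholds \(d_i\) are arbitrary positive integers. Set \(D=\prod_i d_i\) and scale \(\mathbf{B}_i\) by the integer \(D/d_i\). The bit length of \(D\) is \(\sum_i\log d_i\), which is polynomial in the total input length, so every entry stays of polynomial size. The output threshold is \(D\).
\end{itemize}
If thresholds are allowed to be rational, first clear denominators in the same way.

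\textbf{Main obstacle.}
The step needing the most care is the threshold equalization. Two things must be checked:
\begin{itemize}
  \item the scaling preserves the gap exactly, with no additive slack (true, since both scalings are exactly multiplicative);
  \item in the parameterized setting, the new parameter is bounded by a function of the largest input parameter alone, not of the number of instances \(t\).
\end{itemize}
The lcm choice handles the second point, since \(K\) depends only on \(k^*\). Summing the instances, rather than tensoring them, keeps the threshold unchanged. With these checks in place, the three items of the proposition follow.
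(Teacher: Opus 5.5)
Your proposal is correct and follows essentially the same route as the paper. A block-diagonal direct sum realizes the minimum of the minimum distances (resp.\ shortest-vector lengths), after the thresholds are equalized by coordinate repetition for codes or integer scaling for lattices, using $K=\operatorname{lcm}(1,\ldots,\kappa)$ in the FPT cases and $D=\prod_i d_i$ in the polynomial-time SVP case.

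The remaining differences are cosmetic. Deciding and discarding trivially decidable instances is unnecessary; the paper instead discards syntactically invalid inputs, which you should also do, since an OR function must handle arbitrary strings. Also, under the paper's convention $\mathcal{C}(\mathbf{G})=\{\mathbf{G}\vec{x}\}$, it is the rows of $\mathbf{G}$ that must be repeated, not the columns.
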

\subsection{Technical Overview}

Our reductions follow the same high-level route as those of Bennett, Cheraghchi, Guruswami, and Ribeiro \cite{conf/stoc/BennettCGR23}.
We start from the inhomogeneous problems NCP and CVP, whose parameterized hardness is known \cite{journals/BhattacharyyaBE+21}, and reduce them to the corresponding homogeneous problems MDP and SVP, respectively.
Here, NCP asks whether a target vector is close to some codeword of a given linear code, while CVP asks whether a target vector is close to some vector of a given lattice.

\paragraph*{The Ajtai--Micciancio approach and Khot's approach.}
We first recall two approaches for converting inhomogeneous problems on codes and lattices into homogeneous ones.
Let \((\mathbf{B},\vec{t})\) denote the inhomogeneous input instance, where \(\mathbf{B}\) is either the generator matrix of an NCP instance or the lattice basis of a CVP instance, and let \((\mathbf{A},\vec{s})\) denote a locally dense code or lattice.
Informally, \((\mathbf{A},\vec{s})\) has two useful properties: every nonzero vector generated by \(\mathbf{A}\) is long, while many shifted vectors of the form \(\mathbf{A}\vec{z}-\vec{s}\) are short.
Thus, in a YES instance, one can combine a short residual \(\mathbf{B}\vec{x}-\vec{t}\) with a nearby shifted vector, whereas in a NO instance, the large minimum distance of the locally dense object helps rule out unintended short vectors.

The Ajtai--Micciancio approach \cite{conf/stoc/Ajtai98,journals/sicomp/Micciancio01,journals/toc/Micciancio12,conf/ccc/BennettP20,wan2026} uses an additional map \(\mathbf{T}\) from coefficient vectors in the locally dense gadget to candidate solutions of the inhomogeneous instance.
In the simplified form relevant here, many binary vectors can be represented as
\begin{align*}
    \vec{x}=\mathbf{T}\vec{z}
    \quad\text{with}\quad
    \|\mathbf{A}\vec{z}-\vec{s}\|\le \alpha \ell .
\end{align*}
Given an NCP/CVP instance whose YES witness is a binary vector \(\vec{x}\in\{0,1\}^k\), one considers the homogeneous object generated by
\begin{align*}
    \begin{pmatrix}
        \mathbf{B}\mathbf{T} & -\vec{t} \\
        \mathbf{A}           & -\vec{s}
    \end{pmatrix}.
\end{align*}
If the input is a YES instance and \(\mathbf{T}\vec{z}=\vec{x}\), then setting the last coefficient to one gives the short vector
\begin{align*}
    (\mathbf{B}\vec{x}-\vec{t},\;
     \mathbf{A}\vec{z}-\vec{s}).
\end{align*}
The reduction may still be randomized because the locally dense gadget is sampled.
A good sample is needed for completeness, since it provides enough nearby shifted vectors.
In contrast, soundness follows from the minimum distance property and holds for every possible sample produced by the construction.
Hence, an unlucky sample can only destroy completeness; it cannot create a false YES instance from a NO input.
This gives one-sided error.

Although Wan~\cite{wan2026} recently made the locally dense lattice construction deterministic in the classical setting, the Ajtai--Micciancio approach still does not directly yield parameterized hardness.
The known \(\W[1]\)-hardness results for parameterized NCP and CVP are for general gap versions of these problems \cite{journals/BhattacharyyaBE+21}.
A YES instance only guarantees the existence of some coefficient vector that yields a codeword or lattice vector close to the target; it does not guarantee that this coefficient vector is binary or compatible with a fixed map \(\mathbf{T}\).
Moreover, an \(\FPT\) reduction must keep the output distance threshold bounded by a function of the input threshold alone.

The parameterized reductions of Bennett, Cheraghchi, Guruswami, and Ribeiro \cite{conf/stoc/BennettCGR23} instead follow Khot's \emph{sparsification} idea \cite{journals/JACM/Kho05}.
This approach is well suited to parameterized reductions because it does not require the YES witness to have a prescribed binary form, and it allows the output threshold to remain bounded in terms of the input threshold.

One first forms the intermediate homogeneous object generated by
\begin{align*}
    \mathbf{B}_{\mathrm{int}}
    =
    \begin{pmatrix}
        \mathbf{B} & \mathbf{0} & -\vec{t} \\
        \mathbf{0} & \mathbf{A} & -\vec{s}
    \end{pmatrix},
\end{align*}
where, for simplicity, we omit the extra row for the scalar coordinate that is often included in the lattice case.
If the input is a YES instance, then, by local density, this intermediate object contains many short vectors obtained by combining a short residual \(\mathbf{B}\vec{x}-\vec{t}\) with a nearby shifted vector \(\mathbf{A}\vec{y}-\vec{s}\).
However, the intermediate object may also contain unwanted short vectors, which are often called ``annoying'' vectors.
Sparsification is used to perform two tasks at once: it should preserve a good vector in YES instances, and eliminate all annoying vectors in NO instances.
In the coding case, this is done by intersecting the intermediate code with a random code.
In the lattice case, this is done by passing to a random sublattice, for example
\begin{align*}
    \mathcal{L}_{\mathrm{final}}
    =
    \{\vec{w}\in \mathcal{L}_{\mathrm{int}}:
      \langle \vec{v},\vec{w}\rangle=0 \pmod \rho\},
\end{align*}
where \(\mathcal{L}_{\mathrm{int}}\) denotes the lattice generated by the basis \(\mathbf{B}_{\mathrm{int}}\), and $\vec{v}$ is a random vector.
Thus, the resulting reductions have two-sided error: a bad random choice may fail either to preserve a good vector in a YES instance or to eliminate all annoying vectors in a NO instance.

\paragraph*{Our approach.}
The high-level idea is to keep Khot's intermediate homogeneous object, but to remove the annoying vectors deterministically instead of by random sparsification.
In the reductions of Bennett, Cheraghchi, Guruswami, and Ribeiro~\cite{conf/stoc/BennettCGR23}, the random sparsification both preserves a good vector in a YES instance and eliminates annoying vectors in a NO instance.
We keep randomness only for the first task.
For a NO instance, any candidate short vector is first forced into a very restricted form: the scalar coordinate must be zero, the locally dense component must vanish, and only a short nonzero first block can remain.
The deterministic checking block is designed to detect exactly this remaining case.
In the coding reduction, this is done by a BCH check followed by syndrome amplification, and in the lattice reduction it is done by a Reed--Solomon check modulo \(q\) followed by a large scaling factor.
Thus, every random choice is safe on NO inputs, while YES inputs still succeed with high probability.

For MDP, we start from an instance \((\mathbf{G},\vec{t},k)\) of NCP.
As in previous reductions, we use a locally dense code \((\mathbf{A},\vec{s})\) and a random linear map \(\mathbf{R}\).
In addition, we use a deterministic linear map \(\mathbf{P}\), obtained from a BCH parity-check matrix, that detects every nonzero vector of small Hamming weight.
We also use a linear map \(\mathbf{E}\) that amplifies nonzero syndromes so that
\begin{align*}
    \vec{c}\ne \vec{0}
    \quad\Longrightarrow\quad
    \|\mathbf{E}\vec{c}\|_0>d .
\end{align*}
The output code is generated by
\begin{align*}
    \mathbf{G}' =
    \begin{pmatrix}
        \mathbf{G}   & \mathbf{0}   & -\vec{t} \\
        \mathbf{0}   & \mathbf{A}   & -\vec{s} \\
        \mathbf{E}\mathbf{P}\mathbf{G}
        & \mathbf{E}\mathbf{R}\mathbf{A}
        & -\mathbf{E}(\mathbf{P}\vec{t}+\mathbf{R}\vec{s})
    \end{pmatrix}.
\end{align*}
Then the codeword generated by $\mathbf{G'}$ from the coefficient vector $(\vec{x}, \vec{y}, \beta)$ is  $\mathbf{G}'(\vec{x}, \vec{y}, \beta)^T=(\mathbf{G}\vec{x} - \beta\vec{t}, \mathbf{A}\vec{y}-\beta\vec{s}, \mathbf{E}(\mathbf{P}(\mathbf{G}\vec{x} - \beta\vec{t})+\mathbf{R}(\mathbf{A}\vec{y}-\beta\vec{s})))^T$.
The first two block rows form the usual intermediate object, and the last block row checks the syndrome \(\mathbf{P}\vec{u}+\mathbf{R}\vec{w}\), where \(\vec{u}=\mathbf{G}\vec{x}-\beta\vec{t}\) and \(\vec{w}=\mathbf{A}\vec{y}-\beta\vec{s}\).

In a YES instance, there is a short residual \(\vec{u}'=\mathbf{G}\vec{x}'-\vec{t}\).
The locally dense code supplies many nearby vectors \(\vec{w}=\mathbf{A}\vec{y}-\vec{s}\), and the random map \(\mathbf{R}\) makes it likely that one of them satisfies \(\mathbf{R}\vec{w}=-\mathbf{P}\vec{u}'\).
For such a vector, the checking block vanishes, and the output code contains a short nonzero codeword.

In a NO instance, the argument is independent of \(\mathbf{R}\).
If \(\beta\ne 0\), then the first block contradicts the NO promise of the NCP instance.
If \(\beta=0\) and the locally dense block is nonzero, then that block is a nonzero codeword of the locally dense code and is too heavy.
If the locally dense block vanishes, then every short nonzero first block is detected by \(\mathbf{P}\), and the amplified syndrome is too heavy.
Therefore, no short nonzero codeword exists in a NO instance, for every choice of \(\mathbf{R}\).

For SVP, the same idea is implemented modulo a prime \(q\).
We start from an instance \((\mathbf{B},\vec{t},d)\) of CVP.
We use a locally dense lattice \((\mathbf{A},\vec{s})\), a random linear map \(\mathbf{R}\) over \(\mathbb{F}_q\), and a deterministic linear map \(\mathbf{P}\) obtained from a Reed--Solomon parity-check matrix.
The matrix \(\mathbf{P}\) detects every nonzero vector with small support after reduction modulo \(q\), which is enough for soundness because every sufficiently short integer vector has small support and has all coordinates of absolute value less than \(q/2\).

We add a congruence version of the checking block to the output lattice.
The output lattice has basis
\begin{align*}
\mathbf{B}' =
\begin{pmatrix}
\mathbf{B}       & \mathbf{0}       & -\vec{t}              & \mathbf{0} \\
\mathbf{0}       & \mathbf{A}       & -\vec{s}              & \mathbf{0} \\
\vec{0}^{\mathsf T} & \vec{0}^{\mathsf T} & 1                 & \vec{0}^{\mathsf T} \\
\eta \widehat{\mathbf{P}}\mathbf{B}
        & \eta \widehat{\mathbf{R}}\mathbf{A}
        & -\eta(\widehat{\mathbf{P}}\vec{t}+\widehat{\mathbf{R}}\vec{s})
        & \eta q \mathbf{I}
\end{pmatrix}.
\end{align*}
Here, \(\widehat{\mathbf{P}}\) and \(\widehat{\mathbf{R}}\) are integer lifts of \(\mathbf{P}\) and \(\mathbf{R}\), and \(\eta\) is chosen to be larger than the target distance.
The last block row forces every sufficiently short vector to satisfy
\begin{align*}
    \widehat{\mathbf{P}}\vec{u}+\widehat{\mathbf{R}}\vec{w}+q\vec{z}=\vec{0},
\end{align*}
where \(\vec{u}=\mathbf{B}\vec{x}-a\vec{t}\) and \(\vec{w}=\mathbf{A}\vec{y}-a\vec{s}\).

In a YES instance, there is a short residual \(\vec{u}'=\mathbf{B}\vec{x}'-\vec{t}\).
The random map \(\mathbf{R}\) is used to find a nearby point \(\vec{w}=\mathbf{A}\vec{y}-\vec{s}\) satisfying $\mathbf{R}(\vec{w}\bmod q) = -\mathbf{P}(\vec{u}'\bmod q)$.
Once this congruence holds, one can choose \(\vec{z}\) so that the checking block vanishes, yielding a short output vector.

In a NO instance, the argument is again independent of \(\mathbf{R}\).
If \(a\ne 0\), then the first block contradicts the strengthened NO promise of the CVP instance, which rules out all nonzero integer multiples of the target.
If \(a=0\) and the locally dense block is nonzero, then that block is a nonzero vector of the locally dense lattice and is too long.
If both of these blocks vanish, then a nonzero output vector must either use the \(q\mathbf{I}\) block or contain a nonzero short vector \(\vec{u}\) in the first block.
In the first case, the factor \(\eta q\) makes the vector too long.
In the second case, \(\mathbf{P}(\vec{u}\bmod q)\ne \vec{0}\), so the congruence block cannot vanish, and the factor \(\eta\) makes it too long.
Therefore, no short nonzero lattice vector exists in a NO instance, for every choice of \(\mathbf{R}\).
\paragraph*{Conditional hardness.}
We next explain how one-sidedness leads to deterministic reductions under a circuit lower-bound assumption.
The general argument is not specific to codes or lattices.
It applies to any randomized reduction with one-sided error whose successful random tapes can be recognized by small nonadaptive \(\SAT\)-oracle circuits, provided that the target problem admits a way to combine several candidate outputs into a single instance.

We assume that there exists an \(\varepsilon>0\) such that
\begin{align*}
    \E \nsubseteq i.o.\text{-}\mathrm{NSIZE}(2^{\varepsilon n}).
\end{align*}
By the hardness-vs-randomness results of Impagliazzo and Wigderson \cite{conf/ImpagliazzoW97}, Klivans and van Melkebeek \cite{journals/KlivansM02}, and Shaltiel and Umans \cite{journals/ShaltielU06}, this assumption gives logarithmic-seed pseudorandom generators against polynomial-size nonadaptive \(\SAT\)-oracle circuits.

Let \(\mathcal{R}\) be a randomized reduction with one-sided error from a source promise problem \(\Pi'\) to a target promise problem \(\Pi\).
For a fixed input \(x\), consider the success predicate \(\varphi_x(r)=1\) iff \(\mathcal{R}(x;r)\in \Pi_{\mathrm{YES}}\).
If this predicate is computable by a small nonadaptive \(\SAT\)-oracle circuit, then the pseudorandom generator produces a polynomial-size list of candidate random strings containing a successful one for every YES input.
On NO inputs, one-sidedness guarantees that every candidate string still produces a NO instance.
It remains to combine the polynomially many candidate outputs into a single many-one output.

This is done using OR functions, following Chang and Kadin \cite{ChangKadin95}.
For promise problems, an OR function maps a tuple of instances to a single instance that is YES if at least one input is YES, and is NO if all inputs are NO.
For parameterized promise problems, the output parameter must also be bounded by a function of the largest input parameter.
This formulation for promise problems is useful here because unsuccessful random tapes on YES inputs may produce outputs outside the promise.

The OR functions for MDP and SVP are block-diagonal constructions.
For MDP, given instances \((\mathbf{G}_i,k_i)\), let \(\kappa=\max_i k_i\) and \(K=\operatorname{lcm}(1,2,\ldots,\kappa)\).
After repeating every row of \(\mathbf{G}_i\) exactly \(K/k_i\) times, we output
\begin{align*}
    \bigl(\operatorname{diag}(\widetilde{\mathbf{G}}_1,\ldots,\widetilde{\mathbf{G}}_t), K\bigr),
\end{align*}
where \(\widetilde{\mathbf{G}}_i\) denotes the repeated matrix.
The minimum distance of the output code is the minimum of the scaled minimum distances of the blocks.

For SVP, given instances \((\mathbf{B}_i,d_i)\), we similarly scale the bases so that all thresholds become a common value \(D\).
Writing \(c_i d_i=D\), we output
\begin{align*}
    \bigl(\operatorname{diag}(c_1\mathbf{B}_1,\ldots,c_t\mathbf{B}_t), D\bigr).
\end{align*}
The shortest-vector length of a block-diagonal lattice is the minimum of the shortest-vector lengths of the scaled blocks, so this is an OR function.
In the parameterized setting, we take \(D=\operatorname{lcm}(1,2,\ldots,\kappa)\), where \(\kappa=\max_i d_i\), so the output parameter depends only on \(\kappa\).

Combining these OR functions with the pseudorandom-generator argument gives the conditional deterministic hardness results for MDP and SVP.
The same argument also applies in the non-parameterized setting to Micciancio's randomized reduction with one-sided error for \(\GapSVP_2\) \cite{journals/toc/Micciancio12}, yielding conditional deterministic \(\NP\)-hardness of \(\gamma\)-\(\GapSVP_2\) for every constant \(\gamma\ge 1\).
\subsection{Related Work}

Fine-grained hardness of lattice problems has been studied for Bounded Distance Decoding (BDD), SVP, and related problems under assumptions such as ETH, SETH, Gap-ETH, and Gap-SETH.
Aggarwal and Stephens-Davidowitz \cite{conf/stoc/AggarwalS18} proved fine-grained hardness results for SVP under Gap-ETH and SETH.
Li, Lin, and Liu \cite{conf/icalp/LiLL24} proved improved lower bounds under ETH and randomized ETH for approximating parameterized NCP and related problems.
Using earlier gap-preserving reductions \cite{journals/BhattacharyyaBE+21,conf/stoc/BennettCGR23}, they also obtained corresponding bounds for parameterized MDP, CVP, and SVP.
Bennett and Peikert \cite{conf/ccc/BennettP20} proved \(\NP\)-hardness and fine-grained hardness for BDD in \(\ell_p\) norms using locally dense lattices.
Bennett, Peikert, and Tang \cite{conf/itcs/BennettPT22} later obtained improved fine-grained hardness results for BDD and SVP under Gap-\((\mathrm{S})\)ETH.
Aggarwal, Gupta, Morolia, and Zhang \cite{aggarwal2026mindgapsvphardness} recently obtained hardness results under ETH for CVP, SVP, and BDD.
These works use locally dense lattice gadgets and sparsification, although the precise definitions and constructions differ from those used in this paper.

More recently, Hittmeir \cite{hittmeir2026finegrained} proved deterministic fine-grained hardness results for \(\GapSVP_p\) in which the dependence on the norm parameter \(p\) is part of the complexity statement.
In particular, Hittmeir considered a parameterized version of \(\GapSVP_p\) where \(p\) itself is part of the input and also serves as the parameter.
Under ETH, this yields lower bounds for algorithms solving \((2-\varepsilon)\)-\(\GapSVP_p\) uniformly over all \(p\in\mathbb{N}\).
Hittmeir also observed that this reduction has consequences for fixed-parameter tractability with respect to \(p\): \((2-\varepsilon)\)-\(\GapSVP_p\), parameterized by \(p\), does not belong to \(\mathsf{EPT}\), that is, it cannot be solved in time \(2^{O(p)}\cdot |x|^{O(1)}\), unless \(\P=\NP\).
This is different from the parameterized setting considered in this paper, where \(p\) is fixed and the distance threshold is the parameter.

OR functions and related combining reductions have also appeared in other contexts.
Chang and Kadin~\cite{ChangKadin95} studied polynomial-time OR functions for languages in their work on computing Boolean connectives of characteristic functions.
More recently, Ball, Boyle, Degwekar, Deshpande, Rosen, Vaikuntanathan, and Vasudevan~\cite{conf/itcs/BallBDDRVV20} studied lossy reductions that combine the values of several input instances according to a Boolean function; although their notion and use of OR reductions differ from ours, they showed that sufficiently lossy OR reductions can have cryptographic consequences, such as deriving one-way functions from worst-case hardness.

There is also an earlier conditional route to deterministic hardness for SVP.
Micciancio~\cite{journals/sicomp/Micciancio01} proved a \(\NP\)-hardness result for approximating SVP, namely hardness under deterministic many-one reductions, under a reasonable number-theoretic conjecture on the distribution of square-free smooth numbers.
\subsection{Open Problems}
Our results leave several natural open questions.
\begin{itemize}
    \item
    Can the circuit lower-bound assumption in \Cref{thm:MDP-is-W[1]hard-conditional,thm:SVP-is-W[1]hard-conditional} be removed?
    Equivalently, can one prove the \(\W[1]\)-hardness of \(\gamma\)-\(\GapMDP_q\) and \(\gamma\)-\(\GapSVP_p\) under deterministic \(\FPT\) many-one reductions unconditionally?

    \item
    For SVP, \Cref{thm:SVP-is-W[1]hard} gives randomized $\W[1]$-hardness with one-sided error for \(\gamma<2^{1/p}\).
    Bennett, Cheraghchi, Guruswami, and Ribeiro \cite{conf/stoc/BennettCGR23} obtained randomized hardness with two-sided error for a larger range, namely every constant factor for \(p>1\).
    It remains open whether their entire approximation range can be made one-sided error.

    \item
    For the Euclidean norm, Wan \cite{wan2026} already proved deterministic polynomial-time \(\NP\)-hardness of \(\gamma\)-\(\GapSVP_2\) for every constant \(1\le \gamma < \sqrt{2}\).
    It remains open whether deterministic polynomial-time many-one hardness can be extended to every constant \(\gamma \ge \sqrt{2}\).
\end{itemize}

\subsection{Organization}

\Cref{sec:prelim} contains the definitions and preliminary tools used throughout the paper.
\Cref{sec:hardness} proves the \(\W[1]\)-hardness results for MDP and SVP under randomized reductions with one-sided error.
\Cref{sec:derandomize} proves the general framework for converting randomized reductions with one-sided error into deterministic reductions under the circuit lower-bound assumption.
\Cref{sec:conditional} verifies the required OR functions and success-predicate conditions for MDP and SVP, and derives the conditional deterministic hardness theorems.

\section{Preliminaries}
\label{sec:prelim}
\subsection{Notations}
All vectors are column vectors unless stated otherwise.
For a positive integer $N$, we write $[N]:=\{1,\ldots,N\}$.
We write $\mathbb{Z}_{>0}$ for the set of positive integers.
For a finite set $S$, its cardinality is denoted by $|S|$; for a string $x$, its bit length is denoted by $|x|$.
We use $\langle x_1,\ldots,x_t\rangle$ for a standard polynomial-time computable encoding of a tuple.
Unless explicitly indicated otherwise, logarithms are base $2$; $\log_q$ denotes logarithm with base $q$.
We use $\mathrm{poly}(\cdot)$ for an unspecified polynomial and allow constants hidden in $O(\cdot)$ to depend on fixed constants such as $p$ and $q$, and on $\gamma$ whenever $\gamma$ is fixed throughout the statement.
We write $U_n$ for the uniform distribution over $\{0,1\}^n$.

For a map $f \colon X \rightarrow Y$ and subset $S \subseteq X$, we write $f(S):=\{f(x):x\in S\}\subseteq Y$ for the image of $S$ under $f$.
For a set $S$ in an abelian group and an element $a$, we write $S-a:=\{s-a:s\in S\}$.

For $\vec{x}=(x_1,\ldots,x_m)$, its support is $\operatorname{supp}(\vec{x}):=\{i\in[m]:x_i\neq 0\}$.
For vectors over either $\mathbb{F}_q$ or $\mathbb{Z}$, we write $\|\vec{x}\|_0:=|\operatorname{supp}(\vec{x})|$.
For $\vec{x}\in\mathbb{R}^m$ and $p\in[1,\infty)$, we write $\|\vec{x}\|_p:=(\sum_{i=1}^m |x_i|^p)^{1/p}$, and $\|\vec{x}\|_\infty:=\max_i |x_i|$.
The Hamming distance between two vectors over $\mathbb{F}_q$ is $\|\vec{x}-\vec{y}\|_0$.
For $r\ge 0$, define the Hamming ball over $\mathbb{F}_q^m$ by $B_{q,m}(r):=\{\vec{x}\in\mathbb{F}_q^m:\|\vec{x}\|_0\le r\}$ and the $\ell_p$ ball in $\mathbb{R}^m$ by $\mathcal{B}_p^m(r):=\{\vec{x}\in\mathbb{R}^m:\|\vec{x}\|_p\le r\}$.
When integer vectors are intended, we use $\mathcal{B}_p^m(r)\cap\mathbb{Z}^m$ implicitly.

\subsection{Probability Theory}
We use a standard consequence of Chebyshev's inequality.
\begin{lemma}
\label{lem:chebyshev}
Let $X_1,\ldots,X_N$ be pairwise independent random variables over $\{0, 1\}$ such that $\Pr[X_i = 1] = p > 0$ for $i = 1, \ldots, N$.
Then, it holds that
\begin{align*}
    \Pr[\forall i \in [N], X_i = 0] \leq \frac{1}{pN}.
\end{align*}
\end{lemma}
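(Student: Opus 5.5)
The plan is to apply Chebyshev's inequality to the sum $S := \sum_{i=1}^N X_i$. The event that all $X_i = 0$ is exactly the event $S = 0$. This event is contained in $\{|S - \mathbb{E}[S]| \ge \mathbb{E}[S]\}$, and we have $\mathbb{E}[S] = pN > 0$. Chebyshev therefore gives
\begin{align*}
\Pr[S = 0] \le \frac{\operatorname{Var}(S)}{(pN)^2}.
\end{align*}

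The only step that needs any care is computing the variance of $S$, and this is where pairwise independence enters. Pairwise independence means $\operatorname{Cov}(X_i, X_j) = 0$ for all $i \ne j$. So the variance of the sum splits as
\begin{align*}
\operatorname{Var}(S) = \sum_{i=1}^N \operatorname{Var}(X_i) = N p(1-p) \le pN.
\end{align*}
Here we use that each $X_i$ is a Bernoulli variable with parameter $p$.

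Substituting this into the Chebyshev bound gives
\begin{align*}
\Pr[\forall i \in [N],\ X_i = 0] \le \frac{pN}{(pN)^2} = \frac{1}{pN},
\end{align*}
which is the claimed bound. In fact this argument yields the slightly stronger bound $(1-p)/(pN)$.

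There is no real obstacle in this argument. The one point to check is that covariances vanish under pairwise independence alone, with no need for full independence. This holds because $\mathbb{E}[X_i X_j] = \Pr[X_i = 1, X_j = 1] = p^2 = \mathbb{E}[X_i]\,\mathbb{E}[X_j]$ for $i \ne j$.
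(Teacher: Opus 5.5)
Your proof is correct and is exactly the standard Chebyshev argument that the paper has in mind; the paper states the lemma as a standard consequence of Chebyshev's inequality without writing out a proof. Your observation that pairwise independence suffices to kill the covariances is the only point needing care, and you handle it correctly. The sharper bound $(1-p)/(pN)$ is also valid.
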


\subsection{Promise Problems}
We recall the basic definitions of promise problems and parameterized promise problems in complexity theory.
A promise problem is formalized by the following definition.
For details on promise problems, see the survey by Goldreich \cite{survey/Goldreich06}.
\begin{definition}[Promise problems]
A \emph{promise problem} is a pair of disjoint languages $\Pi=(\Pi_{\mathrm{YES}},\Pi_{\mathrm{NO}})$.
The set $\Pi_{\mathrm{YES}} \cup \Pi_{\mathrm{NO}}$ is called the \emph{promise}.
\end{definition}

We use the following parameterized version of promise problems.
For details on parameterized complexity, see \cite{book/DowneyF99,journals/tcs/DF95b,journals/sicomp/DF95a}.

\begin{definition}[Parameterized promise problems]
A \emph{parameterized promise problem} is a pair of disjoint languages $\Pi=(\Pi_{\mathrm{YES}},\Pi_{\mathrm{NO}})$ over $\Sigma^* \times \mathbb{N}$.
An instance of $\Pi$ is a pair $(x,k)\in \Sigma^* \times \mathbb{N}$, where $k$ is called the \emph{parameter}.
The set $\Pi_{\mathrm{YES}} \cup \Pi_{\mathrm{NO}}$ is called the \emph{promise}.
\end{definition}

Some reductions used are randomized.
They are required to preserve NO instances for every choice of randomness, while YES instances need only be mapped correctly with constant probability.
\begin{definition}[Randomized many-one reductions with one-sided error]
\label{def:one-sided-red}
Let $\Pi=(\Pi_{\mathrm{YES}},\Pi_{\mathrm{NO}})$ and $\Pi'=(\Pi'_{\mathrm{YES}},\Pi'_{\mathrm{NO}})$ be promise problems.
A randomized polynomial-time algorithm $R$ is called a randomized many-one reduction with one-sided error from $\Pi$ to $\Pi'$ if, on every input $x$, it outputs a single instance $R(x;r)$ in the input format of $\Pi'$, and the following hold:
\begin{enumerate}
    \item if $x\in \Pi_{\mathrm{YES}}$, then $\Pr_r[R(x;r)\in \Pi'_{\mathrm{YES}}]\ge 2/3$;
    \item if $x\in \Pi_{\mathrm{NO}}$, then $\Pr_r[R(x;r)\in \Pi'_{\mathrm{NO}}]=1$.
\end{enumerate}
In particular, for YES inputs we allow unsuccessful random strings to produce outputs outside the promise of $\Pi'$.
\end{definition}

For parameterized problems, we also require that the running time be fixed-parameter tractable and that the output parameter depend only on the input parameter.
This gives the parameterized analogue of the preceding randomized reduction.

\begin{definition}[Randomized $\FPT$ many-one reductions with one-sided error]
\label{def:one-sided-fpt-red}
Let $\Pi=(\Pi_{\mathrm{YES}},\Pi_{\mathrm{NO}})$ and $\Pi'=(\Pi'_{\mathrm{YES}},\Pi'_{\mathrm{NO}})$ be parameterized promise problems.
A randomized algorithm $R$ is called a randomized $\FPT$ many-one reduction with one-sided error from $\Pi$ to $\Pi'$ if there exist a computable function $T$, a computable function $g$, and a constant $c>0$ such that, on input $(x,k)$, the algorithm runs in time at most $T(k)\cdot |x|^c$, outputs a pair $(x',k')$, and satisfies
\begin{enumerate}
    \item $k'\le g(k)$;
    \item if $(x,k)\in \Pi_{\mathrm{YES}}$, then $\Pr[(x',k')\in \Pi'_{\mathrm{YES}}]\ge 2/3$;
    \item if $(x,k)\in \Pi_{\mathrm{NO}}$, then $\Pr[(x',k')\in \Pi'_{\mathrm{NO}}]=1$.
\end{enumerate}
\end{definition}

\begin{remark}\label{rem:fpt-time-encoding}
With respect to the fixed encoding of tuples, the running-time bounds
\begin{align*}
    T(k)\cdot |x|^c
    \qquad\text{and}\qquad
    T'(k)\cdot |\langle x,k\rangle|^c
\end{align*}
are equivalent up to replacing \(T\) by another computable function of \(k\).
Therefore, in \Cref{def:one-sided-fpt-red} (and deterministic $\FPT$ many-one reduction), one may measure the polynomial factor using either \(|x|\) or \(|\langle x,k\rangle|\).
\end{remark}

We assume, without loss of generality, that on each input the reduction uses a fixed number of random bits depending only on the input, and that this number is computable within the running time bound.
Variable-length random tapes can be padded to a fixed polynomial upper bound.
\subsection{Coding Problems}
For a matrix $\mathbf{G}\in\mathbb{F}_q^{m\times n}$, the linear code generated by the columns of $\mathbf{G}$ is $\mathcal{C}=\mathcal{C}(\mathbf{G}):=\{\mathbf{G}\vec{x}:\vec{x}\in\mathbb{F}_q^n\}\subseteq\mathbb{F}_q^m$.
The minimum distance of a nonzero linear code $\mathcal{C}\subseteq\mathbb{F}_q^m$ is $\lambda(\mathcal{C}):=\min_{\vec{c}\in\mathcal{C}\setminus\{\mathbf{0}\}}\|\vec{c}\|_0$.
We use the convention $\lambda(\{\mathbf{0}\}) := +\infty$.
For a target vector $\vec{t}\in\mathbb{F}_q^m$, we write $\operatorname{dist}(\mathcal{C},\vec{t}):=\min_{\mathbf{c}\in\mathcal{C}}\|\vec{c}-\vec{t}\|_0$.

The Minimum Distance Problem (MDP) asks for the Hamming weight of the shortest nonzero codeword in a given linear code.
In this paper we use its gap version, where one distinguishes codes of minimum distance at most $k$ from those of minimum distance greater than $\gamma k$.

\begin{definition}[$\gamma$-$\GapMDP_q$, \cite{journals/DumerMS03}]
\label{def:mdp}
For $\gamma \geq 1$ and a prime power $q \geq 2$, the (parameterized) promise problem $\gamma$-approximate Minimum Distance Problem over $\mathbb{F}_q$ ($\gamma$-$\GapMDP_q$) is defined as follows.
Instances are pairs $(\mathbf{G}, k)$, where $\mathbf{G} \in \mathbb{F}^{m\times n}_q$ is a generator matrix of $\mathcal{C}(\mathbf{G})$, and $k \in \mathbb{Z}_{>0}$ is a distance parameter such that
\begin{enumerate}
    \item $(\mathbf{G}, k)$ is a YES instance if $\lambda(\mathcal{C}(\mathbf{G}))\leq k$;
    \item $(\mathbf{G}, k)$ is a NO instance if $\lambda(\mathcal{C}(\mathbf{G}))> \gamma k$.
\end{enumerate}
The parameter of interest is $k$.
\end{definition}

The Nearest Codeword Problem (NCP) is the inhomogeneous analogue of MDP: given a target vector, it asks whether some codeword is close to the target.
We will use NCP as the starting point for reductions to MDP.

\begin{definition}[$\gamma$-$\GapNCP_q$, \cite{journals/DumerMS03}]
\label{def:ncp}
For $\gamma\geq 1$ and a prime power $q \geq 2$, the (parameterized) promise problem $\gamma$-approximate Nearest Codeword Problem over $\mathbb{F}_q$ ($\gamma$-$\GapNCP_q$) is defined as follows.
Instances are tuples $(\mathbf{G}, \vec{t}, k)$, where $\mathbf{G} \in \mathbb{F}^{m\times n}_q$ is a generator matrix of $\mathcal{C}(\mathbf{G})$, $\vec{t}\in \mathbb{F}_q^m $ is a target vector, and $k \in \mathbb{Z}_{>0}$ is a distance parameter such that
\begin{enumerate}
    \item $(\mathbf{G}, \vec{t}, k)$ is a YES instance if $\operatorname{dist}(\mathcal{C}(\mathbf{G}), \vec{t}) \leq k$;
    \item $(\mathbf{G}, \vec{t}, k)$ is a NO instance if $\operatorname{dist}(\mathcal{C}(\mathbf{G}), \vec{t}) > \gamma k$.
\end{enumerate}
The parameter of interest is $k$.
\end{definition}

The next theorem states the $\W[1]$-hardness of $\gamma$-$\GapNCP_q$.
We will use this result later in the reduction to $\gamma$-$\GapMDP_q$.
\begin{theorem}[{\cite[Theorem 5.1, adapted]{journals/BhattacharyyaBE+21} and \cite[Theorem 2.7]{conf/stoc/BennettCGR23}}]
\label{thm:NCP-is-hard}
For any real number $\gamma \geq 1$ and any prime power $q \geq 2$, $\gamma$-$\GapNCP_q$ is $\W[1]$-hard.
\end{theorem}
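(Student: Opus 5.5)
The plan is to obtain the statement by citation, with a short check that the cited results match the conventions of \Cref{def:ncp}. For $q=2$ I would invoke \cite[Theorem 5.1]{journals/BhattacharyyaBE+21}, which gives $\W[1]$-hardness of the gap version of the binary nearest codeword problem (equivalently, of maximum-likelihood decoding) for every constant gap. For a general fixed prime power $q$ I would invoke \cite[Theorem 2.7]{conf/stoc/BennettCGR23}, which states exactly the $\W[1]$-hardness of $\gamma$-$\GapNCP_q$ for every constant $\gamma\ge 1$. Both results hold under deterministic $\FPT$ reductions (from $k$-Clique), which is what we need, since we only compose them with our own one-sided-error reductions.

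The adaptation step is bookkeeping, which I would carry out in this order.

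\textbf{Input format.} Instances there may be given in parity-check (syndrome) form $(\mathbf{H},\vec{s},k)$, asking for a vector $\vec{e}$ of weight at most $k$ with $\mathbf{H}\vec{e}=\vec{s}$. I would convert to our generator form: take $\mathbf{G}$ to be a basis of $\ker\mathbf{H}$ and $\vec{t}$ any solution of $\mathbf{H}\vec{t}=\vec{s}$. Then
\begin{align*}
    \operatorname{dist}(\mathcal{C}(\mathbf{G}),\vec{t})=\min\{\|\vec{e}\|_0:\mathbf{H}\vec{e}=\vec{s}\},
\end{align*}
so both the YES and NO conditions are preserved exactly. This conversion is polynomial time by Gaussian elimination over $\F_q$. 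If no solution $\vec{t}$ exists, we output a fixed NO instance.

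\textbf{Parameter and gap.} I would check that the output parameter is bounded by a computable function of the clique size. I would also check that the gap is stated with the strict inequality $>\gamma k$; if the cited statement uses $\ge\gamma k$, replace $\gamma$ by any $\gamma'>\gamma$, which is allowed since hardness holds for every constant.

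The only potential obstacle is whether the binary result of \cite{journals/BhattacharyyaBE+21} already covers arbitrary constant $\gamma$. If it only gives some fixed gap, I would not redo their gap amplification, since \cite[Theorem 2.7]{conf/stoc/BennettCGR23} covers all $q$, including $q=2$, for every constant $\gamma$. The citation of \cite{journals/BhattacharyyaBE+21} then serves only as attribution of the binary case.
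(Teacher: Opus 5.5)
Your proposal is correct and matches the paper. The paper gives no separate proof: it imports the result from \cite[Theorem 5.1]{journals/BhattacharyyaBE+21} and \cite[Theorem 2.7]{conf/stoc/BennettCGR23}, under deterministic $\FPT$ many-one reductions, which is how it is used later. Your extra bookkeeping is sound: the syndrome-to-generator conversion preserves the distance exactly, and a non-strict gap is absorbed by slightly increasing $\gamma$.
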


\subsection{Lattice Problems}
We now turn to the lattice problems considered in the paper.
The lattice analogues of the above coding problems are measured with respect to $\ell_p$ norms.

For a matrix $\mathbf{B}\in\mathbb{Z}^{m\times n}$, the lattice generated by the columns of $\mathbf{B}$ is $\mathcal{L}=\mathcal{L}(\mathbf{B}):=\{\mathbf{B}\vec{z}:\vec{z}\in\mathbb{Z}^n\}\subseteq\mathbb{Z}^m$; that is, $\mathcal{L}(\mathbf{B})$ consists of all integer linear combinations of the columns of $\mathbf{B}$.
If the columns of $\mathbf{B}$ are linearly independent, then $\mathbf{B}$ is called a lattice basis of $\mathcal{L}(\mathbf{B})$.
For $p\in[1,\infty]$, the $\ell_p$ length of a shortest nonzero lattice vector is  $\lambda_1^{(p)}(\mathcal{L}):=\min_{\vec{v}\in\mathcal{L}\setminus\{\mathbf{0}\}}\|\vec{v}\|_p$.
For a target vector $\vec{t}\in\mathbb{R}^m$, we write  $\operatorname{dist}_p(\mathcal{L},\vec{t}):=\min_{\vec{v}\in\mathcal{L}}\|\vec{v}-\vec{t}\|_p$.

One of the central problems on lattices is the Shortest Vector Problem (SVP), which is to find a shortest non-zero vector in a given lattice.
In this paper, we study the promise problem version of the Shortest Vector Problem, which is defined as follows:
\begin{definition}[$\gamma$-$\GapSVP_p$, \cite{journals/GoldreichG00}]
For $p \geq 1$ and $\gamma \geq 1$, the (parameterized) promise problem $\gamma$-approximate Shortest Vector Problem ($\gamma$-$\GapSVP_p$) is defined as follows.
Instances are pairs $(\mathbf{B}, d)$, where $\mathbf{B} \in \mathbb{Z}^{m\times n}$ is a lattice basis of $\mathcal{L}(\mathbf{B})$, and $d \in \mathbb{Z}_{>0}$ is a distance parameter such that
\begin{enumerate}
    \item $(\mathbf{B}, d)$ is a YES instance if $\lambda^{(p)}_1(\mathcal{L}(\mathbf{B}))\leq d$;
    \item $(\mathbf{B}, d)$ is a NO instance if $\lambda^{(p)}_1(\mathcal{L}(\mathbf{B}))> \gamma d$.
\end{enumerate}
The parameter of interest is $d$.
\end{definition}

We will also use the following classical $\NP$-hardness result for approximate SVP for $\ell_2$ norm.
This theorem is the non-parameterized hardness input for the deterministic hardness result later in the paper.

\begin{theorem}[{\cite[Corollary 6.3]{journals/toc/Micciancio12}}]
\label{thm:SVP-is-NPhard}
For any constant $\gamma \geq 1$, $\gamma$-$\GapSVP_2$ is $\NP$-hard under randomized many-one reductions with one-sided error.
\end{theorem}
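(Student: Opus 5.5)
Since this is cited from \cite{journals/toc/Micciancio12}, I would follow Micciancio's two-stage route: first prove one-sided-error hardness for some fixed factor $\gamma_0>1$, then amplify the gap by tensoring. The amplification step must also be arranged so that it never turns a NO instance into a YES instance.

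\emph{Stage 1 (basic reduction).} Start from $\GapCVP$ in its $\NP$-hard deterministic form, with a large constant gap. Use the strengthened NO promise, which says that no nonzero integer multiple $a\vec{t}$ is close to $\mathcal{L}(\mathbf{B})$. Combine it with a locally dense lattice $(\mathbf{A},\vec{s})$ via the Ajtai--Micciancio embedding
\begin{align*}
\begin{pmatrix}\mathbf{B}\mathbf{T} & -\vec{t}\\ \mathbf{A} & -\vec{s}\end{pmatrix}.
\end{align*}
The sample $(\mathbf{A},\vec{s},\mathbf{T})$ is random, e.g.\ Micciancio's construction from primes and a random projection $\mathbf{T}$.

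\emph{Soundness (every random choice).} Take any lattice vector with last coefficient $a$.
\begin{itemize}
\item If $a\neq 0$, the first block is far from the lattice by the NO promise.
\item If $a=0$, the second block is a nonzero vector of $\mathcal{L}(\mathbf{A})$, which is long by the minimum-distance property of the gadget.
\item If the second block is zero, then $\vec{z}=\vec{0}$ because $\mathbf{A}$ has full column rank, so the whole vector is zero.
\end{itemize}
None of these cases uses the randomness.

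\emph{Completeness (with probability $\ge 2/3$).} It suffices that $\mathbf{T}$ maps the set of coefficient vectors $\vec{z}$ with $\|\mathbf{A}\vec{z}-\vec{s}\|_2$ small onto all binary witnesses of the right weight. The witness can be taken binary by applying a standard deterministic padding to the $\GapCVP$ instance. This onto property is where the randomness is needed, and it is a combinatorial (Sauer--Shelah / random projection) lemma. This yields $\gamma_0$-$\GapSVP_2$ hardness for some $\gamma_0>1$ (any $\gamma_0<\sqrt2$) with one-sided error.

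\emph{Stage 2 (amplification).} Take tensor powers $\mathcal{L}^{\otimes c}$ for a constant $c$. The YES side is automatic: a vector $\vec{v}$ with $\|\vec{v}\|_2\le d$ gives $\|\vec{v}^{\otimes c}\|_2\le d^c$, so YES outputs remain YES whenever Stage 1 succeeded. The NO side needs $\lambda_1(\mathcal{L}^{\otimes c})\ge(\gamma_0 d)^c$ up to constants, and this fails for general lattices. Following Haviv--Regev and Micciancio, I would show that in the NO case the Stage 1 lattice deterministically satisfies a structural property: every nonzero vector either has $\ell_2$ norm large relative to $d$, or has many nonzero coordinates so that its norm is well controlled. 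The usual choice is to also ensure $\lambda_1$ is large relative to $\ell_\infty$-type quantities, via Khot-style or Micciancio-style gadgets. This property then implies multiplicativity of $\lambda_1$ under tensoring. Because the property is derived solely from the NO promise and the deterministic minimum-distance property of the gadget, it holds for every random tape. Hence tensoring preserves one-sidedness and gives any constant $\gamma$ by choosing $c$ with $\gamma_0^{c}\ge\gamma$ after normalization.

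The main obstacle is Stage 2. One must design the Stage 1 gadget so that its NO-case guarantee is strong enough for tensor multiplicativity while remaining independent of the random choices. Khot's original augmented tensoring relied on random sparsification and was therefore two-sided. My first attempt would be to verify the Haviv--Regev criterion, that for each nonzero vector $\|\vec{v}\|_2^2$ is bounded below by a constant times $\|\vec{v}\|_0$ or $\|\vec{v}\|_2\ge \gamma_0 d$, directly from the NO-case case analysis above.
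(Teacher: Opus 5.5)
The paper does not prove this statement; it imports it from Micciancio \cite{journals/toc/Micciancio12}, so your proposal has to stand on its own. Your Stage~1 is essentially Micciancio's 2001 reduction \cite{journals/sicomp/Micciancio01}. It is fine as a one-sided reduction for $\gamma_0<\sqrt2$, provided you start from a $\GapCVP$ variant whose YES witnesses are $0/1$ vectors rather than relying on a ``padding''.

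The genuine gap is Stage~2, which is where all the difficulty of reaching every constant lies. The criterion you propose to verify, ``$\|\vec v\|_2^2\ge c\|\vec v\|_0$ or $\|\vec v\|_2\ge\gamma_0 d$'', does not imply multiplicativity of $\lambda_1$ under tensoring. For $c\le 1$ its first alternative holds for every integer vector, so it is vacuous. For any $c$ it only bounds norms, never supports from below, and norm information alone is what fails to tensor: a self-dual $\mathcal{L}$ with $\lambda_1(\mathcal{L})=\Omega(\sqrt n)$ has $\lambda_1(\mathcal{L}\otimes\mathcal{L})\le\sqrt n$.

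The Haviv--Regev lemma \cite{journals/toc/HR12} needs a combinatorial condition on every nonzero vector. Each such $\vec v$ must satisfy either $\|\vec v\|_0\ge d$, or $\vec v\in 2\mathbb{Z}^m$ with $\|\vec v\|_0\ge d/4$. This condition is what yields $\lambda_1(\mathcal{L}\otimes\mathcal{L}')\ge\sqrt d\,\lambda_1(\mathcal{L}')$.

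Your NO-case analysis delivers only $\ell_2$ information, so it cannot establish this condition:
\begin{itemize}
\item For $a\ne 0$ it gives $\|\mathbf{BT}\vec z-a\vec t\|_2>\gamma d$, which is compatible with a vector supported on a single coordinate.
\item For $a=0$ it gives that $\mathbf{A}\vec z$ is $\ell_2$-long. However, each basis vector of the prime-number gadget you suggest has only two nonzero coordinates, and nothing controls the support or parity of $\mathbf{BT}\vec z$.
\end{itemize}
So ``verifying the criterion directly from the NO-case case analysis'' cannot work with this Stage~1.

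What is missing is a different intermediate reduction whose NO-case guarantee is combinatorial and holds for every random tape. That requires three ingredients:
\begin{itemize}
\item A $\GapCVP$ variant with a Hamming-weight NO promise: $\|\mathbf B\vec x-w\vec t\|_0>gd$ for all $\vec x$ and all $w\ne 0$. This is NP-hard via Arora--Babai--Stern--Sweedyk and is the variant used by Khot \cite{journals/JACM/Kho05}.
\item A code-based (Construction~A, BCH-type) locally dense gadget whose nonzero vectors either have large support or are all even.
\item A deterministic way to exclude the remaining ``annoying'' vectors, such as even vectors of small support arising when $a=0$.
\end{itemize}
In Khot's and Haviv--Regev's reductions exactly these annoying vectors are removed by random sparsification, which is why those reductions have two-sided error. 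Producing an intermediate lattice that satisfies a Haviv--Regev-type tensoring condition regardless of the randomness is the actual technical content of \cite{journals/toc/Micciancio12}. Your proposal leaves this as a ``first attempt''. As written, it establishes only the $\gamma_0<\sqrt2$ case and not hardness for every constant $\gamma\ge 1$.
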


The Closest Vector Problem (CVP) is the inhomogeneous analogue of SVP: given a lattice and a target vector, it asks whether the target is close to some lattice vector.
For the reduction to GapSVP, we use the following gap variant, whose NO instances require all lattice vectors to be far from every nonzero integer multiple of the target vector.
\begin{definition}[$\gamma$-$\GapCVP_p$, \cite{conf/stoc/BennettCGR23}]
\label{def:cvp}
For $p \geq 1$ and $\gamma\geq 1$, the (parameterized) promise problem $\gamma$-approximate Closest Vector Problem ($\gamma$-$\GapCVP_p$) is defined as follows.
Instances are tuples $(\mathbf{B}, \vec{t}, d)$, where $\mathbf{B} \in \mathbb{Z}^{m\times n}$ is a lattice basis of $\mathcal{L}(\mathbf{B})$, $\vec{t} \in \mathbb{Z}^m $ is a target vector, and $d \in \mathbb{Z}_{>0}$ is a distance parameter such that
\begin{enumerate}
    \item $(\mathbf{B}, \vec{t}, d)$ is a YES instance if $\operatorname{dist}_p(\mathcal{L}(\mathbf{B}), \vec{t}) \leq d$;
    \item $(\mathbf{B}, \vec{t}, d)$ is a NO instance if $\operatorname{dist}_p(\mathcal{L}(\mathbf{B}), w\vec{t}) > \gamma d$ for all $w\in\mathbb{Z}\backslash\{0\}$.
\end{enumerate}
The parameter of interest is $d$.
\end{definition}

The following theorem provides the parameterized hardness of this CVP variant.
It will be combined with our randomized reduction from CVP to SVP.
\begin{theorem}[{\cite[Theorem 7.2]{journals/BhattacharyyaBE+21}} and {\cite[Theorem 2.11]{conf/stoc/BennettCGR23}}]
\label{thm:CVP-is-hard}
For any real numbers $\gamma \geq 1$ and $p \geq 1$, $\gamma$-$\GapCVP_p$ is $\W[1]$-hard under deterministic $\FPT$ many-one reductions.
\end{theorem}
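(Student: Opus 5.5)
The statement is an imported result (\cite[Theorem 7.2]{journals/BhattacharyyaBE+21}, \cite[Theorem 2.11]{conf/stoc/BennettCGR23}). If I had to reprove it, I would reduce from $\gamma'$-$\GapNCP_{q}$ (\Cref{thm:NCP-is-hard}) for a suitably large fixed prime $q$ and a suitably large constant $\gamma'$. The reduction is the Construction-A embedding of the code into $\mathbb{Z}^m$, plus one extra coordinate that penalizes multiples of the target divisible by $q$. The order of constant choices matters. First fix a prime $q>4\gamma$. Then fix an auxiliary weight $s$. Finally choose $\gamma'$ large enough in terms of $q$, $s$, $\gamma$, $p$.

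\textbf{Construction.} Given $(\mathbf{G},\vec{t},k)$ over $\F_q$, lift $\mathbf{G}$ and $\vec{t}$ entrywise to integers in $\{0,\ldots,q-1\}$. Let $\mathcal{L}$ be the lattice generated by the columns of
\[
\begin{pmatrix} M\widehat{\mathbf{G}} & Mq\mathbf{I}_m \\ \vec{0}^{\mathsf T} & \vec{0}^{\mathsf T}\end{pmatrix},
\]
and let the new target be $\vec{t}'=(M\widehat{\vec{t}},\,Ms)$. Here $M$ is an integer scaling that makes the distance threshold an integer up to negligible rounding loss, and $s:=\lceil c\,k^{1/p}q\rceil$ for a constant $c$ chosen below. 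Put $d:=\lceil M(k(q/2)^p+s^p)^{1/p}\rceil$. Since $d$ depends only on $k$ and the fixed constants, the parameter condition holds. A linearly independent basis is obtained in polynomial time by Hermite normal form.

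\textbf{Completeness.} Suppose some codeword $\vec{c}$ has $\|\vec{c}-\vec{t}\|_0\le k$. Choose the representative of $\vec{t}-\vec{c}$ modulo $q$ with entries in $(-q/2,q/2]$; it is realized as $\widehat{\vec{t}}$ minus a vector of $\mathcal{L}(\widehat{\mathbf{G}},q\mathbf{I})$. This gives a lattice point at $\ell_p^p$-distance at most $M^p\bigl(k(q/2)^p+s^p\bigr)\le d^p$ from $\vec{t}'$.

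\textbf{Soundness.} Take $w\in\Z\setminus\{0\}$ and split into two cases.
\begin{itemize}
\item If $q\nmid w$, then $w$ is a unit mod $q$, so by linearity $\operatorname{dist}(\C,w\vec{t})=\operatorname{dist}(\C,\vec{t})>\gamma' k$. Every nonzero residue class has integer representatives of absolute value at least $1$. Hence the first block alone contributes $\ell_p^p$-distance more than $M^p\gamma' k$.
\item If $q\mid w$, the last coordinate alone contributes $M|w|s\ge Mqs$.
\end{itemize}
It then suffices to check two inequalities, both up to the rounding in $d$:
\[
\gamma' k>\gamma^p\bigl(k(q/2)^p+s^p\bigr)
\quad\text{and}\quad
q^ps^p>\gamma^p\bigl(k(q/2)^p+s^p\bigr).
\]
Writing $s^p\approx c^pq^pk$, the second inequality becomes $c^p>\gamma^p(2^{-p}+c^p)/q^p$. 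This holds for, say, $c=1$ once $q>2\gamma$. The first inequality then holds by taking $\gamma'$ a large enough constant, which \Cref{thm:NCP-is-hard} permits for any fixed $q$.

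\textbf{Main obstacle.} The delicate point is exactly the ``all nonzero multiples'' NO condition. Multiples $w\in q\Z$ collapse $w\vec{t}$ into the lattice $q\Z^m$ and would otherwise give distance $0$. The extra coordinate handles these multiples, and a large modulus $q$ makes this penalty outweigh the YES cost. The price is that YES residues cost up to $(q/2)^p$ per coordinate, which is why the source gap $\gamma'$ must be chosen after $q$. The remaining work is routine: rounding $d$ to an integer, and confirming that the reduction is deterministic $\FPT$.
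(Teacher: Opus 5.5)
Your reduction is correct, but it does not follow the paper's route. The paper proves nothing here. It imports the statement from \cite[Theorem 7.2]{journals/BhattacharyyaBE+21} and \cite[Theorem 2.11]{conf/stoc/BennettCGR23}, and adds only a remark: the former's proof, though stated with the weak NO condition, actually gives $\operatorname{dist}_p(\mathcal{L}(\mathbf{B}),w\vec{t})>\gamma d$ for all $w\neq 0$.

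You instead derive the theorem from \Cref{thm:NCP-is-hard} alone, via Construction A over a prime $q>4\gamma$ with one penalty coordinate. For $q\nmid w$, the Hamming NO promise applies, because $w$ is a unit and every coordinate that is nonzero mod $q$ costs at least $1$ in $\ell_p^p$. For $q\mid w$, where $w\widehat{\vec{t}}\in q\mathbb{Z}^m$ lies in the lattice, the last coordinate costs at least $(qs)^p$. The inequalities close even with $M=1$. Take $c=1$ and $X:=(k(q/2)^p+s^p)^{1/p}$. Then $X\le\tfrac32 s$, so $qs>\tfrac83\gamma X\ge\gamma\lceil X\rceil$. Since $s\le(q+1)k^{1/p}$, taking $\gamma'=2^p\gamma^p\bigl((q/2)^p+(q+1)^p\bigr)$ gives $\gamma' k\ge(\gamma\lceil X\rceil)^p$.

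What your approach buys is that the all-multiples condition becomes explicit and self-contained; the paper has to vouch for it by pointing into someone else's proof. The price is using NCP hardness over a large fixed prime field with a large constant gap. \Cref{thm:NCP-is-hard} supplies both, and your polynomial-time reduction inherits that theorem's determinism. The citation route costs nothing in the paper, but it leaves the strengthened NO condition resting on the remark.

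One wording fix: $s=\lceil ck^{1/p}q\rceil$ grows with $k$, so $\gamma'$ must be fixed in terms of $c$, not $s$. The bound $s^p\le(cq+1)^pk$ is what makes this legitimate.
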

\begin{remark}
In \cite{journals/BhattacharyyaBE+21}, the NO instance condition for \(\gamma\)-\(\GapCVP_p\) is stated as $\operatorname{dist}_p(\mathcal{L}(\mathbf{B}), \vec{t}) > \gamma d$, but their proof in fact shows that $\operatorname{dist}_p(\mathcal{L}(\mathbf{B}), w\vec{t}) > \gamma d$ for every nonzero integer \(w \in \mathbb{Z}\setminus\{0\}\).
\end{remark}
\subsection{OR Functions}
Following Chang and Kadin~\cite{ChangKadin95}, for a language $A\subseteq\{0,1\}^*$ and an integer $t\ge 1$, define
\begin{align*}
    \OR_t(A)
    &:=
    \{\langle x_1,\ldots,x_t\rangle : \exists i\in[t]\text{ such that }x_i\in A\},
    \\
    \OR_\omega(A)
    &:=
    \bigcup_{t\ge 1} \OR_t(A).
\end{align*}
Chang and Kadin say that $A$ has a polynomial-time OR function if $\OR_\omega(A) \le_m^p A$.
We use the following promise problem analogue, which is the right notion when randomized reductions may output garbage on unsuccessful random strings.
The definition only constrains tuples consisting of at least one YES instance or entirely of NO instances.

\begin{definition}[OR functions of a promise problem]\label{def:or-promise}
Let $\Pi=(\Pi_{\textrm{YES}},\Pi_{\textrm{NO}})$ be a promise problem.
We say that $\Pi$ has a polynomial-time OR function if there exists a deterministic polynomial-time algorithm $C$ such that, for every $t\ge 1$ and every tuple $(x_1,\ldots,x_t)$ of strings,
\begin{enumerate}
    \item if $x_i\in \Pi_{\textrm{YES}}$ for some $i\in[t]$, then $C(x_1,\ldots,x_t)\in \Pi_{\textrm{YES}}$;
    \item if $x_i\in \Pi_{\textrm{NO}}$ for all $i\in[t]$, then $C(x_1,\ldots,x_t)\in \Pi_{\textrm{NO}}$.
\end{enumerate}
No condition is imposed when no input belongs to $\Pi_{\textrm{YES}}$ and at least one input lies outside $\Pi_{\textrm{NO}}$.
\end{definition}

For parameterized promise problems, an OR function must also keep the parameter under control.
We require that the output parameter be bounded by a computable function of the largest input parameter.

\begin{definition}[FPT OR functions of a parameterized promise problem]\label{def:or-para}
Let $\Pi=(\Pi_{\textrm{YES}},\Pi_{\textrm{NO}})$ be a parameterized promise problem.
We say that $\Pi$ has a FPT OR function if there exist a deterministic algorithm $C$, computable functions $T$ and $g$, and a constant $c>0$ such that, on every tuple $\bigl((x_1,k_1),\ldots,(x_t,k_t)\bigr)$, the algorithm outputs an instance
\begin{align*}
    C\bigl((x_1,k_1),\ldots,(x_t,k_t)\bigr)=(y,\ell)
\end{align*}
in time at most
\begin{align*}
    T(\kappa)\cdot N^c,
    \qquad
    \kappa:=\max_{i\in[t]} k_i,
    \qquad
    N:=\sum_{i=1}^t |\langle x_i,k_i\rangle|.
\end{align*}
and satisfies:
\begin{enumerate}
    \item if $(x_i,k_i)\in \Pi_{\textrm{YES}}$ for some $i\in[t]$, then $(y,\ell)\in \Pi_{\textrm{YES}}$;
    \item if $(x_i,k_i)\in \Pi_{\textrm{NO}}$ for all $i\in[t]$, then $(y,\ell)\in \Pi_{\textrm{NO}}$;
    \item $\ell\le g(\kappa)$.
\end{enumerate}
No condition is imposed when there is no YES input and at least one input lies outside $\Pi_{\textrm{YES}}\cup \Pi_{\textrm{NO}}$.
\end{definition}

\subsection{Circuits and Pseudorandom Generators}
Finally, we recall the circuit classes and pseudorandom generators used in the derandomization argument.
These notions allow us to replace the random choices in one-sided reductions by a polynomial-size list of pseudorandom seeds.

A nondeterministic circuit is the nonuniform analogue of an $\NP$ verifier: it has ordinary input wires together with auxiliary nondeterministic wires, and it accepts when some auxiliary assignment makes the circuit output $1$.

\begin{definition}[Nondeterministic circuits, see also \cite{eccc/ShaltielS23,conf/ShaltielS24}]
A nondeterministic circuit is a Boolean circuit with additional nondeterministic input wires.
Such a circuit accepts an input $x$ if there exists an assignment to the nondeterministic input wires that makes the circuit output $1$ on $x$.
The size of a circuit is the total number of gates and wires.

For a language $L \subseteq \{0,1\}^*$ and an integer $n \in \mathbb{N}$, let $L_n$ denote the restriction of $L$ to inputs of length $n$.
For a function $s\colon\mathbb N \to \mathbb N$, $\mathrm{NSIZE}(s(n))$ denotes the class of languages decided by families of nondeterministic circuits of size $O(s(n))$.
We write $L \in i.o.\text{-}\mathrm{NSIZE}(s(n))$ if there exists a language $L' \in \mathrm{NSIZE}(s(n))$ such that $L_n = L'_n$ for infinitely many input lengths $n$.
\end{definition}

We also use Boolean circuits that are allowed to query a fixed Boolean function.
In the applications below, the fixed function is $\SAT$, and the nonadaptive restriction means that no input-to-output path in the circuit passes through more than one oracle gate.

\begin{definition}[Oracle circuits, see also \cite{eccc/ShaltielS23,conf/ShaltielS24}]
For a Boolean function $A$, an $A$-oracle circuit is a Boolean circuit that, in addition to the standard gates, may use oracle gates for $A$.
An $A$-oracle circuit is called \emph{nonadaptive} if on every path from an input gate to the output gate there is at most one oracle gate.
The size of a circuit is the total number of gates and wires.
For a function $s:\mathbb N\to\mathbb N$, $\mathrm{SIZE}^{A}_{\|}(s(n))$ denotes the class of languages decided by families of nonadaptive $A$-oracle circuits of size $O(s(n))$.
\end{definition}

To apply pseudorandom generators to a randomized $\FPT$ reduction, we need the predicate saying that a random string is successful to be recognizable by a small nonadaptive $\SAT$-oracle circuit.
The following definition isolates this property of reductions.

\begin{definition}[SAT-circuitizable success predicates]
\label{def:sat-circuitizable}
Let $\Pi=(\Pi_{\mathrm{YES}},\Pi_{\mathrm{NO}})$ and $\Pi'=(\Pi'_{\mathrm{YES}},\Pi'_{\mathrm{NO}})$ be parameterized promise problems, and let $R$ be a randomized $\FPT$ many-one reduction with one-sided error from $\Pi$ to $\Pi'$.
We say that $R$ has SAT-circuitizable success predicates if there exists a fixed polynomial $P$ such that for every input $(x,k)$, letting $n:=|\langle x,k\rangle|$, letting $\mu$ be the number of random bits used by $R$ on input $(x,k)$, and letting $m:=n+\mu$, the Boolean function
\begin{align*}
    \phi_{x,k}(r)=1 \iff R((x,k);r)\in \Pi'_{\mathrm{YES}}
\end{align*}
is computable by a nonadaptive $\SAT$-oracle circuit of size at most $P(m)$.
\end{definition}

A pseudorandom generator stretches a short seed into a longer string that is indistinguishable from a uniform random string for a specified class of Boolean circuits.
We use the standard formulation in terms of distinguishing advantage.

\begin{definition}[Pseudorandom generators against circuit classes]\label{def:prg}
Let $\mathcal C=\{\mathcal C_n\}_{n\ge 1}$ be a family of Boolean circuit classes, where each $\mathcal C_n$ consists of Boolean functions on $n$ input bits.
For functions $m \colon \mathbb N\to\mathbb N$ and a function $\varepsilon \colon \mathbb N\to [0,1]$, a family of functions
\begin{align*}
    G_n:\{0,1\}^{m(n)}\to\{0,1\}^{n}
\end{align*}
is called an $\varepsilon(n)$-PRG for $\mathcal C$ if for every sufficiently large $n$ and every Boolean function $D\in \mathcal C_n$,
\begin{align*}
\left| \Pr_{y\leftarrow U_n}[D(y)=1] - \Pr_{\sigma\leftarrow U_{m(n)}}[D(G_n(\sigma))=1] \right| \le \varepsilon(n),
\end{align*}
where $U_t$ denotes the uniform distribution on $\{0,1\}^t$.
\end{definition}

The next theorem is the derandomization tool used in the paper.
Under an exponential circuit lower bound for $\E := \DTIME(2^{O(n)})$ against nondeterministic circuits, it gives logarithmic-seed PRGs that fool polynomial-size nonadaptive $\SAT$-oracle circuits.

\begin{theorem}[\cite{conf/ImpagliazzoW97,journals/KlivansM02,journals/ShaltielU06}, see also Theorem 2.8 in \cite{eccc/ShaltielS23}]
\label{thm:prg-under-assumption}
Assume that there exists an $\varepsilon > 0$ such that $\E \nsubseteq i.o.\text{-}\mathrm{NSIZE}(2^{\varepsilon n})$.
Then, for every constant $c > 1$, there exists a constant $a > 1$ such that, for all sufficiently large $n$, there is a $1/n^c$-PRG
\begin{align*}
G_n:\{0,1\}^{a\log n}\to\{0,1\}^n
\end{align*}
for $\mathrm{SIZE}^{\SAT}_{\|}(n^c)$, and $G_n$ is computable in time $\mathrm{poly}(n^c)$.
\end{theorem}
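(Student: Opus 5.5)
The plan is to instantiate the standard hardness-versus-randomness paradigm for nonadaptive $\SAT$-oracle circuits. The main point is that such circuits can be turned into nondeterministic circuits. Fix $c>1$ and let $f\in\E$ witness $\E\nsubseteq i.o.\text{-}\mathrm{NSIZE}(2^{\varepsilon \ell})$. For output length $n$ we take input length $\ell:=\lceil (C/\varepsilon)\log n\rceil$ for a large constant $C=C(c)$. Then the truth table of $f$ on $\{0,1\}^\ell$ has length $2^\ell=\mathrm{poly}(n)$ and is computable in time $2^{O(\ell)}=\mathrm{poly}(n^c)$, which gives the running-time bound. The generator $G_n$ is the worst-case-hardness-based generator of Shaltiel and Umans (or, equivalently, Impagliazzo--Wigderson hardness amplification followed by the Nisan--Wigderson design generator, as in Klivans and van Melkebeek) applied to this truth table. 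It has seed length $O(\ell)=a\log n$.

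The first step is a simulation lemma: every nonadaptive $\SAT$-oracle circuit of size $s$ is equivalent to a \emph{single-valued} nondeterministic circuit of size $\mathrm{poly}(s)$ with advice. The advice is the number $N_x$ of oracle queries answered YES on the relevant input. On input $x$, the circuit guesses a subset of the (nonadaptively computed) queries together with satisfying assignments for them. It checks that the subset has size exactly $N_x$ and then evaluates the circuit with those answers. Because of the exact count, the only subset that passes is the true one, so the computation yields both $D(x)$ and $\neg D(x)$ nondeterministically. One wrinkle is that $N_x$ depends on $x$. The standard fix is to hardwire counts for the inputs that actually appear in the reconstruction, or to use the averaging trick over a random count-guess, following Shaltiel--Umans; I would follow their treatment.

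The second step is reconstruction. Suppose a circuit $D\in\mathrm{SIZE}^{\SAT}_{\|}(n^c)$ distinguishes $G_n$ from uniform with advantage $>1/n^c$. The reconstruction procedure of the generator then yields an oracle circuit of size $\mathrm{poly}(n)$ with oracle $D$ that computes $f$ on $\{0,1\}^\ell$ exactly. This circuit uses advice and calls $D$, possibly with negations and majority votes coming from local list decoding. By the first step, each call to $D$ is replaced by a single-valued nondeterministic circuit. Single-valued nondeterministic circuits are closed under composition, negation and majority, so $f_\ell$ gets a nondeterministic circuit of size $\mathrm{poly}(n)=2^{O(\ell/C)}<2^{\varepsilon\ell}$ once $C$ is large. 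If this happened for all sufficiently large $n$, it would happen for infinitely many $\ell$, contradicting $f\notin i.o.\text{-}\mathrm{NSIZE}(2^{\varepsilon\ell})$. Hence for all large $n$ the generator is a $1/n^c$-PRG for the class.

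The main obstacle is the exactness required in the second step. Reconstruction from \emph{worst-case} hardness must produce a circuit that computes $f$ on every input, not just approximately. The local list-decoding stage involves self-correction and a choice among list candidates, and that must be carried out by nondeterministic rather than randomized circuits. I would first try to use the Shaltiel--Umans generator, whose reconstruction is directly from worst-case hardness and whose nonuniform choices can be absorbed into advice, with the remaining randomness removed by hardwiring. The single-valuedness from the first step is exactly what permits the negations and majorities. This is the content of the cited results, as summarized in Theorem~2.8 of Shaltiel and Silbak.
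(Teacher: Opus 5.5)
\textbf{Gap in Step~1.} The argument fails at your simulation lemma, and the failure is not a wrinkle that hardwiring can repair.

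A simulation of every nonadaptive $\SAT$-oracle circuit of size $s$ by a single-valued nondeterministic circuit of size $\mathrm{poly}(s)$, using advice that does not depend on the input, would give $\mathrm{coNP}\subseteq\mathrm{NP}/\mathrm{poly}$ and collapse the polynomial hierarchy. To see this, take the circuit that feeds its input to one $\SAT$ gate and negates the answer: it is nonadaptive and computes $\overline{\SAT}$.

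Your first fix, hardwiring the counts $N_y$ for the inputs $y$ on which $D$ is called during reconstruction, fails quantitatively. Those inputs vary with the point $x\in\{0,1\}^\ell$ at which $f$ is being evaluated. You would therefore need roughly $2^\ell$ counts, i.e.\ advice longer than $2^{\varepsilon\ell}$. That is already enough to write down the truth table of $f_\ell$ directly, so no contradiction results.

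Your second fix, a random guess of the count, fails because the guess cannot be verified. A nondeterministic circuit can certify that at least $j$ queries are satisfiable, but not that at most $j$ are. With a guess that is too low, different guessed subsets lead to different outputs, so you lose both single-valuedness and correctness.

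Consequently, the conclusion that $f_\ell$ has a nondeterministic circuit of size $\mathrm{poly}(n)$ is unsupported. You also misplace the difficulty: exactness of local list decoding is not the obstacle. Self-correction, majority votes and nonuniform fixing of randomness go through for oracle circuits without trouble.

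\textbf{The missing idea.} What your sketch lacks is the ingredient the cited works actually use. Following Klivans and van Melkebeek, the Impagliazzo--Wigderson/Nisan--Wigderson reconstruction relativizes and calls the distinguisher only nonadaptively. So a distinguisher $D\in\mathrm{SIZE}^{\SAT}_{\|}(n^c)$ yields a nonadaptive $\SAT$-oracle circuit of size $\mathrm{poly}(n)$ for the hard function, with no conversion to nondeterministic circuits at all. To reach a contradiction, you then need a function in $\E$ that is hard for $\mathrm{SIZE}^{\SAT}_{\|}(2^{\delta n})$. Shaltiel and Umans's hardness-boosting theorem supplies exactly this: it produces such a (possibly different) function in $\E$ from the assumed hardness against nondeterministic circuits.

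That boosting step is precisely what you defer with ``I would follow their treatment,'' so as written your proposal does not establish the statement. Your parameter choices, seed length, running time and i.o.\ bookkeeping are fine. For comparison, the paper gives no proof of its own and simply imports the theorem from these references.
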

\section{W[1]-hardness with One-Sided Error}
\label{sec:hardness}
In this section, we will show that $\gamma$-$\GapMDP_q$ and $\gamma$-$\GapSVP_p$ are $\W[1]$-hard under randomized $\FPT$ reductions with one-sided error.
As in the prior work \cite{conf/stoc/BennettCGR23}, this reduction is based on locally dense codes/lattices constructed from BCH and Reed--Solomon codes, together with sparsification; however, by additionally incorporating a gadget based on the minimum distance of the locally dense codes/lattices, it achieves one-sided error.

Locally dense codes and locally dense lattices are defined as follows.
\begin{definition}[Locally dense codes, \cite{conf/stoc/BennettCGR23}]
Fix a real number $\alpha\in(0,1)$, positive integers $d, N, m, n$, and a prime power $q$.
A \emph{$(q,\alpha,d, N, m, n)$-locally dense code} is specified by a generator matrix $\mathbf{A} \in \F_q^{m \times n}$ and a target vector $\vec{s} \in \F_q^m$ with the following properties:
\begin{itemize}
    \item $\lambda(\C(\mathbf{A})) > d$.
    \item $\card{(\C(\mathbf{A}) - \vec{s}) \cap B_{q, m}(\alpha d)} \geq N$.
\end{itemize}
\end{definition}

\begin{definition}[Locally dense lattices, \cite{conf/stoc/BennettCGR23}]
Fix real numbers $\alpha\in(0,1)$ and $p\geq 1$ and positive integers $d, N, m, n$.
A \emph{$(p,\alpha,d, N, m, n)$-locally dense lattice} is specified by a basis $\mathbf{A} \in \Z^{m \times n}$ and a target vector $\vec{s} \in \Z^m$ with the following properties:
\begin{itemize}
    \item $\lambda^{(p)}_1(\lat(\mathbf{A})) > d$.
    \item $\card{(\lat(\mathbf{A}) - \vec{s}) \cap \mathcal{B}_{p}^{m}(\alpha d)} \geq N$.
\end{itemize}
\end{definition}

\subsection{W[1]-hardness of \texorpdfstring{$\gamma$-$\GapMDP_q$}{MDP} with One-Sided Error}
We will show the $\W[1]$-hardness of GapMDP by giving a randomized $\FPT$ reduction with one-sided error from GapNCP to GapMDP.

For this, we use properties of BCH codes.
\begin{theorem}[q-ary BCH codes, \cite{conf/stoc/BennettCGR23}]
\label{thm:bch-code}
Fix a prime power $q$.
Then, given integers $m' = q^r - 1$ and $1 \leq d \leq m'$ for some integer $r$, it is possible to construct in time $\rm{poly}(m')$ a generator matrix $\mathbf{G}_\mathrm{BCH} \in \mathbb{F}_q^{m' \times n'}$ such that $\mathcal{C}_\mathrm{BCH} = \mathcal{C}(\mathbf{G}_\mathrm{BCH}) \subseteq \mathbb{F}_q^{m'}$ has minimum distance at least $d$ and co-dimension
\begin{align*}
    m' - n' \leq \lceil(d-1)(1-1/q)\rceil \log_q (m'+1).
\end{align*}
\end{theorem}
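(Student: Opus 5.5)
The plan is to use the classical narrow-sense BCH construction: first define the code over the extension field $\mathbb{F}_{q^r}$ via a Vandermonde parity check, and then pass to the subfield subcode over $\mathbb{F}_q$. Let $m' = q^r - 1$. I would first construct $\mathbb{F}_{q^r}$ explicitly together with a primitive element $\alpha$. Since $q$ is fixed and $q^r = m'+1$, brute force suffices. Enumerate all $q^r$ monic polynomials of degree $r$ over $\mathbb{F}_q$ and test each for irreducibility; then test each nonzero field element for having multiplicative order exactly $m'$. All of this takes $\mathrm{poly}(m')$ time.

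Next, define the code $\mathcal{C}_{\mathrm{BCH}} \subseteq \mathbb{F}_q^{m'}$ as the set of $\vec{c} = (c_0,\ldots,c_{m'-1})$ satisfying
\begin{align*}
\sum_{j=0}^{m'-1} c_j \alpha^{ij} = 0 \qquad \text{for all } i \in \{1,\ldots,d-1\}.
\end{align*}

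\emph{Minimum distance.} This follows from the BCH bound. Suppose a nonzero $\vec{c}$ has support $S$ with $|S| \le d-1$. Pick any $|S|$ of the constraints $i=1,\ldots,|S|$. The $|S|\times|S|$ submatrix $(\alpha^{ij})_{i\le |S|,\, j\in S}$ equals a Vandermonde matrix in the distinct elements $\alpha^j$ ($j \in S$), times the diagonal matrix $\mathrm{diag}(\alpha^j)$. It is therefore nonsingular, which forces $\vec{c}|_S = 0$, a contradiction. Hence $\lambda(\mathcal{C}_{\mathrm{BCH}}) \ge d$. (For $d = 1$ there are no constraints and the claim is trivial.)

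\emph{Codimension.} Each $\mathbb{F}_{q^r}$-constraint becomes $r = \log_q(m'+1)$ linear constraints over $\mathbb{F}_q$ once we expand in an $\mathbb{F}_q$-basis of $\mathbb{F}_{q^r}$. The factor $(1-1/q)$ comes from Frobenius. Since $\vec{c}$ has entries in $\mathbb{F}_q$, we have $c_j^q = c_j$, so
\begin{align*}
\Bigl(\sum_j c_j \alpha^{ij}\Bigr)^q = \sum_j c_j \alpha^{iqj}.
\end{align*}
Thus the constraint for exponent $iq$ is implied by the constraint for $i$. Repeatedly dividing by $q$, every $i \in [d-1]$ is implied by a constraint whose index lies in $[d-1]$ and is not divisible by $q$. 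The number of such indices is $(d-1) - \lfloor (d-1)/q \rfloor = \lceil (d-1)(1-1/q) \rceil$. So $\mathcal{C}_{\mathrm{BCH}}$ is cut out by at most $\lceil (d-1)(1-1/q)\rceil \cdot \log_q(m'+1)$ linear equations over $\mathbb{F}_q$. This gives the stated bound on $m' - n'$, where $n' = \dim \mathcal{C}_{\mathrm{BCH}}$.

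Finally, the generator matrix $\mathbf{G}_{\mathrm{BCH}}$ is obtained as a basis of the kernel of the resulting $\mathbb{F}_q$-parity-check matrix, computed by Gaussian elimination in $\mathrm{poly}(m')$ time. I expect no serious obstacle. The only points needing care are the Frobenius argument that removes the multiples of $q$ (which gives exactly the $(1-1/q)$ factor) and checking that the field construction is polynomial in $m'$ rather than in $r$. The latter holds because enumeration over $q^r = m'+1$ objects is acceptable.
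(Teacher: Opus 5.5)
Your proposal is correct: the Vandermonde-times-diagonal factorization gives the BCH bound $\lambda \ge d$. The Frobenius identity $\bigl(\sum_j c_j\alpha^{ij}\bigr)^q=\sum_j c_j\alpha^{iqj}$ correctly reduces the defining constraints to the $(d-1)-\lfloor (d-1)/q\rfloor=\lceil (d-1)(1-1/q)\rceil$ indices not divisible by $q$, each contributing at most $r=\log_q(m'+1)$ linear equations over $\mathbb{F}_q$. The paper gives no proof of its own here and cites Bennett, Cheraghchi, Guruswami, and Ribeiro for this standard narrow-sense BCH construction; your argument is the usual textbook proof of that fact.
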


Bennett, Cheraghchi, Guruswami, and Ribeiro \cite{conf/stoc/BennettCGR23} constructed the following locally dense code using the BCH code from \Cref{thm:bch-code}.
\begin{theorem}[{\cite[Theorem 3.3]{conf/stoc/BennettCGR23}}]
\label{thm:algo-ldc}
Fix a prime power $q \geq 2$ and set $\gamma = 4q$.
There exists a randomized algorithm which, given positive integers \(m\) and \(k\le m\), runs in time $\rm{poly}(m)$ and outputs with probability at least 0.99 a $(q,\alpha,d,N, m', n')$-locally dense code $(\mathbf{A}, \vec{s})$, where
\begin{align*}
    &m', n' \leq \mathrm{poly}(m), \\
    &d = \gamma k = 4qk, \\
    &\alpha = 1 - \frac{1}{2q}, \\
    &N = \frac{(qm)^{2d}}{100},
\end{align*}
provided that $m$ is sufficiently large compared to $q$.
Moreover, $\lambda(\mathcal{C}(\mathbf{A})) > d$ with probability 1.
\end{theorem}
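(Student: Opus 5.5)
The plan is to take $\mathbf{A}$ to be a deterministic BCH generator matrix with minimum distance above $d$, and to choose the shift $\vec{s}$ at random in a size-biased way so that it lands in a coset of $\mathcal{C}(\mathbf{A})$ that contains many low-weight vectors. The minimum-distance property then holds with probability $1$, and only the density property is probabilistic.

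\emph{Construction.} Set $d=4qk$ and $\alpha=1-\frac{1}{2q}$, and let $w:=\lfloor \alpha d\rfloor$. Choose $r$ minimal with $m':=q^r-1\ge (qm)^{cq}$ for a suitable absolute constant $c$ (say $c=10$). Then $m'\le q\,(qm)^{cq}+q=\mathrm{poly}(m)$ for fixed $q$, and $m'\ge d$ because $d\le 4qm$. Apply \Cref{thm:bch-code} with distance $d+1$ to obtain $\mathbf{A}=\mathbf{G}_{\mathrm{BCH}}\in\mathbb{F}_q^{m'\times n'}$. This gives $\lambda(\mathcal{C}(\mathbf{A}))>d$ deterministically, which proves the ``moreover'' clause. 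It also bounds the codimension by $h:=m'-n'\le d(1-1/q)\log_q(m'+1)$ up to rounding. Next, sample $\vec{e}$ uniformly from the sphere $S=\{\vec{x}\in\mathbb{F}_q^{m'}:\|\vec{x}\|_0=w\}$ and output $\vec{s}:=-\vec{e}$. Then $\mathcal{C}(\mathbf{A})-\vec{s}=\mathcal{C}(\mathbf{A})+\vec{e}$ is exactly the coset of $\vec{e}$, so $\card{(\mathcal{C}(\mathbf{A})-\vec{s})\cap B_{q,m'}(\alpha d)}\ge \card{(\vec{e}+\mathcal{C}(\mathbf{A}))\cap S}$.

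\emph{Averaging step.} The sphere $S$ is partitioned among the $q^{h}$ cosets of $\mathcal{C}(\mathbf{A})$. Call a coset light if it meets $S$ in fewer than $\card{S}/(100\,q^{h})$ points. The light cosets together contain fewer than $\card{S}/100$ points of $S$. Hence with probability at least $0.99$ the random $\vec{e}$ lies in a coset that meets $S$ in at least $\card{S}/(100 q^h)$ points. This is the only randomized step, and it is clearly $\mathrm{poly}(m)$-time.

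\emph{Counting step (the main obstacle).} It remains to check that
\[
\frac{\card{S}}{q^{h}}\ \ge\ (qm)^{2d}.
\]
I would use $\card{S}=\binom{m'}{w}(q-1)^w\ge (m'/w)^w$ together with $q^h\le (m'+1)^{d(1-1/q)}\le (2m')^{d(1-1/q)}$. The exponent of $m'$ is then $w-d(1-1/q)\ge d/(2q)-1$, which leaves a factor of roughly $m'^{d/(2q)-1}$. This factor is divided only by $w^{w}\,2^{d}\le (8qm)^{d}$. Plugging in $m'\ge (qm)^{cq}$ makes $m'^{d/(2q)-1}\ge (qm)^{cd/2-cq}$. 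Since $d=4qk\ge 4q$, the term $cq$ is at most $cd/4$, so this is at least $(qm)^{cd/4}$. That dominates $(qm)^{2d}\cdot(8qm)^{d}$ once $c$ is a large constant and $m$ is large relative to $q$.

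The care needed here is entirely in tracking the ceilings and floors (in $w$, in $h$, and in the BCH codimension bound) and the $-1$ loss in the exponent. These are exactly what force $m'$ to be a large polynomial power $(qm)^{\Theta(q)}$, and what make the hypothesis ``$m$ sufficiently large compared to $q$'' necessary.
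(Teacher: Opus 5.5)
Your proof is correct and is essentially the construction the paper imports from Bennett, Cheraghchi, Guruswami, and Ribeiro without reproving it: a deterministic BCH code of length $(qm)^{\Theta(q)}$ with distance $>d$ (which gives the probability-$1$ clause), a random low-weight center, and a coset-averaging count. One small correction on parameters: since $\alpha d=(4q-2)k$ and $d(1-1/q)=4(q-1)k$ are integers, there is no rounding loss, so the exponent of $m'$ is exactly $d/(2q)=2k$ and $c=10$ works; with your extra $-1$ you would actually need $c\ge 16$, and what forces $m'=(qm)^{\Theta(q)}$ is the gap between this gain $2k$ and the target exponent $2d=8qk$, not the rounding.
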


In addition to these locally dense codes, by using a random matrix and the parity-check matrix of a BCH code, we obtain a randomized $\FPT$ reduction with one-sided error.
\begin{theorem}
\label{thm:NCP-to-MDP}
For a fixed prime power $q \ge 2$ and real numbers $\gamma := 4q$ and $\gamma' := \gamma/(\gamma - 1) = 4q/(4q-1)$, there exists a randomized $\FPT$ many-one reduction with one-sided error from $\gamma$-$\GapNCP_q$ to $\gamma'$-$\GapMDP_q$.
\end{theorem}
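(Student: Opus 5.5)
Given an instance $(\mathbf{G},\vec{t},k)$ of $\gamma$-$\GapNCP_q$ with $\mathbf{G}\in\F_q^{m\times n}$, I will output the code generated by the matrix $\mathbf{G}'$ from the technical overview, with distance parameter $k' := d+k = 4qk+k$ or a comparable value. Here $(\mathbf{A},\vec{s})$ is the locally dense code of \Cref{thm:algo-ldc}, with $d=4qk$, $\alpha d = d-2k$ and $N=(qm)^{2d}/100$. The map $\mathbf{P}\in\F_q^{h\times m}$ is a parity-check matrix of a BCH code of length $q^r-1\ge m$ (pad with zeros) and minimum distance greater than $\gamma' k'$, obtained from \Cref{thm:bch-code}, so $h=O(k\log m)$ and $\mathbf{P}\vec{u}\neq\vec 0$ for every nonzero $\vec u$ with $\|\vec u\|_0\le\gamma' k'$. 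The map $\mathbf{R}\in\F_q^{h\times m'}$ is uniformly random. The amplifier $\mathbf{E}$ stacks $\lceil\gamma' k'\rceil+1$ copies of the identity $\mathbf I_h$, so that $\|\mathbf E\vec c\|_0>\gamma' k'$ whenever $\vec c\ne\vec 0$. The running time is $\mathrm{poly}(m)$, and $k'=O(qk)$ depends only on $k$. I will choose the constants so that $k'$ is at least the weight of the completeness vector, namely $k+\alpha d$, and so that soundness holds, i.e.\ $\gamma' k' < \min(\gamma k, d)$ up to strictness. Since $\gamma'=\gamma/(\gamma-1)$, the choice $k'=(\gamma-1)k$ gives $\gamma'k'=\gamma k=d$, and $k+\alpha d = k+(1-\tfrac{1}{2q})4qk = (4q-1)k=(\gamma-1)k$ exactly. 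So I set $k':=(\gamma-1)k$.

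\textbf{Soundness (every $\mathbf R$).} Take a nonzero coefficient vector $(\vec x,\vec y,\beta)$ whose codeword has weight at most $\gamma'k'=\gamma k$. If $\beta\ne0$, then $\vec u=\mathbf G\vec x-\beta\vec t=\beta(\mathbf G(\beta^{-1}\vec x)-\vec t)$ has weight greater than $\gamma k$ by the NO promise, a contradiction. If $\beta=0$ and $\mathbf A\vec y\neq\vec0$, then this block alone has weight greater than $d=\gamma k$, because $\lambda(\C(\mathbf A))>d$ holds with probability $1$. Otherwise the codeword is $(\vec u,\vec 0,\mathbf E\mathbf P\vec u)$. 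If $\vec u=\vec0$ the codeword is zero. If $\vec u\ne\vec 0$, its weight is at most $\gamma k$, and the BCH distance exceeds $\gamma k$ (I will take the BCH distance to be $\gamma k+1$), so $\mathbf P\vec u\ne0$ and $\mathbf E$ makes the weight exceed $\gamma k$. Hence $\lambda>\gamma'k'$ for every $\mathbf R$.

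\textbf{Completeness.} Fix $\vec x'$ with $\|\vec u'\|_0\le k$, where $\vec u'=\mathbf G\vec x'-\vec t$. Let $W$ be the set of at least $N$ vectors $\vec w\in(\C(\mathbf A)-\vec s)\cap B(\alpha d)$; these are distinct, and each comes from some $\vec y$. The target syndrome $-\mathbf P\vec u'$ is fixed. For a uniformly random $\mathbf R$, the events $X_{\vec w}=[\mathbf R\vec w=-\mathbf P\vec u']$ over nonzero distinct $\vec w$ are not pairwise independent when $\vec w,\vec w'$ are parallel. I will therefore restrict to a subset of $W$ with no two vectors scalar multiples of each other, which loses at most a factor $q-1$, and exclude $\vec 0$ if $\vec s\in\C(\mathbf A)$. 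On such a set the events are pairwise independent with probability $q^{-h}$ each. \Cref{lem:chebyshev} then gives failure probability at most $q^{h}(q-1)/(N-1)$. Since $q^h\le (m+1)^{O(k)}$ (or $(qm)^{O(k)}$ after padding), while $N=(qm)^{8qk}/100$, this failure probability is at most $0.01$ once the BCH co-dimension bound $(\gamma k)(1-1/q)\log_q(m+1)\cdot$const is compared carefully with the exponent $8qk$. This exponent comparison is the step I expect to be tightest, and if needed I will enlarge $m$ by padding or choose \Cref{thm:algo-ldc}'s input length larger. A successful $\mathbf R$ yields the codeword $(\vec u',\vec w,\vec0)$ with $\beta=1$, which is nonzero and has weight at most $k+\alpha d=k'$. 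Union-bounding with the $0.01$ failure of \Cref{thm:algo-ldc} gives success probability at least $2/3$.
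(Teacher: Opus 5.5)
Your proposal is correct and follows the paper's proof essentially step for step: the same $\mathbf{G}'$ with $k'=(4q-1)k$ and $\gamma'k'=d=4qk$, a BCH check of distance $d+1$, a $(d+1)$-fold repetition amplifier $\mathbf{E}$, the same three-case soundness argument that holds for every $\mathbf{R}$, and Chebyshev applied to one representative per line of the dense set. The exponent comparison you flag is not tight, since $q^h\le (q(4q+3)m)^{d}$ against $N=(qm)^{2d}/100$. For $k>m$ and small $m$, the paper preprocesses (it outputs a fixed YES instance, or decides the instance by exhaustive search), and your zero-padding alternative handles both cases equally well.
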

\begin{proof}
We use the following equivalent formulation of $4q$-$\GapNCP_q$.
An instance $(\mathbf G,\vec t,k)$ consists of a matrix $\mathbf G\in\mathbb F_q^{m\times n}$, a target $\vec t\in\mathbb F_q^m$, and an integer $k\ge 1$.
In the NO case we are promised that
\begin{align*}
    \operatorname{dist}(\mathcal C(\mathbf G),\beta \vec t)>4qk
    \qquad\text{for every }\beta\in\mathbb F_q\setminus \{0\}.
\end{align*}
This is equivalent to the usual definition because, for every $\beta\in\mathbb F_q\setminus \{0\}$,
\begin{align*}
    \operatorname{dist}(\mathcal C(\mathbf G),\beta \vec t)
    =
    \min_{\vec c\in\mathcal C(\mathbf G)} \|\vec c-\beta \vec t\|_0
    =
    \min_{\vec c'\in\mathcal C(\mathbf G)} \|\beta \vec c' - \beta \vec t\|_0
    =
    \min_{\vec c'\in\mathcal C(\mathbf G)} \|\vec c'-\vec t\|_0
    =
    \operatorname{dist}(\mathcal C(\mathbf G),\vec t),
\end{align*}
since $\mathcal C(\mathbf G)$ is closed under multiplication by nonzero field elements and $\|\beta \vec v\|_0=\|\vec v\|_0$ for all $\vec v\in\mathbb F_q^m$.

Without loss of generality for the following reasons, we may assume that $1 \le k \le m$ and $m$ is sufficiently large compared to $q$: Let $m_{\mathrm{ldc}}(q)$ be a constant such that \Cref{thm:algo-ldc} applies for all $m \ge m_{\mathrm{ldc}}(q)$.
Define
\begin{align*}
    M(q) := \min \left\{ M \in \mathbb{Z}_{\ge 1} : M \ge m_{\mathrm{ldc}}(q),\;
    (qM)^{8q} \ge 200,
    \text{ and }
    200q \left( \frac{4q+3}{qM} \right)^{4q} \le 0.01 \right\}.
\end{align*}
Since $q$ is fixed, $M(q)$ is an absolute constant.
On input $(\mathbf{G},\vec{t},k)$, we first perform the following deterministic preprocessing.
If $k > m$, then $(\mathbf{G},\vec{t},k)$ is automatically a YES instance, because every vector in $\mathbb{F}_q^m$ has $\ell_0$-norm at most $m < k$.
In this case we output any fixed YES instance of $\gamma'$-\GapMDP$_q$.
If $m < M(q)$, then we decide the instance exactly by exhaustive search over all vectors $\vec{e} \in \mathbb{F}_q^m$.
Indeed, $(\mathbf{G},\vec{t},k)$ is YES if and only if there exists $\vec{e} \in \mathbb{F}_q^m$ with $\|\vec{e}\|_0 \le k$ and $\vec{t}-\vec{e} \in \mathcal{C}(\mathbf{G})$.
Since $q$ and $M(q)$ are constants, this takes polynomial time.
We then output a fixed YES instance or a fixed NO instance of $\gamma'$-\GapMDP$_q$ accordingly.
Hence, in the sequel we may assume $1 \le k \le m$ and $m \ge M(q)$.

We first describe the reduction.
Set $d := \gamma k$, $\alpha := 1 - 1/(2q)$, and $k' := k + \alpha d = (4q-1)k$.
By \Cref{thm:algo-ldc}, in randomized polynomial time on input $(m,k)$ we obtain, with probability at least $0.99$, a $(q,\alpha,d,N,m',n')$-locally dense code $(\mathbf{A},\vec{s})$, where $m',n' \le \textrm{poly}(m)$ and $N=(qm)^{2d}/100$.
Moreover, \Cref{thm:algo-ldc} guarantees that $\lambda(\mathcal{C}(\mathbf{A})) > d$ with probability $1$.
Define
\begin{align*}
    \mathcal{S} := (\mathcal{C}(\mathbf{A})-\vec{s}) \cap B_{q,m'}(\alpha d).
\end{align*}
Then, with probability at least $0.99$, we have $|\mathcal{S}| \geq (qm)^{2d}/100$.
Next define
\begin{align*}
    \widehat{m} := \min \left\{ q^r - 1 : q^r - 1 \ge \max\{m,d+1\} \right\}.
\end{align*}
By \Cref{thm:bch-code} applied with block length $\widehat{m}$ and distance parameter $d+1$, there is a $q$-ary BCH code of length $\widehat{m}$ and minimum distance at least $d+1$ whose co-dimension $h$ satisfies
\begin{align*}
h \le \left\lceil d (1-1/q) \right\rceil \log_q(\widehat{m}+1) = d (1-1/q) \log_q(\widehat{m}+1).
\end{align*}
The equality holds because $d (1-1/q) = 4(q-1)k$ is an integer.
Let $\mathbf{H}_P \in \mathbb{F}_q^{h \times \widehat{m}}$ be a parity-check matrix of this BCH code.
Let $\mathbf{P} \in \mathbb{F}_q^{h \times m}$ be the matrix consisting of the first $m$ columns of $\mathbf{H}_P$.
Then every nonzero vector $\vec{u} \in \mathbb{F}_q^m$ with $\|\vec{u}\|_0 \le d$ satisfies $\mathbf{P}\vec{u} \neq 0$.
Indeed, if $\mathbf{P}\vec{u}=0$, then the vector obtained from $\vec{u}$ by padding $\widehat{m}-m$ zeros would be a nonzero codeword of the BCH code checked by $\mathbf{H}_P$ of $\ell_0$-norm at most $d$, contradicting its minimum distance at least $d+1$.
Define the linear map
\begin{align*}
    \mathbf{E} : \mathbb{F}_q^h \to \mathbb{F}_q^{(d+1)h}
\end{align*}
by
\begin{align*}
    \mathbf{E}(a_1,\dots,a_h) := (a_1 \vec{1}_{d+1}, \dots, a_h \vec{1}_{d+1}),
\end{align*}
where $\vec{1}_{d+1}$ is the all-ones vector of length $d+1$.
Then every nonzero $\vec{a} \in \mathbb{F}_q^h$ satisfies $\|\mathbf{E}\vec{a}\|_0 \ge d+1$.
Finally, sample $\mathbf{R} \in \mathbb{F}_q^{h \times m'}$ uniformly at random, and output the code generated by
\begin{align*}
    \mathbf{G}'
    :=
    \begin{pmatrix}
    \mathbf{G} & \mathbf{0} & -\vec{t} \\
    \mathbf{0} & \mathbf{A} & -\vec{s} \\
    \mathbf{E}\mathbf{P}\mathbf{G} & \mathbf{E}\mathbf{R}\mathbf{A} & -\mathbf{E}(\mathbf{P}\vec{t}+\mathbf{R}\vec{s})
    \end{pmatrix}
\end{align*}
together with the parameter $k'$.
Equivalently, for every $(\vec{x},\vec{y},\beta) \in \mathbb{F}_q^{n+n'+1}$, the corresponding codeword is $\vec{c}(\vec{x},\vec{y},\beta)=\bigl(\vec{u},\vec{w},\mathbf{E}(\mathbf{P}\vec{u}+\mathbf{R}\vec{w})\bigr)$, where $\vec{u} := \mathbf{G}\vec{x}-\beta \vec{t}$ and $\vec{w} := \mathbf{A}\vec{y}-\beta \vec{s}$.

We now verify that this is an $\FPT$ reduction.
By \Cref{thm:algo-ldc}, we have $m',n' \le \textrm{poly}(m)$.
Also $\widehat{m} \le q(\max\{m,d+1\}+1)=O_q(m)$ because $k \le m$ and hence $d=4qk \le 4qm$.
Therefore $h = O_q(k \log m)$, the output length is polynomial in the input size, and the output parameter is $k' = (4q-1)k$.
Hence the reduction runs in time $f(k)\cdot \textrm{poly}(|\mathbf{G}|+|\vec{t}|)$ and maps $k$ to $O(k)$.

It remains to prove completeness and soundness.

\paragraph*{Completeness.}
Assume that $(\mathbf{G},\vec{t},k)$ is a YES instance of $4q$-\GapNCP$_q$.
Then there exists $\vec{x}' \in \mathbb{F}_q^n$ such that $\|\mathbf{G}\vec{x}'-\vec{t}\|_0 \le k$.
Let $\vec{u}' := \mathbf{G}\vec{x}' - \vec{t}$ and $\vec{a} := -\mathbf{P}\vec{u}' \in \mathbb{F}_q^h$.
Condition on the event that $(\mathbf{A},\vec{s})$ is locally dense, equivalently that $|\mathcal{S}| \geq (qm)^{2d}/100$.
This event has probability at least $0.99$.

Choose a subset $\mathcal{T} \subseteq \mathcal{S}\setminus\{\vec{0}\}$ containing exactly one representative from each one-dimensional subspace that intersects $\mathcal{S}\setminus\{\vec{0}\}$ nontrivially.
Then distinct elements of $\mathcal{T}$ are pairwise linearly independent, and, since $(qm)^{2d}/100\ge 2$ by the definition of $M(q)$,
\begin{align*}
    |\mathcal{T}| \ge \frac{|\mathcal{S}|-1}{q-1} \ge \frac{|\mathcal{S}|}{2q} \geq \frac{(qm)^{2d}}{200q}.
\end{align*}
For each $\vec{w} \in \mathcal{T}$, define the indicator random variable $X_{\vec{w}} := \mathbf{1}[\mathbf{R}\vec{w}=\vec{a}]$.
Since $\vec{w} \neq 0$, for a uniformly random row $\vec{r} \in \mathbb{F}_q^{m'}$ the map $\vec{r} \mapsto \vec{r}\vec{w}$ is surjective onto $\mathbb{F}_q$, and hence $\Pr[X_{\vec{w}}=1] = q^{-h}$.
If $\vec{w},\vec{w}' \in \mathcal{T}$ are distinct, then they are linearly independent, so for a uniformly random row $\vec{r}$ the map $\vec{r} \mapsto (\vec{r}\vec{w},\vec{r}\vec{w}')$ is surjective onto $\mathbb{F}_q^2$.
Hence $\Pr[X_{\vec{w}}=1 \text{ and } X_{\vec{w}'}=1] = q^{-2h}$, and therefore the family $\{X_{\vec{w}}\}_{\vec{w}\in \mathcal{T}}$ is pairwise independent.

By \Cref{lem:chebyshev} applied with $p=q^{-h}$ and $N=|\mathcal{T}|$, we obtain
\begin{align*}
    \Pr[\forall \vec{w} \in \mathcal{T},\, X_{\vec{w}}=0] \le \frac{q^h}{|\mathcal{T}|}.
\end{align*}
Using the BCH codimension bound and the bound on $\widehat{m}$, we get
\begin{align*}
    q^h
    &\le (\widehat{m}+1)^{d(1-1/q)} \\
    &\le \bigl(q(\max\{m,d+1\}+1)\bigr)^d \\
    &\le \bigl(q((4q+1)m+2)\bigr)^d \\
    &\le \bigl(q(4q+3)m\bigr)^d.
\end{align*}
Therefore
\begin{align*}
    \Pr[\forall \vec{w} \in \mathcal{T},\, X_{\vec{w}}=0]
    &\le 200q \cdot \frac{q^h}{(qm)^{2d}} \\
    &\le 200q\left(\frac{q(4q+3)m}{q^2m^2}\right)^d \\
    &=200q\left(\frac{4q+3}{qm}\right)^d.
\end{align*}
Since $d=4qk \ge 4q$ and $m \ge M(q)$, the definition of $M(q)$ implies
\begin{align*}
    \Pr[\forall \vec{w} \in \mathcal{T},\, X_{\vec{w}}=0] \le 0.01.
\end{align*}
Hence, conditioned on the event that $(\mathbf{A},\vec{s})$ is locally dense, with probability at least $0.99$ over the choice of $\mathbf{R}$ there exists $\vec{w} \in \mathcal{T} \subseteq \mathcal{S}$ such that $\mathbf{R}\vec{w} = \vec{a} = -\mathbf{P}\vec{u}'$.
Fix such a $\vec{w}$.
Since $\vec{w} \in \mathcal{S} \subseteq \mathcal{C}(\mathbf{A})-\vec{s}$, there exists $\vec{y} \in \mathbb{F}_q^{n'}$ such that $\vec{w} = \mathbf{A}\vec{y} - \vec{s}$ and $\|\vec{w}\|_0 \le \alpha d$.
Then $\mathbf{P}\vec{u}' + \mathbf{R}\vec{w} = 0$, and therefore $\vec{c}(\vec{x}',\vec{y},1) = (\vec{u}',\vec{w},0)$.
It follows that
\begin{align*}
    \|\vec{c}(\vec{x}',\vec{y},1)\|_0 \le \|\vec{u}'\|_0 + \|\vec{w}\|_0 \le k + \alpha d = k'.
\end{align*}
Thus the output instance is a YES instance of $\gamma'$-\GapMDP$_q$.

Since the event that $(\mathbf{A},\vec{s})$ is locally dense has probability at least $0.99$ and, conditioned on that event, the probability over $\mathbf{R}$ that some $\vec{w} \in \mathcal{S}$ satisfies $\mathbf{R}\vec{w}=-\mathbf{P}\vec{u}'$ is at least $0.99$, we conclude that
\begin{align*}
    \Pr[\text{output is a YES instance of }\gamma'\text{-}\GapMDP_q] \ge 0.99 \cdot 0.99 > 2/3.
\end{align*}

\paragraph*{Soundness.}
Assume that $(\mathbf{G},\vec{t},k)$ is a NO instance of $4q$-\GapNCP$_q$.
We show that for every realization of $(\mathbf{A},\vec{s})$ produced above and every choice of $\mathbf{R}$, the output is a NO instance of $\gamma'$-\GapMDP$_q$.

Let $\vec{c} = \vec{c}(\vec{x},\vec{y},\beta) \in \mathcal{C}(\mathbf{G}')\setminus\{0\}$ be an arbitrary nonzero codeword.
Suppose toward a contradiction that $\|\vec{c}\|_0 \le d$.
Writing $\vec{c} = \bigl(\vec{u},\vec{w},\mathbf{E}(\mathbf{P}\vec{u}+\mathbf{R}\vec{w})\bigr)$, where $\vec{u} = \mathbf{G}\vec{x}-\beta \vec{t}$ and $\vec{w} = \mathbf{A}\vec{y}-\beta \vec{s}$, we get $\|\vec{u}\|_0 + \|\vec{w}\|_0 \le d$.

We first show that $\beta=0$.
If $\beta \neq 0$, then by the NO promise for $4q$-\GapNCP$_q$, $\|\vec{u}\|_0 = \|\mathbf{G}\vec{x}-\beta \vec{t}\|_0 > 4qk = d$, contradicting $\|\vec{u}\|_0 + \|\vec{w}\|_0 \le d$.

Thus $\beta=0$.
We next show that $\vec{w}=0$.
If $\vec{w}\neq 0$, then $\vec{w}=\mathbf{A}\vec{y}$ is a nonzero codeword of $\mathcal{C}(\mathbf{A})$, and hence $\|\vec{w}\|_0 \ge \lambda(\mathcal{C}(\mathbf{A})) > d$, again contradicting $\|\vec{u}\|_0 + \|\vec{w}\|_0 \le d$.

Therefore $\beta=0$ and $\vec{w}=0$, so $\vec{c} = \bigl(\vec{u},0,\mathbf{E}(\mathbf{P}\vec{u})\bigr)$ with $\vec{u} = \mathbf{G}\vec{x}$.
Since $\vec{c} \neq 0$, we must have $\vec{u} \neq 0$.
Also $\|\vec{u}\|_0 \le d$, so by the defining property of $\mathbf{P}$, $\mathbf{P}\vec{u} \neq 0$.
Hence $\|\mathbf{E}(\mathbf{P}\vec{u})\|_0 \ge d+1$, which implies $\|\vec{c}\|_0 \ge d+1$, contradicting $\|\vec{c}\|_0\le d$.
This contradiction proves that every nonzero codeword of $\mathcal{C}(\mathbf{G}')$ has $\ell_0$-norm strictly larger than $d$.
Therefore $\lambda(\mathcal{C}(\mathbf{G}')) > d$.

Finally, since $d = \gamma' k'$, we obtain $\lambda(\mathcal{C}(\mathbf{G}')) > \gamma' k'$, so the output is a NO instance of $\gamma'$-\GapMDP$_q$.
This holds for every realization of $(\mathbf{A},\vec{s})$ and every choice of $\mathbf{R}$, and thus the reduction has one-sided error.
\end{proof}

To amplify to an arbitrary approximation constant factor, we use the following well-known fact.
\begin{fact}
\label{fact:tensoring}
For any two linear codes $\mathcal{C}(\mathbf{G}_1)$ and $\mathcal{C}(\mathbf{G}_2)$ with $\mathbf{G}_1 \in \mathbb{F}_q^{m_1 \times n_1}$, $\mathbf{G}_2 \in \mathbb{F}_q^{m_2 \times n_2}$, and minimum distances $\lambda(\mathcal{C}(\mathbf{G}_1)) = d_1$, $\lambda(\mathcal{C}(\mathbf{G}_2)) = d_2$, we have
\begin{align*}
    \lambda(\mathcal{C}(\mathbf{G}_1 \otimes \mathbf{G}_2)) = d_1 \cdot d_2,
\end{align*}
where $\mathbf{G}_1 \otimes \mathbf{G}_2 \in \mathbb{F}_q^{m_1 m_2 \times n_1 n_2}$ is the Kronecker product of $\mathbf{G}_1$ and $\mathbf{G}_2$.
\end{fact}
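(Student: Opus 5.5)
The statement to prove is \Cref{fact:tensoring}: the minimum distance of a tensor product code is the product of the minimum distances. I would view a codeword of $\mathcal{C}(\mathbf{G}_1\otimes\mathbf{G}_2)$ as an $m_2\times m_1$ matrix, or equivalently $m_1\times m_2$ depending on the ordering convention. Using $(\mathbf{G}_1\otimes\mathbf{G}_2)\operatorname{vec}(\mathbf{X}) = \operatorname{vec}(\mathbf{G}_2\mathbf{X}\mathbf{G}_1^{\mathsf T})$ for $\mathbf{X}\in\mathbb{F}_q^{n_2\times n_1}$, the code consists exactly of the matrices $\mathbf{M}=\mathbf{G}_2\mathbf{X}\mathbf{G}_1^{\mathsf T}$. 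Every column of $\mathbf{M}$ lies in $\mathcal{C}(\mathbf{G}_2)$, and every row, transposed, lies in $\mathcal{C}(\mathbf{G}_1)$. The Hamming weight of the codeword is the number of nonzero entries of $\mathbf{M}$.

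\emph{Lower bound.} Let $\mathbf{M}\neq 0$. Some column is nonzero, and since it is a codeword of $\mathcal{C}(\mathbf{G}_2)$ it has at least $d_2$ nonzero entries. So at least $d_2$ rows of $\mathbf{M}$ are nonzero. Each such row is a nonzero codeword of $\mathcal{C}(\mathbf{G}_1)$ and therefore has weight at least $d_1$. Hence $\|\mathbf{M}\|_0\ge d_1d_2$.

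\emph{Upper bound.} Choose $\vec{a}\in\mathbb{F}_q^{n_1}$ with $\|\mathbf{G}_1\vec{a}\|_0=d_1$ and $\vec{b}\in\mathbb{F}_q^{n_2}$ with $\|\mathbf{G}_2\vec{b}\|_0=d_2$. The mixed-product property gives $(\mathbf{G}_1\otimes\mathbf{G}_2)(\vec{a}\otimes\vec{b})=(\mathbf{G}_1\vec{a})\otimes(\mathbf{G}_2\vec{b})$. The support of a Kronecker product of two vectors is the product of their supports, so this codeword is nonzero and has weight exactly $d_1d_2$.

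The only step needing care is the vectorization identity, which is really a matter of matching the index ordering of the Kronecker product. I would handle it by checking it on basis vectors $\vec{e}_i\otimes\vec{e}_j$ and extending by linearity. If $d_1$ or $d_2$ is infinite, meaning a code is $\{\vec 0\}$, the tensor code is also $\{\vec 0\}$ and the identity holds under the convention $\lambda(\{\mathbf 0\})=+\infty$.
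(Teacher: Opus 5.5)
Your proof is correct and complete. The vectorization identity $(\mathbf{G}_1\otimes\mathbf{G}_2)\operatorname{vec}(\mathbf{X})=\operatorname{vec}(\mathbf{G}_2\mathbf{X}\mathbf{G}_1^{\mathsf T})$ holds with column-major vectorization. The row/column counting gives $\lambda\ge d_1d_2$, the Kronecker product of two minimum-weight codewords gives $\lambda\le d_1d_2$, and your handling of the zero code matches the convention $\lambda(\{\mathbf 0\})=+\infty$.

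The paper gives no proof of this statement; it invokes it as a well-known fact for the tensoring step in the proof of \Cref{thm:MDP-is-W[1]hard}. Your argument is the standard one. Both directions are actually used there: the upper bound handles YES instances and the lower bound handles NO instances.
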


By combining the above claims, we obtain the desired proof.
\begin{proof}[Proof of \Cref{thm:MDP-is-W[1]hard}]
By \Cref{thm:NCP-is-hard,thm:NCP-to-MDP}, the problem
\begin{align*}
    \gamma_0\text{-}\GapMDP_q
    \qquad\text{with}\qquad
    \gamma_0:=\frac{4q}{4q-1}>1
\end{align*}
is $\W[1]$-hard under randomized $\FPT$ many-one reductions with one-sided error.

Fix an integer $c\ge 1$.
Given an instance $(\mathbf G,k)$ of $\gamma_0$-$\GapMDP_q$, map it to $(\mathbf G^{\otimes c},k^c)$, where $\mathbf G^{\otimes c}$ denotes the $c$-fold Kronecker product of $\mathbf G$ with itself.
By \Cref{fact:tensoring}, $\lambda(\mathcal C(\mathbf G^{\otimes c})) = \lambda(\mathcal C(\mathbf G))^c.$
Hence, if $(\mathbf G,k)$ is a YES instance, then
\begin{align*}
    \lambda(\mathcal C(\mathbf G^{\otimes c}))\le k^c,
\end{align*}
so $(\mathbf G^{\otimes c},k^c)$ is a YES instance of $\gamma_0^c$-$\GapMDP_q$.
If $(\mathbf G,k)$ is a NO instance, then
\begin{align*}
    \lambda(\mathcal C(\mathbf G^{\otimes c}))
    >
    (\gamma_0 k)^c
    =
    \gamma_0^c k^c,
\end{align*}
so $(\mathbf G^{\otimes c},k^c)$ is a NO instance of $\gamma_0^c$-$\GapMDP_q$.
Because $c$ is a constant, this is a deterministic $\FPT$ many-one reduction.
Therefore, for every integer $c\ge 1$, the problem $\gamma_0^c$-$\GapMDP_q$ is $\W[1]$-hard under randomized $\FPT$ many-one reductions with one-sided error.

Now let $\gamma\ge 1$ be arbitrary.
Choose $c$ so that $\gamma_0^c\ge \gamma$.
Every YES instance of $\gamma_0^c$-$\GapMDP_q$ is also a YES instance of $\gamma$-$\GapMDP_q$, and every NO instance of $\gamma_0^c$-$\GapMDP_q$ is also a NO instance of $\gamma$-$\GapMDP_q$.
Hence the identity map is a deterministic $\FPT$ many-one reduction from $\gamma_0^c$-$\GapMDP_q$ to $\gamma$-$\GapMDP_q$.
Therefore $\gamma$-$\GapMDP_q$ is $\W[1]$-hard under randomized $\FPT$ many-one reductions with one-sided error.
\end{proof}
\subsection{W[1]-hardness of \texorpdfstring{$\gamma$-$\GapSVP_p$}{SVP} with One-Sided Error}
In this section, we prove a randomized $\FPT$ reduction with one-sided error from CVP to SVP using ideas similar to those for the hardness of MDP.

\begin{lemma}[Reed--Solomon codes, {\cite{journals/ReedS60}}]
\label{lem:rs-parity-check}
Let $q$ be a prime and let $m,r$ be integers with $1\le r\le m<q$.
Then one can construct in time $\rm{poly}(m,\log q)$ a matrix $\mathbf{P}\in\mathbb{F}_q^{r\times m}$ such that every nonzero vector $\vec{u}\in\mathbb{F}_q^m$ with $\|\vec{u}\|_0\le r$ satisfies $\mathbf{P}\vec{u}\neq 0$.
\end{lemma}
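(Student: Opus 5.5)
We take $\mathbf{P}$ to be the parity-check matrix of a Reed--Solomon code of length $m$ and dimension $m-r$ over $\mathbb{F}_q$, written as a Vandermonde matrix. Since $m<q$, we can choose $m$ distinct elements $\alpha_1,\ldots,\alpha_m\in\mathbb{F}_q$; we take $\alpha_j:=j$ for $j\in[m]$, viewed as residues modulo the prime $q$. Define
\begin{align*}
    \mathbf{P}_{i,j} := \alpha_j^{\,i-1}\qquad (i\in[r],\ j\in[m]).
\end{align*}
Each entry is a residue modulo $q$ and is obtained by repeated multiplication modulo $q$. Hence the whole matrix is computed with $O(rm)$ arithmetic operations on $O(\log q)$-bit numbers, which takes time $\mathrm{poly}(m,\log q)$ because $r\le m$.

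For the detection property, let $\vec{u}\in\mathbb{F}_q^m$ be nonzero with $S:=\operatorname{supp}(\vec{u})$ and $s:=|S|\le r$. Then $\mathbf{P}\vec{u}=\mathbf{P}_S\vec{u}_S$, where $\mathbf{P}_S\in\mathbb{F}_q^{r\times s}$ is the submatrix of columns indexed by $S$ and $\vec{u}_S$ is the restriction of $\vec{u}$ to $S$. Keep only the first $s$ rows of $\mathbf{P}_S$ (possible since $s\le r$). This gives the square Vandermonde matrix $(\alpha_j^{\,i-1})_{i\in[s],\,j\in S}$, whose determinant is $\prod_{j<j'\in S}(\alpha_{j'}-\alpha_j)$. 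The determinant is nonzero because the $\alpha_j$ are distinct in the field $\mathbb{F}_q$. So $\mathbf{P}_S$ has full column rank $s$. Since $\vec{u}_S\ne 0$, it follows that $\mathbf{P}_S\vec{u}_S\neq 0$, that is, $\mathbf{P}\vec{u}\ne 0$.

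The only step needing care is the existence of $m$ distinct evaluation points, and the hypothesis $m<q$ supplies them directly. Equivalently, the code $\{\vec{u}:\mathbf{P}\vec{u}=0\}$ is the dual of a generalized Reed--Solomon code. It is MDS, with minimum distance $r+1$, which is exactly the claimed property.
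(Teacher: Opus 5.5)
Your proposal is correct and follows essentially the same argument as the paper. Both take the $r\times m$ Vandermonde parity-check matrix on $m$ distinct evaluation points, restrict to the columns in the support, and note that the first $s$ rows form an invertible Vandermonde matrix, so $\mathbf{P}\vec{u}\neq 0$. Your explicit choice $\alpha_j=j$ and the remark that the null space is an MDS code are harmless additions.
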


\begin{proof}
Choose pairwise distinct elements $a_1,\dots,a_m\in\mathbb{F}_q$ and define the Vandermonde matrix
\begin{align*}
\mathbf{P}:=
    \begin{pmatrix}
        1 & 1 & \cdots & 1\\
        a_1 & a_2 & \cdots & a_m\\
        \vdots & \vdots & & \vdots\\
        a_1^{r-1} & a_2^{r-1} & \cdots & a_m^{r-1}
        \end{pmatrix}\in\mathbb{F}_q^{r\times m}.
\end{align*}
This is the standard parity-check matrix of the $[m,m-r,r+1]_q$ Reed--Solomon code.
Suppose for contradiction that $\mathbf{P}\vec{u}=0$ for some nonzero $\vec{u}\in\mathbb{F}_q^m$ with $\|\vec{u}\|_0\le r$.
Let $S\subseteq[m]$ be the support of $\vec{u}$ and write $s:=|S|\le r$.
Restricting $\mathbf{P}$ to the columns indexed by $S$ yields an $r\times s$ Vandermonde matrix, and its first $s$ rows already form an invertible $s\times s$ Vandermonde matrix because the $a_i$'s are distinct.
Hence the columns indexed by $S$ are linearly independent, contradicting $\mathbf{P}\vec{u}=0$.
The construction is clearly deterministic and runs in time $\textrm{poly}(m,\log q)$.
\end{proof}

The following theorem gives a randomized algorithm for constructing locally dense lattices based on Reed--Solomon codes.
\begin{theorem}[{\cite[Theorem 4.3]{conf/stoc/BennettCGR23}}]
\label{thm:algo-ldl}
Fix real numbers $p \geq 1$ and $\gamma' \in [1, 2^{1/p})$.
Let
\begin{align*}
    \varepsilon &= (\gamma')^{-p} - \frac{1}{2} > 0, \;
    \gamma = \left\lceil \max\!\left(\frac{12}{\varepsilon}, \frac{1}{(1+\varepsilon/2)^{1/p} - 1} \right) \right\rceil, \;
    \alpha = \left(\frac{1}{(\gamma')^p} - \frac{2}{\gamma^p} \right)^{1/p}.
\end{align*}
Then there exists a randomized algorithm which, on input positive integers $m$ and $d$, runs in time $\textrm{poly}(m,d)$ and outputs with probability at least $0.99$ a $(p, \alpha, \gamma d, N, m', n')$-locally dense lattice $(\mathbf{A}, \vec{s})$, where
\begin{align*}
    m', n' &\leq \textrm{poly}(m,d), \;
    N = (2m(1+\gamma d))^{3(\gamma d)^p},
\end{align*}
provided that $m$ is sufficiently large compared to $p$.
Moreover, $\lambda_1^{(p)}(\mathcal{L}(\mathbf{A})) > \gamma d$ with probability $1$.
\end{theorem}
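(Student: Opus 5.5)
The plan has two independent parts. The minimum-distance bound will hold by construction for every output. Local density will come from a random shift combined with a second-moment argument, in the same spirit as the proof of \Cref{thm:NCP-to-MDP}.

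\textbf{Construction.} Set $D:=\lfloor(\gamma d)^p\rfloor+1$. Pick a prime $Q>2\gamma d$ with $Q>m'$, where $m'=\mathrm{poly}(m,d)$ is a length fixed below. Let $\mathbf{P}\in\mathbb{F}_Q^{D\times m'}$ be the Reed--Solomon parity-check matrix of \Cref{lem:rs-parity-check}, and let $\widehat{\mathbf{P}}$ be an integer lift of it. Define
\begin{align*}
\mathcal{L}:=\{\vec{x}\in\mathbb{Z}^{m'}:\widehat{\mathbf{P}}\vec{x}\equiv\vec{0}\pmod Q\},
\end{align*}
and compute a basis $\mathbf{A}$ of $\mathcal{L}$ by Hermite normal form.

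\textbf{Minimum distance (the ``moreover'' clause).} Take a nonzero $\vec{x}\in\mathcal{L}$ with $\|\vec{x}\|_p\le\gamma d$. Every coordinate then satisfies $|x_i|<Q/2$, so $\vec{x}\bmod Q$ is nonzero. By \Cref{lem:rs-parity-check}, this forces $\|\vec{x}\|_0\ge D+1$. Hence $\|\vec{x}\|_p^p\ge D+1>(\gamma d)^p$, a contradiction. So $\lambda_1^{(p)}(\mathcal{L}(\mathbf{A}))>\gamma d$ for every run of the algorithm, with no dependence on the randomness.

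\textbf{Local density.} The short shifted vectors will be $0/1$ vectors of weight exactly $h:=\lfloor(\alpha\gamma d)^p\rfloor$, since such vectors have $\ell_p$ norm at most $\alpha\gamma d$. By the choice of $\alpha$ and $\gamma$, we have $h\ge(1/2+\varepsilon/2)(\gamma d)^p$, so $h$ exceeds $D/2$ by a constant fraction. This is exactly where the restriction $\gamma'<2^{1/p}$ enters: the difference of two weight-$h$ binary vectors has $\ell_p^p$ at most $2h$, and this must be allowed to exceed $D$. Sample $\vec{s}$ as $-\vec{b}$ for a uniformly random binary $\vec{b}$ of weight $h$, plus a correction landing it in a uniformly random coset. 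For each weight-$h$ binary $\vec{e}$, let $X_{\vec{e}}$ indicate that $\vec{e}\in\mathcal{L}-\vec{s}$; this is a syndrome-matching event. I would then argue pairwise independence, or a bounded second moment, as in the proof of \Cref{thm:NCP-to-MDP}, and apply \Cref{lem:chebyshev}. The target is that $\binom{m'}{h}/(\text{number of reachable syndromes})\ge N=(2m(1+\gamma d))^{3(\gamma d)^p}$ with probability $0.99$.

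\textbf{Main obstacle.} The obstacle is the syndrome count. A naive Reed--Solomon check over $\mathbb{F}_Q$ with $Q>m'$ has index $Q^D\approx m'^{D}$, which swamps $\binom{m'}{h}\approx m'^{h}$ because $h<D$. The remedy I would try first is to exploit the restriction to binary $\vec{e}$ of weight $h$. I would choose evaluation points $a_i$ so that the power-sum syndromes $\sum_i e_ia_i^j$ of such vectors take few values, namely about $m'^{D/2}$, in the manner of BCH codimension $\lceil (D-1)/2\rceil\log m'$ as in \Cref{thm:bch-code}. Concretely, I would take $a_i$ to be small integers and bound the range of each power sum over $\mathbb{Z}$ before reducing mod $Q$; equivalently, I would only count the dense part of the coset distribution. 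With $h\ge(1+\varepsilon)D/2$ and $m'$ a large polynomial in $m$ and $d$, the ratio becomes roughly $m'^{\varepsilon D/2}$. Taking $m'\ge(2m(1+\gamma d))^{c/\varepsilon}$ for a suitable constant $c$ then makes it exceed $N$, and the exponent bound $\gamma\ge 12/\varepsilon$ is what absorbs the $h^h$ loss in $\binom{m'}{h}$. The running time is polynomial because $p$ and $\gamma'$ are fixed, and the requirement that $m$ be large relative to $p$ serves only to make these asymptotic inequalities valid.
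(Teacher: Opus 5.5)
\textbf{There is a genuine gap in the local-density step.} Your minimum-distance argument is correct, but local density cannot hold for the lattice you build, and the remedy you propose is impossible, not merely hard.

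Take Vandermonde rows $j=0,\ldots,D-1$ over a prime $Q>m'$, and let $h<D$. For a set $S$ of $h$ distinct evaluation points, the power sums $p_1,\ldots,p_h$ determine $e_1(S),\ldots,e_h(S)$ by Newton's identities, since $1,\ldots,h$ are invertible mod $Q$. Hence they determine $\prod_{i\in S}(X-a_i)$, and therefore $S$ itself. So for \emph{every} choice of distinct $a_i$, the syndrome map is injective on binary vectors of weight $h$, and each coset of your $\mathcal{L}$ contains at most one of them. The same argument, applied to the positive and negative parts of the difference of two short vectors, shows more: a coset contains at most one vector of $\ell_p$ norm at most $\alpha\gamma d$ (given $Q>2(\gamma d)^p$). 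So no choice of $\vec{s}$ helps.

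The BCH analogy does not transfer. Halving the codimension for binary words relies on $p_{2j}=p_j^2$ in characteristic $2$, and Construction A modulo $2$ puts $2\vec{e}_1$ into the lattice. Small integer evaluation points make things worse, since the $j$-th power sum then ranges over roughly $h\,m'^{j}$ integers.

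\textbf{The missing idea.} Bounding the minimum distance through Hamming weight loses a factor of $2$. The Reed--Solomon lattices of Bennett and Peikert \cite{conf/random/BennettP23}, on which the cited \cite[Theorem 4.3]{conf/stoc/BennettCGR23} is based, avoid this loss as follows.

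For $\vec{x}\in\Z^{m'}$ one has $\|\vec{x}\|_p^p\ge\|\vec{x}\|_1$. Define $\mathcal{L}$ by only $h_0:=\lfloor(\gamma d)^p/2\rfloor+1$ Vandermonde rows $j=0,\ldots,h_0-1$, with $Q>\max\{m',(\gamma d)^p\}$. A nonzero $\vec{x}\in\mathcal{L}$ falls into one of two cases. Either $\sum_i x_i\neq 0$, which forces $\|\vec{x}\|_1\ge Q$. Or $\vec{x}$ splits into two multisets of evaluation points of common size $s=\|\vec{x}\|_1/2$ with equal power sums $p_1,\ldots,p_{h_0-1}$; then Newton's identities force $s\ge h_0$ or $s\ge Q$. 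Hence $\lambda_1^{(p)}(\mathcal{L})^p\ge\min\{Q,2h_0\}>(\gamma d)^p$ deterministically, while there are only $Q^{h_0}\approx m'^{(\gamma d)^p/2}$ cosets.

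Plain counting then suffices. The $\binom{m'}{w}$ binary vectors of weight $w=\lfloor(\alpha\gamma d)^p\rfloor\ge(1/2+\varepsilon/2)(\gamma d)^p$ fall into at most $Q^{h_0}$ cosets. Set $\vec{s}=-\vec{b}$ for a uniformly random such $\vec{b}$. The coset $\mathcal{L}+\vec{b}$ then contains fewer than $N$ of them with probability at most $NQ^{h_0}/\binom{m'}{w}$. This is below $0.01$ once $m'$ is a suitable polynomial in $m$ and $d$ whose degree depends only on $p$ and $\varepsilon$.

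No pairwise independence is needed. The $2^{1/p}$ restriction enters exactly as the condition $w>h_0$, which is the correct form of your intuition about differences of two weight-$h$ vectors.
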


As in the case of MDP, by using a random matrix and the parity-check matrix of a Reed--Solomon code, we obtain a randomized $\FPT$ reduction with one-sided error.
\begin{theorem}
\label{thm:cvp-to-svp}
Fix a real number $p \geq 1$ and a rational constant $\gamma' \in [1,2^{1/p})$.
Define
\begin{align*}
\varepsilon &=(\gamma')^{-p}-\frac12>0 \; \text{ and } \;
\gamma =
\left\lceil\max\!\left(
\frac{12}{\varepsilon}, \frac{1}{(1+\varepsilon/2)^{1/p}-1}
\right)\right\rceil.
\end{align*}
Then there exists a randomized $\FPT$ many-one reduction with one-sided error from $\gamma$-$\GapCVP_p$ to $\gamma'$-$\GapSVP_p$.
\end{theorem}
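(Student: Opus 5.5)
We follow the template of \Cref{thm:NCP-to-MDP}, using the lattice basis $\mathbf{B}'$ sketched in the technical overview. Given $(\mathbf{B},\vec{t},d)$, we first preprocess. Trivial or constant-size cases (e.g.\ $m$ below the threshold of \Cref{thm:algo-ldl}) are handled by outputting fixed YES/NO instances. We then run \Cref{thm:algo-ldl} on $(m,d)$ to get $(\mathbf{A},\vec{s})$, which has $\lambda_1^{(p)}(\lat(\mathbf{A}))>\gamma d$ always and, with probability $0.99$, at least $N$ shifted vectors of norm at most $\alpha\gamma d$.

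Set $D:=\gamma d$ and $r:=\lfloor D^p\rfloor$. Every integer vector of $\ell_p$ norm at most $D$ has support at most $r$ and entries of absolute value at most $D$. Choose a prime $q$ with $q>\max\{2D,\,m\}$ and $q$ polynomial in $m$ and $d$; for instance, the smallest prime above $\max\{2D+1,m+1\}$, found by trial division. Take $\mathbf{P}\in\mathbb{F}_q^{r\times m}$ from \Cref{lem:rs-parity-check}. Sample $\mathbf{R}\in\mathbb{F}_q^{r\times m'}$ uniformly, take integer lifts $\widehat{\mathbf{P}},\widehat{\mathbf{R}}$, and set $\eta:=\lceil D\rceil+1$. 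The output is $\mathbf{B}'$ as displayed in the overview. The scalar row $1$ is multiplied by a weight $\theta$ chosen so that $\theta^p$ is a small constant multiple of $d^p$ (fixed below), and the output threshold is $d'$ with $\gamma' d'\le D$. Note that $\mathbf{B}'$ has full column rank because of the $q\mathbf{I}$ block and the fact that $\mathbf{B},\mathbf{A}$ are bases, up to the target column; the scalar coordinate handles that column. The output parameter is $O(d)$, and the size is $f(d)\,\mathrm{poly}(m)$, since $r\le D^p$ and $\log q=O(\log(md))$.

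For soundness, which must hold for every $\mathbf{R}$, take a nonzero vector with coefficients $(\vec{x},\vec{y},a,\vec{z})$ and norm at most $D$. If $a\neq0$, the first block $\mathbf{B}\vec{x}-a\vec{t}$ has norm greater than $\gamma d=D$ by the strengthened NO promise of \Cref{def:cvp}, a contradiction. So $a=0$. If $\mathbf{A}\vec{y}\ne0$, its norm exceeds $D$, so $\mathbf{A}\vec{y}=0$. Then the last block equals $\eta(\widehat{\mathbf{P}}\vec{u}+q\vec{z})$ with $\vec{u}=\mathbf{B}\vec{x}$. If this block is nonzero, it is an integer vector scaled by $\eta>D$, too long. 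So it is zero, i.e.\ $\mathbf{P}(\vec{u}\bmod q)=0$. Since $\|\vec{u}\|_p\le D$, the vector $\vec{u}$ has support at most $r$ and entries in $(-q/2,q/2)$, so $\vec{u}\bmod q=0$ forces $\vec{u}=0$. Then $\widehat{\mathbf{P}}\vec{u}+q\vec z=q\vec z=0$ gives $\vec{z}=0$, and $\vec{x}=0,\vec{y}=0$ follow since $\mathbf{B},\mathbf{A}$ are bases. This contradicts nonzeroness, so $\lambda_1^{(p)}>D\ge\gamma'd'$.

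For completeness, let $\vec{u}'=\mathbf{B}\vec{x}'-\vec{t}$ with $\|\vec{u}'\|_p\le d$. Consider $\mathcal{S}=(\lat(\mathbf{A})-\vec{s})\cap\mathcal{B}_p(\alpha D)$ reduced mod $q$. Reduction mod $q$ is injective on $\mathcal{S}$, because differences have entries below $2\alpha D<q$. As in the MDP proof, pick pairwise linearly independent representatives mod $q$, losing a factor of at most $q$, and use pairwise independence of $\mathbf{R}\vec{w}$ together with \Cref{lem:chebyshev}. The failure probability is at most $q^{r+1}/N$. With $q=\mathrm{poly}(m,d)$ and $r\le(\gamma d)^p$, this is at most $(\mathrm{poly}(m,d))^{(\gamma d)^p+1}/(2m(1+\gamma d))^{3(\gamma d)^p}$. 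This is small provided the prime is at most roughly $(2m(1+\gamma d))^{2}$ and $m$ is large, which we arrange. On success, choose $\vec{z}$ to cancel the last block. The resulting vector has $p$-th power norm at most $d^p+(\alpha\gamma d)^p+\theta^p$.

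The main obstacle is the parameter arithmetic: we need
\begin{align*}
\bigl(d^p+(\alpha \gamma d)^p+\theta^p\bigr)^{1/p}\cdot\gamma' \le \gamma d.
\end{align*}
Using $\alpha^p=(\gamma')^{-p}-2\gamma^{-p}$, the left side to the $p$-th power equals $(\gamma')^p\bigl(d^p+\theta^p\bigr)+(\gamma d)^p-2(\gamma')^p d^p$. Thus it suffices to take $\theta\le d$, e.g.\ $\theta=d$, or to drop the scalar row when $p$-norm accounting permits. We then set $d'$ to be an integer with $d'\ge(d^p+(\alpha\gamma d)^p+d^p)^{1/p}$ and $\gamma'd'\le\gamma d$. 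Rounding to an integer threshold is handled by scaling the whole lattice by a constant (using rationality of $\gamma'$) and shrinking $\alpha$ slightly, which is what the slack in the choice of $\gamma$ provides. The probability $0.99\cdot0.99>2/3$ finishes the proof.
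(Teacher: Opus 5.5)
Your proposal is correct and is essentially the paper's proof: the same basis $\mathbf{B}'$ with a Reed--Solomon check modulo a prime $q>2\gamma d$ and amplifying factor $\gamma d+1$, the same soundness case analysis valid for every $\mathbf{R}$, and the same pairwise-independence bound $q^{r+1}/N$ for completeness. Your weight $\theta=d$ on the scalar coordinate works as well as the paper's weight $1$, since only $\theta\le d$ is needed. The integer-threshold issue is resolved exactly as in the paper, by scaling the lattice by the numerator $s_0$ of $\gamma'=s_0/t_0$ and taking $d'=\gamma t_0 d$, and no shrinking of $\alpha$ is needed; the only small fix is to cap $r$ at $m$ (using $\mathbf{P}=\mathbf{I}_m$ when $\lfloor(\gamma d)^p\rfloor\ge m$) so that the Reed--Solomon lemma applies.
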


\begin{proof}
Let $(\mathbf{B},\vec{t},d)$ be an instance of $\gamma$-$\GapCVP_p$ with $\mathbf{B}\in\Z^{m\times n}$, $\vec{t}\in\Z^m$, and $d\in \mathbb{Z}_{>0}$.

Let $m_{\mathrm{ldl}}(p)$ be a constant such that \Cref{thm:algo-ldl} applies whenever the first dimension is at least $m_{\mathrm{ldl}}(p)$.
If $m<m_{\mathrm{ldl}}(p)$, replace $(\mathbf B,\vec t,d)$ by
\begin{align*}
    \left(
    \begin{pmatrix}
        \mathbf B\\
        \mathbf 0_{(m_{\mathrm{ldl}}(p)-m)\times n}
    \end{pmatrix},
    \begin{pmatrix}
        \vec t\\
        \vec 0_{m_{\mathrm{ldl}}(p)-m}
    \end{pmatrix},
    d
    \right).
\end{align*}
For every $a\in\mathbb Z$ and every $\vec x\in\mathbb Z^n$,
\begin{align*}
    \left\|
    \begin{pmatrix}
        \mathbf B\vec x-a\vec t\\
        \vec 0
    \end{pmatrix}
    \right\|_p
    =
    \|\mathbf B\vec x-a\vec t\|_p.
\end{align*}
Hence, for every nonzero $a\in\mathbb Z$,
\begin{align*}
    \operatorname{dist}_p(\mathcal L(\mathbf B),a\vec t)
    =
    \operatorname{dist}_p\!\left(
        \mathcal L\!\left(
        \begin{pmatrix}
            \mathbf B\\
            \mathbf 0
        \end{pmatrix}
        \right),
        \begin{pmatrix}
            a\vec t\\
            \vec 0
        \end{pmatrix}
    \right),
\end{align*}
so this padding preserves both YES and NO instances of $\gamma$-$\GapCVP_p$ and keeps the parameter $d$ unchanged.
Therefore, without loss of generality, we may assume that $m\ge m_{\mathrm{ldl}}(p)$, i.e.\ that $m$ is sufficiently large compared to $p$.

Write $\gamma' = s_0/t_0$ in lowest terms with $s_0,t_0 \in \mathbb{Z}_{>0}$.
Set
\begin{align*}
\alpha &:=
\left(\frac{1}{(\gamma')^p}-\frac{2}{\gamma^p}\right)^{1/p}, \;
r := \min\{m,\lceil (\gamma d)^p\rceil\}, \;
L := s_0, \;
d' := \gamma t_0 d = L\frac{\gamma}{\gamma'}d \in \mathbb{Z}_{>0}.
\end{align*}
By \Cref{thm:algo-ldl}, in randomized time $\textrm{poly}(m,d)$ on input $(m,d)$ we obtain, with probability at least $0.99$, a $(p,\alpha,\gamma d,N,m',n')$-locally dense lattice $(\mathbf{A},\vec{s})$, where
\begin{align*}
N =(2m(1+\gamma d))^{3(\gamma d)^p}, \;
m',n' &\le \textrm{poly}(m,d), \; \text{ and }\;
\lambda_1^{(p)}(\mathcal{L}(\mathbf{A})) >\gamma d.
\end{align*}
Next, choose a prime $q$ satisfying
\begin{align*}
2\max\{m,\gamma d\}<q\le 4\max\{m,\gamma d\}.
\end{align*}
By Bertrand's postulate and primality testing, such a prime can be found deterministically in time $\textrm{poly}(m,d)$.
If $\lceil (\gamma d)^p\rceil\ge m$, let $\mathbf{P}:=I_m\in \mathbb{F}_q^{m\times m}$.
Otherwise, apply \Cref{lem:rs-parity-check} with parameters $(q,m,r)$ to obtain $\mathbf{P}\in\mathbb{F}_q^{r\times m}$ such that every nonzero vector $\vec{u}\in\mathbb{F}_q^m$ with $\lVert \vec{u} \rVert_0\le r$ satisfies $\mathbf{P}\vec{u}\neq 0$.
Let $\widehat{\mathbf{P}}\in\Z^{r\times m}$ be the canonical integer lift of $\mathbf{P}$ whose entries lie in $\{0,1,\dots,q-1\}$.
Define the reduction modulo $q$ maps
\begin{align*}
\pi_m &: \Z^m \to \mathbb{F}_q^m, & \pi_m(\vec{u}) &= \vec{u} \bmod q,\\
\pi_{m'} &: \Z^{m'} \to \mathbb{F}_q^{m'}, & \pi_{m'}(\vec{w}) &= \vec{w} \bmod q.
\end{align*}
Observe that every nonzero $\vec{u}\in\Z^m$ with $\lVert \vec{u} \rVert_p\le \gamma d$ satisfies $\mathbf{P}\pi_m(\vec{u}) \neq 0$.
Indeed, $|u_i|\le \lVert \vec{u} \rVert_p\le \gamma d<q/2$ for every coordinate, so $\pi_m(\vec{u})\neq 0$.
Moreover,
\begin{align*}
\lVert \pi_m(\vec{u}) \rVert_0
&\le \lVert \vec{u} \rVert_0
\le \lVert \vec{u} \rVert_p^p
\le (\gamma d)^p
\le r
\end{align*}
if $r<m$, while if $r=m$ then $\mathbf{P}=I_m$ and the claim is immediate.
Now sample $\mathbf{R}\in \mathbb{F}_q^{r\times m'}$ uniformly at random, let $\widehat{\mathbf{R}}\in\Z^{r\times m'}$ be its canonical integer lift.
More precisely, $\mathbf{R}$ is obtained by sampling the entries independently and uniformly at random from $\mathbb{F}_q$.
Define the lattice basis
\begin{align*}
\mathbf{B}'
:=
\begin{pmatrix}
\mathbf{B} & \mathbf{0}_{m \times n'} & -\vec{t} & \mathbf{0}_{m \times r} \\
\mathbf{0}_{m' \times n} & \mathbf{A} & -\vec{s} & \mathbf{0}_{m' \times r} \\
\mathbf{0}_{1 \times n} & \mathbf{0}_{1 \times n'} & 1 & \mathbf{0}_{1 \times r} \\
\beta \widehat{\mathbf{P}}\mathbf{B} & \beta \widehat{\mathbf{R}}\mathbf{A} & -\beta(\widehat{\mathbf{P}}\vec{t}+\widehat{\mathbf{R}}\vec{s}) & \beta q \mathbf{I}_r
\end{pmatrix}
\in \mathbb{Z}^{(m+m'+1+r) \times (n+n'+1+r)}\; \text{ with }\;
\beta := \gamma d + 1.
\end{align*}
And set
\begin{align*}
\widehat{\mathbf{B}} := L\mathbf{B}'.
\end{align*}
Output the lattice basis $\widehat{\mathbf{B}}$ together with the threshold $d'$.
Equivalently, for every $(\vec{x}, \vec{y}, a, \vec{z}) \in \Z^{n+n'+1+r}$, the corresponding lattice vector is
\begin{align*}
\vec{c}(\vec{x}, \vec{y}, a, \vec{z}) = \bigl(\vec{u}, \vec{w}, a, \beta(\widehat{\mathbf{P}}\vec{u}+\widehat{\mathbf{R}}\vec{w}+q\vec{z})\bigr),
\end{align*}
where $\vec{u} := \mathbf{B}\vec{x} - a\vec{t}$ and $\vec{w} := \mathbf{A}\vec{y} - a\vec{s}$.
This is the output instance of $\gamma'$-$\GapSVP_p$.

Since
\begin{align*}
m',n' \leq \textrm{poly}(m,d)
\qquad\text{and}\qquad
r \leq \lceil(\gamma d)^p\rceil,
\end{align*}
the output size and the running time are both at most $f(d)\cdot\textrm{poly}(|\mathbf{B}|+|\vec{t}|)$ for some computable function $f$.
The output parameter satisfies $d'=\gamma t_0d=O(d)$, because $\gamma$ and $t_0$ are constants that depend only on $\gamma'$.
Therefore, this is an $\FPT$ reduction, and the output parameter is $O(d)$.

It remains to prove completeness and soundness.

\paragraph*{Completeness.}
Assume that $(\mathbf{B}, \vec{t}, d)$ is a YES instance of $\gamma$-$\GapCVP_p$.
Then there exists $\vec{x}' \in \mathbb{Z}^n$ such that $\lVert \mathbf{B}\vec{x}' - \vec{t} \rVert_p \leq d$.
Let
\begin{align*}
\vec{u}' := \mathbf{B}\vec{x}' - \vec{t},
\qquad
\vec{a} := -\mathbf{P}\pi_m(\vec{u}') \in \mathbb{F}_q^r.
\end{align*}
Define
\begin{align*}
\mathcal{W}_s := (\mathcal{L}(\mathbf{A})-\vec{s})\cap \mathcal{B}_p^{m'}(\alpha \gamma d).
\end{align*}
Then, with probability at least $0.99$, we have $|\mathcal{W}_s|\ge N$.
Because every $\vec{w} \in \mathcal{W}_s$ satisfies
\begin{align*}
\lVert \vec{w} \rVert_p \leq \alpha \gamma d < \gamma d < q/2,
\end{align*}
all coordinates of $\vec{w}$ have absolute value $< q/2$.
Hence the reduction map $\pi_{m'}$ is injective on $\mathcal{W}_s$, and every $\pi_{m'}(\vec{w})$ is nonzero if $\vec{w} \neq 0$.
Choose a subset $\mathcal{T} \subseteq \mathcal{W}_s \setminus \{\vec{0}\}$ containing exactly one representative from each one-dimensional $\mathbb{F}_q$-subspace that intersects $\pi_{m'}(\mathcal{W}_s)$ nontrivially.
Then the set $\pi_{m'}(\mathcal{T})$ is pairwise linearly independent, and
\begin{align*}
    |\mathcal{T}| \geq \frac{|\mathcal{W}_s| - 1}{q - 1}.
\end{align*}
Since $|\mathcal{W}_s| \ge N$ and
\begin{align*}
    N=(2m(1+\gamma d))^{3(\gamma d)^p}
    \ge (2m(1+\gamma d))^3
    \ge 4\max\{m,\gamma d\}
    \ge q,
\end{align*}
we further obtain
\begin{align*}
    |\mathcal{T}|
    \ge \frac{|\mathcal{W}_s|-1}{q-1}
    \ge \frac{|\mathcal{W}_s|}{q}
    \ge \frac{N}{q}.
\end{align*}
For each $\vec{w} \in \mathcal{T}$, define the indicator random variable $X_{\vec{w}} := \mathbf{1}\bigl[\mathbf{R}\pi_{m'}(\vec{w}) = \vec{a}\bigr]$.
Since $\pi_{m'}(\vec{w}) \neq 0$, for uniformly random $\mathbf{R} \in \mathbb{F}_q^{r \times m'}$ we have
\begin{align*}
    \Pr[X_{\vec{w}} = 1] = q^{-r}.
\end{align*}
If $\vec{w}, \vec{w}' \in \mathcal{T}$ are distinct, then $\pi_{m'}(\vec{w})$ and $\pi_{m'}(\vec{w}')$ are linearly independent, so the pair $(X_{\vec{w}}, X_{\vec{w}'})$ is independent.
Thus the family $\{X_{\vec{w}}\}_{\vec{w} \in \mathcal{T}}$ is pairwise independent.
By \Cref{lem:chebyshev},
\begin{align*}
    \Pr[\forall \vec{w} \in \mathcal{T},\ X_{\vec{w}} = 0] \leq \frac{q^r}{|\mathcal{T}|} \leq \frac{q^{r+1}}{N}.
\end{align*}
Now let $M := 2m(1+\gamma d)$.
Since $q \leq 4\max\{m,\gamma d\} \leq 2M$ and $r \leq \lceil (\gamma d)^p \rceil \leq (\gamma d)^p + 1$, we obtain
\begin{align*}
\frac{q^{r+1}}{N}
\leq \frac{(2M)^{(\gamma d)^p + 2}}{M^{3(\gamma d)^p}}
= 4 \cdot 2^{(\gamma d)^p} \cdot M^{2-2(\gamma d)^p}.
\end{align*}
Because $\gamma'\in[1,2^{1/p})$, we have $\varepsilon=(\gamma')^{-p}-1/2\leq 1/2$, and therefore
\begin{align*}
\gamma\geq \left\lceil\frac{12}{\varepsilon}\right\rceil\geq 24.
\end{align*}
Hence $(\gamma d)^p \geq \gamma d \geq 24$ and $M=2m(1+\gamma d)\geq 50$.
Therefore,
\begin{align*}
\Pr[\forall \vec{w}\in \mathcal{T},\ X_{\vec{w}}=0]
\leq \frac{q^{r+1}}{N}
\leq 4\cdot 2^{24}\cdot 50^{-46}<0.01.
\end{align*}
So, conditioned on the event that $(\mathbf{A},\vec{s})$ is a $(p,\alpha,\gamma d,N,m',n')$-locally dense lattice, with probability at least $0.99$ over the choice of $\mathbf{R}$ there exists $\vec{w}\in \mathcal{T}\subseteq \mathcal{W}_s$ such that
\begin{align}
    \label{eq:congruence}
    \mathbf{R}\pi_{m'}(\vec{w})=-\mathbf{P}\pi_m(\vec{u}').
\end{align}
Fix such a vector $\vec{w}$.
Since $\vec{w}\in \mathcal{W}_s\subseteq \mathcal{L}(\mathbf{A})-\vec{s}$, there exists $\vec{y}\in\Z^{n'}$ such that $\vec{w} = \mathbf{A}\vec{y} - \vec{s}$ and $\lVert \vec{w} \rVert_p \leq \alpha \gamma d$.
Since $\widehat{\mathbf{P}}\vec{u}'+\widehat{\mathbf{R}}\vec{w} = 0\ (\mathrm{mod}\ q)$, this implies that there exists $\vec{z}\in\Z^r$ such that $\widehat{\mathbf{P}}\vec{u}'+\widehat{\mathbf{R}}\vec{w}+q\vec{z}=0$.
Hence $\vec{c}(\vec{x}', \vec{y}, 1, \vec{z})=(\vec{u}', \vec{w}, 1, \vec{0}_r)$.
Its $\ell_p$ norm satisfies
\begin{align*}
\lVert \vec{c}(\vec{x}', \vec{y}, 1, \vec{z}) \rVert_p^p
&= \lVert \vec{u}' \rVert_p^p + \lVert \vec{w} \rVert_p^p + 1 \\
&\le d^p + (\alpha \gamma d)^p + 1 \\
&\le 2d^p + \alpha^p\gamma^p d^p \\
&= \left(2+\alpha^p\gamma^p\right)d^p \\
&= \left(\frac{\gamma}{\gamma'}\right)^p d^p,
\end{align*}
where the last equality uses the definition of $\alpha$.
Hence $ \lVert \vec{c}(\vec{x}', \vec{y}, 1, \vec{z}) \rVert_p \le (\gamma/\gamma')d$.
Therefore the nonzero vector $L\vec{c}(\vec{x}', \vec{y}, 1, \vec{z}) \in \mathcal{L}(\widehat{\mathbf{B}})$ satisfies
\begin{align*}
\lVert L\vec{c}(\vec{x}', \vec{y}, 1, \vec{z}) \rVert_p
\le
L\frac{\gamma}{\gamma'}d
=
d'.
\end{align*}
Thus the output instance is a YES instance of $\gamma'$-$\GapSVP_p$.
Since the event that $(\mathbf{A},\vec{s})$ is $(p,\alpha,\gamma d,N,m',n')$-locally dense lattice has probability at least $0.99$ and, conditioned on that event, the probability over $\mathbf{R}$ that some $\vec{w}\in \mathcal{W}_s$ satisfies the congruence in \Cref{eq:congruence} is at least $0.99$, we conclude that
\begin{align*}
    \Pr[\text{output is a YES instance of } \gamma'\text{-}\GapSVP_p]\ge 0.99\cdot 0.99>\frac{2}{3}.
\end{align*}

\paragraph*{Soundness.}
Assume that $(\mathbf{B}, \vec{t}, d)$ is a NO instance of $\gamma$-$\GapCVP_p$.
We first show that, regardless of the choice of $\mathbf{R}$, every nonzero vector of $\mathcal{L}(\mathbf{B}')$ has $\ell_p$ norm strictly larger than $\gamma d$.
Let $\vec{c}=\vec{c}(\vec{x}, \vec{y}, a, \vec{z})\in \mathcal{L}(\mathbf{B}')\setminus\{0\}$ be an arbitrary nonzero lattice vector.
Suppose toward a contradiction that $\lVert \vec{c} \rVert_p\le \gamma d$.
Writing
\begin{align*}
\vec{c}=\bigl(\vec{u}, \vec{w}, a, \beta(\widehat{\mathbf{P}}\vec{u}+\widehat{\mathbf{R}}\vec{w}+q\vec{z})\bigr),
\end{align*}
where $\vec{u}=\mathbf{B}\vec{x}-a\vec{t}$ and $\vec{w}=\mathbf{A}\vec{y}-a\vec{s}$, we get
\begin{align*}
    \lVert \vec{u} \rVert_p^p+\lVert \vec{w} \rVert_p^p+|a|^p\le (\gamma d)^p.
\end{align*}
In particular, if $a\neq 0$, then by the NO promise for $\gamma$-$\GapCVP_p$,
\begin{align*}
    \lVert \vec{u} \rVert_p=\lVert \mathbf{B}\vec{x}-a\vec{t} \rVert_p>\gamma d,
\end{align*}
a contradiction.
Hence $a=0$.
If now $\vec{w}\neq 0$, then because $\lambda_1^{(p)}(\mathcal{L}(\mathbf{A}))>\gamma d$, we get $\lVert \vec{w} \rVert_p>\gamma d$, again a contradiction.
Therefore $a=0$ and $\vec{w}=0$.
So every short vector must be of the form
\begin{align*}
    \vec{c}(\vec{x}, \vec{y}, a, \vec{z})=(\vec{u}, 0, 0,       \beta(\widehat{\mathbf{P}}\vec{u}+q\vec{z})).
\end{align*}
If $\vec{u}=0$, then since $\vec{c} \neq 0$, $\vec{z}\neq 0$.
Hence
\begin{align*}
    \lVert \vec{c}(\vec{x}, \vec{y}, a, \vec{z}) \rVert_p\ge \beta q>\gamma d,
\end{align*}
a contradiction.
So we may assume $\vec{u}\neq 0$.
Since $\lVert \vec{u} \rVert_p\le \gamma d$, property of $\mathbf{P}$ implies $\mathbf{P}\pi_m(\vec{u})\neq 0$.
Equivalently,
\begin{align*}
\widehat{\mathbf{P}}\vec{u}+q\vec{z}\neq 0
\qquad\text{for every } \vec{z}\in\Z^r.
\end{align*}
Therefore the last block of $\vec{c}(\vec{x}, \vec{y}, a, \vec{z})$ is nonzero, so
\begin{align*}
\lVert \vec{c}(\vec{x}, \vec{y}, a, \vec{z}) \rVert_p\ge \beta>\gamma d,
\end{align*}
a contradiction.
We conclude that every nonzero vector of $\mathcal{L}(\mathbf{B}')$ has $\ell_p$ norm strictly larger than $\gamma d$.
Since $\mathcal{L}(\widehat{\mathbf{B}})=L\mathcal{L}(\mathbf{B}')$, every nonzero vector of $\mathcal{L}(\widehat{\mathbf{B}})$ has $\ell_p$ norm strictly larger than $L\gamma d = \gamma' d'$.
Therefore the output is a NO instance of $\gamma'$-$\GapSVP_p$.
This holds for every pair $(\mathbf{A},\vec{s})$ from \Cref{thm:algo-ldl} and every choice of $\mathbf{R}$, and thus the reduction has one-sided error.
\end{proof}

From the above, we immediately obtain \Cref{thm:SVP-is-W[1]hard}.
\begin{proof}[Proof of \Cref{thm:SVP-is-W[1]hard}]
Let $\gamma \in [1,2^{1/p})$ be the target approximation factor.
Choose any rational number $\eta$ such that $\gamma < \eta < 2^{1/p}$.
By \Cref{thm:CVP-is-hard,thm:cvp-to-svp}, the problem $\eta$-$\GapSVP_p$ is $\W[1]$-hard under randomized $\FPT$ many-one reductions with one-sided error.

Now observe that every YES instance of $\eta$-$\GapSVP_p$ is also a YES instance of $\gamma$-$\GapSVP_p$, and every NO instance of $\eta$-$\GapSVP_p$ is also a NO instance of $\gamma$-$\GapSVP_p$, because $\eta>\gamma$.
Hence any algorithm for $\gamma$-$\GapSVP_p$ would also solve $\eta$-$\GapSVP_p$ on the same input.

Therefore $\gamma$-$\GapSVP_p$ is $\W[1]$-hard under randomized $\FPT$ many-one reductions with one-sided error.
\end{proof}
\section{Deterministic Hardness under the Circuit Lower Bound}
\label{sec:derandomize}
In this section, we prove a general way to convert randomized hardness reductions with one-sided error into deterministic reductions under a circuit lower-bound assumption.
The argument applies when the target problem has an OR function, which allows several candidate outputs to be combined into a single instance.
We first present the statement for promise problems and then give its parameterized analogue.

\subsection{Conditional Hardness for Promise Problems}
We first treat ordinary promise problems.
The result shows that, under a circuit lower-bound assumption, randomized hardness reductions with one-sided error can be made deterministic when the target problem has a polynomial-time OR function.

\begin{theorem}
\label{thm:derandomize-np}
Assume that there exists an $\varepsilon>0$ such that
\begin{align*}
    \E \nsubseteq i.o.\text{-}\mathrm{NSIZE}(2^{\varepsilon n}).
\end{align*}
Let $\Pi=(\Pi_{\mathrm{YES}},\Pi_{\mathrm{NO}})$ and $\Pi'=(\Pi'_{\mathrm{YES}},\Pi'_{\mathrm{NO}})$ be promise problems.
Suppose that
\begin{enumerate}
    \item $\Pi$ is $\NP$-hard under deterministic polynomial-time many-one reductions;
    \item $\Pi'_{\mathrm{YES}}\in \NP$;
    \item $\Pi'$ has a polynomial-time OR function;
    \item there exists a randomized many-one reduction with one-sided error from $\Pi$ to $\Pi'$.
\end{enumerate}
Then $\Pi'$ is $\NP$-hard under deterministic polynomial-time many-one reductions.
\end{theorem}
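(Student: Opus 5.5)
We will build a deterministic reduction from an arbitrary $\NP$ language $L$ to $\Pi'$. Compose the deterministic reduction $f$ from $L$ to $\Pi$ (hypothesis 1) with the one-sided randomized reduction $R$ from $\Pi$ to $\Pi'$ (hypothesis 4). On input $w$, let $x:=f(w)$, let $n:=|x|$, and let $\mu=\mathrm{poly}(n)$ be the number of random bits of $R$ on $x$. The plan is to replace the random tape by all outputs of a PRG from \Cref{thm:prg-under-assumption}. We run $R(x;G(\sigma))$ for every seed $\sigma$ and feed the resulting polynomially many instances into the polynomial-time OR function $C$ of $\Pi'$ (hypothesis 3). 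The output of the deterministic reduction is $C\bigl(R(x;G(\sigma_1)),\ldots,R(x;G(\sigma_S))\bigr)$, where $S=2^{a\log \mu'}=\mathrm{poly}(n)$ for an appropriate output length $\mu'\ge\mu$ (we pad, or take a prefix of, $G$'s output).

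Soundness is immediate and uses no pseudorandomness. If $w\notin L$, then $x\in\Pi_{\mathrm{NO}}$, so by one-sidedness $R(x;r)\in\Pi'_{\mathrm{NO}}$ for \emph{every} $r$, in particular for every $G(\sigma)$. The OR function then outputs a NO instance.

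Completeness is where the circuit lower bound enters. Fix $x\in\Pi_{\mathrm{YES}}$ and consider $D_x(r):=1$ iff $R(x;r)\in\Pi'_{\mathrm{YES}}$. Since $\Pi'_{\mathrm{YES}}\in\NP$ (hypothesis 2), $D_x$ is computed as follows: hardwire $x$, compute $y=R(x;r)$ by a polynomial-size circuit (the standard simulation of polynomial-time computation), then make a single $\SAT$ query via a Cook--Levin reduction of the $\NP$ statement ``$y\in\Pi'_{\mathrm{YES}}$''. This is a nonadaptive $\SAT$-oracle circuit of size $\mathrm{poly}(n,\mu)\le \mu'^{c}$ for a fixed constant $c$, once we choose $\mu'\ge\max(\mu,n)$ polynomially related to $n$ so that the input length dominates the size. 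Then $\Pr_r[D_x(r)=1]\ge 2/3$, and by \Cref{thm:prg-under-assumption}, $\Pr_\sigma[D_x(G(\sigma))=1]\ge 2/3-1/\mu'^{c}>0$. So some seed yields an instance in $\Pi'_{\mathrm{YES}}$, and the OR function outputs a YES instance. The other outputs, possibly outside the promise, are harmless by \Cref{def:or-promise}. The running time is $\mathrm{poly}(n)$ since $G$ is computable in $\mathrm{poly}(\mu'^{c})$ time and there are $\mathrm{poly}(n)$ seeds. Finally, $\NP$-hardness is transitive through the deterministic $f$, so $\Pi'$ is $\NP$-hard.

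The main obstacle is the parameter bookkeeping in the completeness step. We need a \emph{single} constant $c$, independent of $x$, bounding the size of the oracle circuit $D_x$ in terms of its input length. This must hold even though $D_x$ is nonuniform (it hardwires $x$), which is fine because the PRG fools every circuit in the class. We also need the PRG to be applied at the correct length $\mu'$ with $\mu'$ large enough (``sufficiently large $n$''); small inputs are handled by hardcoding. We must also check that the $\NP$ verification for $\Pi'_{\mathrm{YES}}$ yields one nonadaptive oracle gate. It does: one $\SAT$ query on a formula computed from $y$ by an oracle-free circuit, so every input-to-output path passes through at most one oracle gate.
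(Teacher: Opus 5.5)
Your proposal is correct and follows essentially the same route as the paper. Both arguments circuitize the success predicate $r\mapsto[R(x;r)\in\Pi'_{\mathrm{YES}}]$ with a single nonadaptive $\SAT$ query (using $\Pi'_{\mathrm{YES}}\in\NP$), use the PRG to hit an accepting tape on YES inputs, combine the outputs over all seeds with the OR function, and rely on one-sidedness for NO inputs. The only cosmetic differences are that you compose with the reduction from an $\NP$ language up front and dispose of small lengths by hardcoding or padding, whereas the paper first builds a deterministic reduction $\Pi\to\Pi'$ and, when $m<m_0$, simply enumerates all $2^{\mu}$ random tapes.
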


For a fixed input $x$, consider the Boolean function that accepts exactly those random tapes for which the reduction outputs a YES instance of the target problem.
The next lemma shows that this function is computable by a polynomial-size nonadaptive $\SAT$-oracle circuit whenever the YES language of the target problem belongs to $\NP$.
\begin{lemma}
\label{lem:success-predicate-np}
Let $\Pi=(\Pi_{\mathrm{YES}},\Pi_{\mathrm{NO}})$ and $\Pi'=(\Pi'_{\mathrm{YES}},\Pi'_{\mathrm{NO}})$ be promise problems.
Assume that $\Pi'_{\mathrm{YES}}\in \NP$ and that there exists a randomized many-one reduction with one-sided error $R:\Pi \to \Pi'$.
Then there exists a fixed polynomial $P$ such that the following holds.

For every promised input $x$ of $\Pi$, let $n:=|x|$, let $\mu$ be the number of random bits used by $R$ on input $x$, let $m:=n+\mu$, and define
\begin{align*}
    \phi_x(r)=1 \iff R(x;r)\in \Pi'_{\mathrm{YES}},
\end{align*}
where only the first $\mu$ bits of $r$ are used as the random tape of $R$.
Then $\phi_x$ is computable by a nonadaptive $\SAT$-oracle circuit of size at most $P(m)$.
\end{lemma}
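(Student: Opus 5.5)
The plan is to compose two standard simulations: the reduction $R$ viewed as a polynomial-time computation with hardwired input $x$, and the $\NP$ verifier for $\Pi'_{\mathrm{YES}}$, turned into a single $\SAT$ query through the Cook--Levin theorem. Since $R$ runs in polynomial time, there is a polynomial $p_R$ with $\mu\le p_R(n)$ and with $|R(x;r)|\le p_R(n)$ for every random tape $r$. Since $\Pi'_{\mathrm{YES}}\in\NP$, fix a polynomial-time verifier $V$ and a polynomial $p_V$ such that $y\in\Pi'_{\mathrm{YES}}$ if and only if there exists $w\in\{0,1\}^{p_V(|y|)}$ with $V(y,w)=1$. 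Note that the polynomial $P$ must be chosen once and for all, independent of $x$; this is possible because $n\le m$.

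The circuit $\phi_x$ is built in three stages. First, hardwire $x$ and use the standard simulation of polynomial-time Turing machines by circuits to get a Boolean circuit $C_x$ of size $\mathrm{poly}(n+\mu)$ that maps the first $\mu$ bits of $r$ to a fixed-length encoding of $R(x;r)$; outputs of varying length are padded to length $p_R(n)$ together with a length indicator. Second, apply Cook--Levin to the verifier $V$ run on an input of length at most $p_R(n)$ with a witness of length at most $p_V(p_R(n))$. This gives a circuit $F$ of polynomial size that maps any string $y$ to a CNF formula $F(y)$, encoded in a fixed format, which is satisfiable if and only if some $w$ makes $V(y,w)=1$, that is, if and only if $y\in\Pi'_{\mathrm{YES}}$. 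Third, $\phi_x$ computes $F(C_x(r))$ and feeds it to a single $\SAT$ oracle gate, whose output is the output of $\phi_x$.

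Each input-to-output path crosses exactly one oracle gate, so the circuit is nonadaptive. Its size is polynomial in $n+\mu=m$. Since $p_R$ and $p_V$ do not depend on $x$, one fixed polynomial $P$ bounds the size for every $x$.

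The main point needing care is making the formula encoding uniform. The oracle gate has a fixed arity, so the formula $F(y)$ must have fixed length determined by $n$ alone. This is handled by fixing the verifier's input length to the padded length $p_R(n)$ and letting $V$ first decode the length indicator, so the Cook--Levin tableau has the same shape for every $r$. Everything else is routine bookkeeping of polynomial bounds.
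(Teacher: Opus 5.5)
Your proposal is correct and follows the same route as the paper: hardwire $x$, simulate $R(x;\cdot)$ by a polynomial-size circuit, map the resulting instance to a $\SAT$ instance, and make a single oracle query. The only difference is that the paper directly invokes a polynomial-time many-one reduction $f$ from $\Pi'_{\mathrm{YES}}$ to $\SAT$, whereas you build this reduction via Cook--Levin and also handle the fixed arity of the oracle gate by padding, a detail the paper leaves implicit.
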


\begin{proof}
Fix a promised input $x$ of $\Pi$, and define $n,\mu,m$ as above.
Given $r\in\{0,1\}^m$, we can compute $R(x;r)$ deterministically in time polynomial in $n+\mu$, and hence in polynomial time in $m$.
Since $x$ is fixed and $n\le m$, hardwiring $x$ into the circuit increases the size by at most a polynomial in $m$.
Therefore there exists a polynomial-size deterministic circuit which computes $R(x;r)$ from $r$.

Since $\Pi'_{\mathrm{YES}}\in \NP$, there exists a polynomial-time many-one reduction
\begin{align*}
    f:\Pi'_{\mathrm{YES}}\le_m^p \SAT.
\end{align*}
Hence
\begin{align*}
    \phi_x(r)=1
    \iff R(x;r)\in \Pi'_{\mathrm{YES}}
    \iff f(R(x;r))\in \SAT.
\end{align*}
Thus $\phi_x$ can be computed by first deterministically computing $R(x;r)$, then deterministically computing $f(R(x;r))$, and finally making one oracle query to $\SAT$.
Since the resulting oracle circuit has at most one oracle gate on every input-to-output path, it is nonadaptive.
Therefore there exists a fixed polynomial $P$ such that $\phi_x$ is computable by a nonadaptive $\SAT$-oracle circuit of size at most $P(m)$.
\end{proof}

The previous lemma shows that, for each fixed input $x$, the predicate that checks whether a random tape makes the reduction output a YES instance is computable by a small nonadaptive $\SAT$-oracle circuit.
For a YES input $x$, one-sidedness guarantees that this predicate accepts a constant fraction of truly random tapes.
The next lemma shows that, under the circuit lower-bound assumption, the outputs of a pseudorandom generator still contain at least one accepting tape.
\begin{lemma}
\label{lem:hitting-seed-np}
Assume that there exists an $\varepsilon>0$ such that
\begin{align*}
    \E \nsubseteq i.o.\text{-}\mathrm{NSIZE}(2^{\varepsilon n}).
\end{align*}
Let $\Pi=(\Pi_{\mathrm{YES}},\Pi_{\mathrm{NO}})$ and $\Pi'=(\Pi'_{\mathrm{YES}},\Pi'_{\mathrm{NO}})$ be promise problems.
Assume that $\Pi'_{\mathrm{YES}}\in \NP$ and that there exists a randomized many-one reduction with one-sided error $R:\Pi \to \Pi'$.
Then there exist constants $c>1$, $a>1$, an integer $m_0$, and a family of functions
\begin{align*}
    G_m:\{0,1\}^{a\log m}\to\{0,1\}^m
    \qquad (m\ge m_0)
\end{align*}
such that the following hold.

First, there is a deterministic algorithm which, on input $m\ge m_0$ and $s\in\{0,1\}^{a\log m}$, computes $G_m(s)$ in time polynomial in $m$.
Second, for every $m\ge m_0$, the function $G_m$ is a $1/m^c$-PRG for $\mathrm{SIZE}^{\SAT}_{\|}(m^c)$.

Moreover, for every YES instance $x\in \Pi_{\mathrm{YES}}$, let $n:=|x|$, let $\mu$ be the number of random bits used by $R$ on input $x$, let $m:=n+\mu$, and define
\begin{align*}
    \phi_x(r)=1 \iff R(x;r)\in \Pi'_{\mathrm{YES}}.
\end{align*}
If $m\ge m_0$, then there exists a seed $s\in\{0,1\}^{a\log m}$ such that $\phi_x(G_m(s))=1$.
\end{lemma}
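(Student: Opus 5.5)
Let $P$ be the polynomial from \Cref{lem:success-predicate-np}, and choose a constant $c>1$ with $P(m)\le m^c$ for all sufficiently large $m$ and $c\ge 1$. Apply \Cref{thm:prg-under-assumption} with this $c$ to obtain $a>1$ and, for all sufficiently large $m$, a $1/m^c$-PRG $G_m:\{0,1\}^{a\log m}\to\{0,1\}^m$ for $\mathrm{SIZE}^{\SAT}_{\|}(m^c)$, computable in time $\mathrm{poly}(m^c)=\mathrm{poly}(m)$. Then take $m_0$ large enough that three things hold for all $m\ge m_0$: the PRG guarantee, $P(m)\le m^c$, and $1/m^c<2/3$. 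This gives the first two claims directly. If the seed length $a\log m$ is not an integer, use $\lceil a\log m\rceil$ and absorb it into $a$.

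For the ``moreover'' part, fix $x\in\Pi_{\mathrm{YES}}$ with $m=n+\mu\ge m_0$. By \Cref{lem:success-predicate-np}, $\phi_x$ is computed by a nonadaptive $\SAT$-oracle circuit of size at most $P(m)\le m^c$ on $m$ input bits, reading only the first $\mu$ bits. So $\phi_x\in\mathrm{SIZE}^{\SAT}_{\|}(m^c)$ at input length $m$. Since the first $\mu$ bits of $U_m$ are uniform on $\{0,1\}^\mu$, \Cref{def:one-sided-red} gives $\Pr_{y\leftarrow U_m}[\phi_x(y)=1]\ge 2/3$. The PRG property then yields
\begin{align*}
\Pr_{s}[\phi_x(G_m(s))=1]\ge 2/3-1/m^c>0,
\end{align*}
so some seed $s$ satisfies $\phi_x(G_m(s))=1$.

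The main subtlety is matching the PRG's class definition. Membership in $\mathrm{SIZE}^{\SAT}_{\|}(m^c)$ is stated for families with an $O(\cdot)$ constant, whereas we need every individual circuit of size at most $m^c$ to be fooled. I would handle this by taking $c$ strictly larger than $\deg P$, so that $P(m)\le m^{c}$ holds with room to spare. I would also read \Cref{thm:prg-under-assumption} as fooling every nonadaptive $\SAT$-oracle circuit of size at most $m^c$; that is, $\mathcal C_m$ in \Cref{def:prg} is the set of functions computed by such circuits. A single circuit per input $x$ is then a legitimate member of $\mathcal C_m$, and no uniformity in $x$ is needed.
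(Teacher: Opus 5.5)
Your proposal is correct and follows essentially the same argument as the paper. You choose $c$ with $P(m)\le m^c$, invoke \Cref{thm:prg-under-assumption}, enlarge $m_0$ so that $1/m^c<2/3$, and use the acceptance probability of at least $2/3$ of $\phi_x$ to conclude that some seed is accepted. The paper phrases this last step as a contradiction, whereas you use the direct bound $2/3-1/m^c>0$. Your remarks on rounding the seed length and on reading $\mathrm{SIZE}^{\SAT}_{\|}(m^c)$ as fooling each individual circuit of size at most $m^c$ clarify points the paper leaves implicit.
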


\begin{proof}
By \Cref{lem:success-predicate-np}, there exists a fixed polynomial $P$ such that, for every promised input $x$, the function $\phi_x$ is computable by a nonadaptive $\SAT$-oracle circuit of size at most $P(m)$.

Choose a constant $c>1$ such that $P(t)\le t^c$ for all sufficiently large $t$.
Apply \Cref{thm:prg-under-assumption} with this constant $c$.
Then there exist a constant $a>1$, an integer threshold $m_0$, and a family of functions
\begin{align*}
    G_m:\{0,1\}^{a\log m}\to\{0,1\}^m
    \qquad (m\ge m_0)
\end{align*}
such that this family is computable in time polynomial in $m$, and for every $m\ge m_0$, the function $G_m$ is a $1/m^c$-PRG for $\mathrm{SIZE}^{\SAT}_{\|}(m^c)$.
By increasing $m_0$ if necessary, assume also that $P(m)\le m^c$ and $1/m^c<2/3$ for every $m\ge m_0$.

Now fix a YES instance $x\in \Pi_{\mathrm{YES}}$, and define $n,\mu,m,\phi_x$ as in the statement.
Since $P(m)\le m^c$, the predicate $\phi_x$ belongs to $\mathrm{SIZE}^{\SAT}_{\|}(m^c)$.
Since $x\in \Pi_{\mathrm{YES}}$ and $R$ has one-sided error,
\begin{align*}
    \Pr_{r\leftarrow U_m}[\phi_x(r)=1]
    =
    \Pr_{u\leftarrow U_\mu}[R(x;u)\in \Pi'_{\mathrm{YES}}]
    \ge \frac23.
\end{align*}
Assume for contradiction that $\phi_x(G_m(s))=0$ for every seed $s\in\{0,1\}^{a\log m}$.
Then
\begin{align*}
    \Pr_{s\leftarrow U_{a\log m}}[\phi_x(G_m(s))=1]=0,
\end{align*}
and therefore
\begin{align*}
    \left|
        \Pr_{r\leftarrow U_m}[\phi_x(r)=1]
        -
        \Pr_{s\leftarrow U_{a\log m}}[\phi_x(G_m(s))=1]
    \right|
    \ge \frac23.
\end{align*}
For every $m\ge m_0$, we have $1/m^c<2/3$, contradicting the pseudorandomness of $G_m$.
Hence there exists a seed $s$ such that $\phi_x(G_m(s))=1$.
\end{proof}

We now turn the randomized reduction with one-sided error into a deterministic reduction.
For inputs with sufficiently large combined input-and-randomness length, we run the reduction on all outputs of the pseudorandom generator.
For smaller inputs, we simply enumerate all random tapes.
We then combine the resulting instances using the OR function of the target problem.
The previous lemma guarantees correctness on YES inputs, while one-sidedness guarantees correctness on NO inputs.
\begin{lemma}
\label{lem:derandomize-np-main}
Assume that there exists an $\varepsilon>0$ such that
\begin{align*}
    \E \nsubseteq i.o.\text{-}\mathrm{NSIZE}(2^{\varepsilon n}).
\end{align*}
Let $\Pi=(\Pi_{\mathrm{YES}},\Pi_{\mathrm{NO}})$ and $\Pi'=(\Pi'_{\mathrm{YES}},\Pi'_{\mathrm{NO}})$ be promise problems.
Suppose that
\begin{enumerate}
    \item $\Pi'_{\mathrm{YES}}\in \NP$;
    \item $\Pi'$ has a polynomial-time OR function;
    \item there exists a randomized many-one reduction with one-sided error $R:\Pi \to \Pi'$.
\end{enumerate}
Then there exists a deterministic polynomial-time many-one reduction $M:\Pi \to \Pi'$.
\end{lemma}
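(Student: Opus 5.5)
The reduction $M$ runs $R$ on a polynomial-size list of random tapes and merges the resulting instances with the OR function of $\Pi'$. Let $C$ denote the polynomial-time OR function of $\Pi'$. Let $c,a,m_0$ and the generators $G_m$ be as given by \Cref{lem:hitting-seed-np}.

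On input $x$, $M$ proceeds as follows. It computes $n:=|x|$ and the number $\mu$ of random bits used by $R$ on $x$; this is computable in polynomial time by the standing convention. It then sets $m:=n+\mu$.
\begin{itemize}
\item If $m\ge m_0$, then for each seed $s\in\{0,1\}^{a\log m}$, $M$ computes $r_s:=G_m(s)$ and $y_s:=R(x;r_s)$, using only the first $\mu$ bits of $r_s$. There are $2^{a\log m}=m^a$ seeds, and each $G_m(s)$ takes time $\mathrm{poly}(m)$.
\item If $m<m_0$, then $M$ enumerates every tape $u\in\{0,1\}^{\mu}$ and sets $y_u:=R(x;u)$. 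Since $\mu<m_0$ is bounded by a constant, this takes constant times polynomial time.
\end{itemize}
Finally, $M$ outputs $C(\langle y_s\rangle_s)$. Because $\mu\le\mathrm{poly}(n)$, we have $m=\mathrm{poly}(n)$. The tuple therefore has polynomially many entries, each of polynomial length, and $C$ runs in polynomial time on it. So $M$ is deterministic and runs in polynomial time.

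\emph{Correctness.} First suppose $x\in\Pi_{\mathrm{YES}}$.
\begin{itemize}
\item If $m\ge m_0$, \Cref{lem:hitting-seed-np} gives a seed $s$ with $\phi_x(G_m(s))=1$, that is, $y_s\in\Pi'_{\mathrm{YES}}$.
\item If $m<m_0$, then since $\Pr_u[R(x;u)\in\Pi'_{\mathrm{YES}}]\ge 2/3>0$, some enumerated tape $u$ has $y_u\in\Pi'_{\mathrm{YES}}$.
\end{itemize}
In both cases at least one input to $C$ is a YES instance, so by the first condition of \Cref{def:or-promise} the output lies in $\Pi'_{\mathrm{YES}}$. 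The other entries may fall outside the promise, which \Cref{def:or-promise} allows.

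Now suppose $x\in\Pi_{\mathrm{NO}}$. One-sidedness gives $R(x;r)\in\Pi'_{\mathrm{NO}}$ for every tape $r$, including all pseudorandom and all enumerated ones. Hence every $y$ lies in $\Pi'_{\mathrm{NO}}$, and the second condition of \Cref{def:or-promise} gives $C(\cdot)\in\Pi'_{\mathrm{NO}}$.

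The main subtlety is already absorbed into \Cref{lem:hitting-seed-np}. Its PRG guarantee only applies when $m$ exceeds a threshold, and the distinguisher $\phi_x$ must fit within the size bound $m^c$; this is why the small-$m$ case is handled separately by brute force. The remaining care point is checking that the total work, including $\mu$ and $|G_m(s)|=m$, stays polynomial in $n$, so that the tuple handed to $C$ has polynomial size.
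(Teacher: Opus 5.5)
Your proposal is correct and follows essentially the same route as the paper: run $R$ on all outputs of $G_m$ when $m\ge m_0$ (or on all $2^\mu$ tapes when $m<m_0$), and combine the results with the OR function. Correctness on YES inputs comes from \Cref{lem:hitting-seed-np} (or from the $2/3$ success probability in the brute-force case), and correctness on NO inputs comes from one-sidedness.
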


\begin{proof}
Let $C$ be a polynomial-time OR function for $\Pi'$.
By \Cref{lem:hitting-seed-np}, there exist constants $c>1$, $a>1$, an integer $m_0$, and a fixed family
\begin{align*}
    G_m:\{0,1\}^{a\log m}\to\{0,1\}^m
    \qquad (m\ge m_0)
\end{align*}
such that the conclusion of that lemma holds.

We define $M$ as follows.
On input $x$, let $n:=|x|$, let $\mu=\mu(x)$ be the number of random bits used by $R$ on input $x$, and let $m:=n+\mu$.

Define an ordered index set $S(x)$ and strings $r_s$ for $s\in S(x)$ as follows.
If $m<m_0$, let $S(x):=\{0,1\}^\mu$ and set $r_s:=s$ for every $s\in S(x)$.
If $m\ge m_0$, let $S(x):=\{0,1\}^{a\log m}$, and for each $s\in S(x)$ set $r_s:=G_m(s)$.

For each $s\in S(x)$, compute $y_s:=R(x;r_s)$, where only the first $\mu$ bits of $r_s$ are used as the random tape of $R$.
Finally, output
\begin{align*}
    M(x):=C\bigl(y_s \ :\ s\in S(x)\bigr),
\end{align*}
where the tuple is listed in lexicographic order of $S(x)$.

We verify that $M$ is computable in deterministic polynomial time.
Since $R$ is polynomial-time, $\mu(x)$ is bounded by a polynomial in $n$, and hence so is $m$.
If $m<m_0$, then $|S(x)|=2^\mu\le 2^{m_0}$, which is a constant.
If $m\ge m_0$, then $|S(x)|=2^{a\log m}=m^a$, which is polynomial in $n$.
Each call to $G_m$ (in the case $m\ge m_0$), each call to $R$, and the final call to $C$ all run in polynomial time.
Therefore $M$ is computable in deterministic polynomial time.

We next verify correctness.
Assume first that $x\in\Pi_{\mathrm{YES}}$.
If $m<m_0$, then, since $R$ has one-sided error,
\begin{align*}
    \Pr_{u\leftarrow U_\mu}[R(x;u)\in \Pi'_{\mathrm{YES}}]\ge \frac23.
\end{align*}
Hence there exists some $u^\star\in\{0,1\}^\mu=S(x)$ such that $R(x;u^\star)\in \Pi'_{\mathrm{YES}}$.
Therefore one of the inputs to $C$ belongs to $\Pi'_{\mathrm{YES}}$, and so $M(x)\in \Pi'_{\mathrm{YES}}$.

Assume now that $m\ge m_0$.
By \Cref{lem:hitting-seed-np}, there exists a seed $s^\star\in S(x)$ such that $y_{s^\star}=R(x;G_m(s^\star))\in \Pi'_{\mathrm{YES}}$.
Therefore, by the definition of polynomial-time OR function, $M(x)\in \Pi'_{\mathrm{YES}}$.

Assume next that $x\in\Pi_{\mathrm{NO}}$.
For each $s\in S(x)$, let $u_s$ denote the first $\mu$ bits of $r_s$.
By one-sidedness of $R$, for every random tape $u\in\{0,1\}^\mu$, $R(x;u)\in \Pi'_{\mathrm{NO}}$.
Hence, for every $s\in S(x)$, $y_s=R(x;u_s)\in \Pi'_{\mathrm{NO}}$.
Therefore, by the definition of polynomial-time OR function, $M(x)\in \Pi'_{\mathrm{NO}}$.

Thus $M$ is a deterministic polynomial-time many-one reduction from $\Pi$ to $\Pi'$.
\end{proof}

Thus, we obtain the desired claim.
\begin{proof}[Proof of \Cref{thm:derandomize-np}]
By \Cref{lem:derandomize-np-main}, there exists a deterministic polynomial-time many-one reduction $M:\Pi \to \Pi'$.
Since $\Pi$ is $\NP$-hard under deterministic polynomial-time many-one reductions, it follows that $\Pi'$ is $\NP$-hard under deterministic polynomial-time many-one reductions.
\end{proof}

\subsection{Conditional Hardness of Parameterized Promise Problems}
We next treat parameterized promise problems.
The result shows that, under the same circuit lower-bound assumption, randomized $\FPT$ hardness reductions with one-sided error can be made deterministic when the target problem has an FPT OR function and the random tapes leading to YES outputs can be recognized by small nonadaptive $\SAT$-oracle circuits.

\begin{theorem}
\label{thm:derandomize-fpt}
Assume that there exists an $\varepsilon>0$ such that
\begin{align*}
    \E \nsubseteq i.o.\text{-}\mathrm{NSIZE}(2^{\varepsilon n}).
\end{align*}
Let $\Pi=(\Pi_{\mathrm{YES}},\Pi_{\mathrm{NO}})$ and $\Pi'=(\Pi'_{\mathrm{YES}},\Pi'_{\mathrm{NO}})$ be parameterized promise problems.
Suppose that
\begin{enumerate}
    \item $\Pi$ is $\W[1]$-hard under deterministic $\FPT$ many-one reductions;
    \item $\Pi'$ has an FPT OR function;
    \item there exists a randomized $\FPT$ many-one reduction with one-sided error from $\Pi$ to $\Pi'$;
    \item the reduction in item~(3) has SAT-circuitizable success predicates.
\end{enumerate}
Then $\Pi'$ is $\W[1]$-hard under deterministic $\FPT$ many-one reductions.
\end{theorem}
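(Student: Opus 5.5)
The proof will follow the non-parameterized argument (\Cref{lem:hitting-seed-np,lem:derandomize-np-main}), adapted to $\FPT$ time and parameter bounds. Let $R$ be the reduction from item~(3), with running time $T(k)\cdot|x|^{c_0}$ and output parameter bound $k'\le g(k)$. Let $P$ be the polynomial from \Cref{def:sat-circuitizable}, and let $C$ be the FPT OR function of $\Pi'$, with time bound $T_C(\kappa)N^{c_C}$ and parameter bound $g_C$. On input $(x,k)$, set $n:=|\langle x,k\rangle|$ and $m:=n+\mu$, where $\mu\le T(k)\cdot n^{c_0}$ is the number of random bits. Choose $c>1$ with $P(t)\le t^c$ for large $t$. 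Apply \Cref{thm:prg-under-assumption} to obtain $a$, $m_0$, and generators $G_m:\{0,1\}^{a\log m}\to\{0,1\}^m$ that are $1/m^c$-PRGs for $\mathrm{SIZE}^{\SAT}_{\|}(m^c)$. Enlarge $m_0$ so that $P(m)\le m^c$ and $1/m^c<2/3$ for all $m\ge m_0$.

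The deterministic reduction $M$ works as follows. If $m<m_0$, enumerate all $2^\mu\le 2^{m_0}$ tapes. Otherwise, enumerate the seeds $s\in\{0,1\}^{a\log m}$ and use the tapes $r_s:=G_m(s)$, taking the first $\mu$ bits. In either case compute $(y_s,k'_s):=R((x,k);r_s)$ for each tape and output $C\bigl((y_s,k'_s)_s\bigr)$.

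Correctness on YES inputs is the argument of \Cref{lem:hitting-seed-np}. Since $\phi_{x,k}$ has a nonadaptive $\SAT$-oracle circuit of size $P(m)\le m^c$ and acceptance probability at least $2/3$, some seed $s$ satisfies $\phi_{x,k}(G_m(s))=1$. Thus some $(y_s,k'_s)\in\Pi'_{\mathrm{YES}}$, and $C$ outputs a YES instance. In the small case, some tape in the exhaustive list succeeds. On NO inputs, one-sidedness gives $(y_s,k'_s)\in\Pi'_{\mathrm{NO}}$ for every tape, so $C$ outputs a NO instance. Composing $M$ with the deterministic $\FPT$ reductions witnessing item~(1) shows that $\Pi'$ is $\W[1]$-hard.

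The main obstacle is the resource accounting, because $m$ is no longer polynomial in $n$: it is bounded by $T(k)n^{c_0}+n$. Enumeration is still fine. The number of seeds is $m^a\le (T(k)+1)^a n^{a(c_0+1)}$, which is $\FPT$. Each $G_m$ evaluation costs $\mathrm{poly}(m)$, which is also $\FPT$, and each call to $R$ costs $T(k)n^{c_0}$. The tuple passed to $C$ has total length $N\le m^a\cdot T(k)n^{c_0}$. Every $k'_s$ is at most $g(k)$, so $\kappa\le g(k)$, and $C$ runs in time $T_C(g(k))\cdot N^{c_C}$, which is again of the form $f(k)\cdot\mathrm{poly}(n)$ after using \Cref{rem:fpt-time-encoding}. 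The output parameter is at most $g_C(g(k))$, which is computable in $k$. Here one may take $g$ and $T_C$ monotone without loss of generality, replacing each by its running maximum.

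A second subtlety is that \Cref{thm:prg-under-assumption} must be applied with a single constant $c$ independent of $k$. This is exactly what \Cref{def:sat-circuitizable} provides: it requires a fixed polynomial $P$ in $m$, which already absorbs $T(k)$ through $\mu$. I would also make $\mu$ depend only on the input, and be computable within the time bound as the paper assumes, so that $M$ can determine $m$, and hence whether $m<m_0$, before enumerating.
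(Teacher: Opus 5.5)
Your proposal is correct and follows the paper's proof essentially step for step: the same PRG hitting-seed argument, the same construction (exhaustive tapes when $m<m_0$, otherwise PRG seeds, combined by the FPT OR function), and the same $\FPT$ time and parameter accounting. Your running-maximum remark covers a detail the paper uses implicitly when it writes $T_{\mathrm{OR}}(g_R(k))$ and $g_{\mathrm{OR}}(g_R(k))$, though the functions that need to be made monotone are $T_C$ and $g_C$, not $g$.
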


For a fixed parameterized input $(x,k)$, consider the Boolean function that accepts exactly those random tapes for which the reduction outputs a YES instance of the target problem.
In the parameterized setting, we assume that this function can be computed by a polynomial-size nonadaptive $\SAT$-oracle circuit.
The next lemma records this assumption in the notation used below.
\begin{lemma}
\label{lem:success-predicate-fpt}
Let $\Pi=(\Pi_{\mathrm{YES}},\Pi_{\mathrm{NO}})$ and $\Pi'=(\Pi'_{\mathrm{YES}},\Pi'_{\mathrm{NO}})$ be parameterized promise problems.
Assume that there exists a randomized $\FPT$ many-one reduction with one-sided error $R:\Pi \to \Pi'$ and that $R$ has SAT-circuitizable success predicates.
Then there exists a fixed polynomial $P$ such that the following holds.

For every promised input $(x,k)$ of $\Pi$, let $n:=|\langle x,k\rangle|$, let $\mu$ be the number of random bits used by $R$ on input $(x,k)$, let $m:=n+\mu$, and define
\begin{align*}
    \phi_{x,k}(r)=1 \iff R((x,k);r)\in \Pi'_{\mathrm{YES}},
\end{align*}
where only the first $\mu$ bits of $r$ are used as the random tape of $R$.
Then $\phi_{x,k}$ is computable by a nonadaptive $\SAT$-oracle circuit of size at most $P(m)$.
\end{lemma}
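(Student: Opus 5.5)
This lemma is essentially a restatement of \Cref{def:sat-circuitizable} in the notation of the parameterized derandomization argument, so the plan is to unfold that definition and check that every quantity matches.

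First, by hypothesis $R$ has SAT-circuitizable success predicates. \Cref{def:sat-circuitizable} then supplies a fixed polynomial $P$ that depends only on $R$, not on the input. For every input $(x,k)$ it sets $n=|\langle x,k\rangle|$, takes $\mu$ to be the number of random bits used by $R$ on $(x,k)$, and puts $m=n+\mu$. With these choices, the function $r\mapsto[R((x,k);r)\in\Pi'_{\mathrm{YES}}]$ is computable by a nonadaptive $\SAT$-oracle circuit of size at most $P(m)$. I take this same $P$ as the polynomial in the lemma.

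The only point needing care is the input format. In the definition, $\phi_{x,k}$ is a function of the $\mu$-bit random tape. In the lemma, it is read as a function on $m$-bit strings $r$ that uses only the first $\mu$ bits. If the definition is read with $r\in\{0,1\}^{\mu}$, I will convert the circuit by adding $n$ dummy input wires that are never connected to any gate. This adds at most $n\le m$ to the size, counting wires or gates, and it keeps the circuit nonadaptive because no new path through an oracle gate is created. Replacing $P(t)$ by $P(t)+t$ absorbs this overhead.

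Finally, I will note that promised inputs are a special case of arbitrary inputs, so restricting to them is harmless. Nothing in this proof is a real obstacle. The one step to state explicitly is that the polynomial is uniform over all inputs $(x,k)$, and the definition already guarantees this.
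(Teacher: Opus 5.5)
Your proposal is correct and matches the paper: its proof is the single observation that the lemma follows immediately from the definition of SAT-circuitizable success predicates, using the same polynomial $P$. Your extra step handles the reading in which the definition's circuit takes only the $\mu$-bit tape: you pad the circuit with $n$ unused input wires and replace $P(t)$ by $P(t)+t$. This is a harmless clarification that the paper leaves implicit.
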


\begin{proof}
This is immediate from the definition of SAT-circuitizable success predicates.
\end{proof}

For a YES input $(x,k)$, one-sidedness guarantees that a constant fraction of random tapes make the reduction output a YES instance.
Since the function recognizing such tapes is computable by a small nonadaptive $\SAT$-oracle circuit, the outputs of a pseudorandom generator contain at least one such tape under the circuit lower-bound assumption.
The next lemma states this for parameterized inputs.
\begin{lemma}
\label{lem:hitting-seed-fpt}
Assume that there exists an $\varepsilon>0$ such that
\begin{align*}
    \E \nsubseteq i.o.\text{-}\mathrm{NSIZE}(2^{\varepsilon n}).
\end{align*}
Let $\Pi=(\Pi_{\mathrm{YES}},\Pi_{\mathrm{NO}})$ and $\Pi'=(\Pi'_{\mathrm{YES}},\Pi'_{\mathrm{NO}})$ be parameterized promise problems.
Assume that there exists a randomized $\FPT$ many-one reduction with one-sided error $R:\Pi \to \Pi'$ and that $R$ has SAT-circuitizable success predicates.
Then there exist constants $c>1$, $a>1$, an integer $m_0$, and a family of functions
\begin{align*}
    G_m:\{0,1\}^{a\log m}\to\{0,1\}^m
    \qquad (m\ge m_0)
\end{align*}
such that the following hold.

First, there is a deterministic algorithm which, on input $m\ge m_0$ and $s\in\{0,1\}^{a\log m}$, computes $G_m(s)$ in time polynomial in $m$.
Second, for every $m\ge m_0$, the function $G_m$ is a $1/m^c$-PRG for $\mathrm{SIZE}^{\SAT}_{\|}(m^c)$.

Moreover, for every YES instance $(x,k)\in \Pi_{\mathrm{YES}}$, let $n:=|\langle x,k\rangle|$, let $\mu$ be the number of random bits used by $R$ on input $(x,k)$, let $m:=n+\mu$, and define
\begin{align*}
    \phi_{x,k}(r)=1 \iff R((x,k);r)\in \Pi'_{\mathrm{YES}}.
\end{align*}
If $m\ge m_0$, then there exists a seed $s\in\{0,1\}^{a\log m}$ such that $\phi_{x,k}(G_m(s))=1$.
\end{lemma}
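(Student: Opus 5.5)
The argument mirrors the proof of \Cref{lem:hitting-seed-np}; the only change is that the circuit bound for the success predicate now comes from \Cref{lem:success-predicate-fpt} instead of \Cref{lem:success-predicate-np}. By \Cref{lem:success-predicate-fpt} we get a fixed polynomial $P$ such that, for every promised input $(x,k)$, the predicate $\phi_{x,k}$ is computable by a nonadaptive $\SAT$-oracle circuit of size at most $P(m)$, where $m=|\langle x,k\rangle|+\mu$. We fix a constant $c>1$ with $P(t)\le t^c$ for all sufficiently large $t$. Then we invoke \Cref{thm:prg-under-assumption} with this $c$. This yields a constant $a>1$, a threshold $m_0$, and generators $G_m\colon\{0,1\}^{a\log m}\to\{0,1\}^m$ for all $m\ge m_0$. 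Each $G_m$ is computable in time $\mathrm{poly}(m)$ and is a $1/m^c$-PRG for $\mathrm{SIZE}^{\SAT}_{\|}(m^c)$. This already gives the first two assertions. We enlarge $m_0$ so that, for all $m\ge m_0$, both $P(m)\le m^c$ and $1/m^c<2/3$ hold.

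For the hitting property, we fix a YES instance $(x,k)\in\Pi_{\mathrm{YES}}$ with $m\ge m_0$. The predicate $\phi_{x,k}$ lies in $\mathrm{SIZE}^{\SAT}_{\|}(m^c)$ on $m$ input bits, since $P(m)\le m^c$. It depends only on the first $\mu$ bits of its input. Hence
\begin{align*}
\Pr_{r\leftarrow U_m}[\phi_{x,k}(r)=1]=\Pr_{u\leftarrow U_\mu}[R((x,k);u)\in\Pi'_{\mathrm{YES}}]\ge \frac23,
\end{align*}
by the completeness condition in \Cref{def:one-sided-fpt-red}. Suppose, for contradiction, that every seed $s$ gives $\phi_{x,k}(G_m(s))=0$. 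Then the distinguishing advantage of $\phi_{x,k}$ is at least $2/3>1/m^c$. This contradicts the pseudorandomness of $G_m$, so some seed $s$ satisfies $\phi_{x,k}(G_m(s))=1$.

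The only point needing care is that the PRG guarantee applies at the exact input length $m$ and to circuits of size $m^c$. Padding the random tape to length $m=n+\mu$ handles this, because $\phi_{x,k}$ ignores the extra bits. Using a single polynomial $P$ uniformly over all inputs, which \Cref{def:sat-circuitizable} provides, lets $c$ and $a$ be chosen independently of $(x,k)$. No FPT-specific issue arises here: the running time $T(k)$ affects only $\mu$ and hence $m$, never the choice of constants.
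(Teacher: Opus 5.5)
Your proposal is correct and follows the paper's proof step for step. You obtain the uniform polynomial $P$ from \Cref{lem:success-predicate-fpt}, fix $c$ with $P(t)\le t^c$, and invoke \Cref{thm:prg-under-assumption}. You then enlarge $m_0$ so that $P(m)\le m^c$ and $1/m^c<2/3$, and conclude that if no seed hits, $\phi_{x,k}$ would distinguish $G_m$ from uniform with advantage at least $2/3$, a contradiction.
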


\begin{proof}
By \Cref{lem:success-predicate-fpt}, there exists a fixed polynomial $P$ such that, for every promised input $(x,k)$, the function $\phi_{x,k}$ is computable by a nonadaptive $\SAT$-oracle circuit of size at most $P(m)$.

Choose a constant $c>1$ such that $P(t)\le t^c$ for all sufficiently large $t$.
Apply \Cref{thm:prg-under-assumption} with this constant $c$.
Then there exist a constant $a>1$, an integer threshold $m_0$, and a family of functions
\begin{align*}
    G_m:\{0,1\}^{a\log m}\to\{0,1\}^m
    \qquad (m\ge m_0)
\end{align*}
such that this family is computable in time polynomial in $m$, and for every $m\ge m_0$, the function $G_m$ is a $1/m^c$-PRG for $\mathrm{SIZE}^{\SAT}_{\|}(m^c)$.
By increasing $m_0$ if necessary, assume also that $P(m)\le m^c$ and $1/m^c<2/3$ for every $m\ge m_0$.

Now fix a YES instance $(x,k)\in \Pi_{\mathrm{YES}}$, and define $n,\mu,m,\phi_{x,k}$ as in the statement.
Since $P(m)\le m^c$, the predicate $\phi_{x,k}$ belongs to $\mathrm{SIZE}^{\SAT}_{\|}(m^c)$.
Since $(x,k)\in \Pi_{\mathrm{YES}}$ and $R$ has one-sided error,
\begin{align*}
    \Pr_{r\leftarrow U_m}[\phi_{x,k}(r)=1]
    =
    \Pr_{u\leftarrow U_\mu}[R((x,k);u)\in \Pi'_{\mathrm{YES}}]
    \ge \frac23.
\end{align*}
Assume for contradiction that $\phi_{x,k}(G_m(s))=0$ for every seed $s\in\{0,1\}^{a\log m}$.
Then
\begin{align*}
    \Pr_{s\leftarrow U_{a\log m}}[\phi_{x,k}(G_m(s))=1]=0,
\end{align*}
and hence
\begin{align*}
    \left|
        \Pr_{r\leftarrow U_m}[\phi_{x,k}(r)=1]
        -
        \Pr_{s\leftarrow U_{a\log m}}[\phi_{x,k}(G_m(s))=1]
    \right|
    \ge \frac23.
\end{align*}
For every $m\ge m_0$, we have $1/m^c<2/3$, contradicting the pseudorandomness of $G_m$.
Hence there exists a seed $s$ such that $\phi_{x,k}(G_m(s))=1$.
\end{proof}

We now turn the one-sided randomized $\FPT$ reduction into a deterministic $\FPT$ reduction.
For inputs with sufficiently large combined input-and-randomness length, we run the reduction on all outputs of the pseudorandom generator.
For smaller inputs, we simply enumerate all random tapes.
We then combine the resulting parameterized instances using the $\FPT$ OR function of the target problem.
The previous lemma guarantees correctness on YES inputs, one-sidedness guarantees correctness on NO inputs, and the parameter bounds in the reduction and the $\FPT$ OR function ensure that the final parameter depends only on the original parameter.
\begin{lemma}
\label{lem:derandomize-fpt-main}
Assume that there exists an $\varepsilon>0$ such that
\begin{align*}
    \E \nsubseteq i.o.\text{-}\mathrm{NSIZE}(2^{\varepsilon n}).
\end{align*}
Let $\Pi=(\Pi_{\mathrm{YES}},\Pi_{\mathrm{NO}})$ and $\Pi'=(\Pi'_{\mathrm{YES}},\Pi'_{\mathrm{NO}})$ be parameterized promise problems.
Suppose that
\begin{enumerate}
    \item $\Pi'$ has an FPT OR function;
    \item there exists a randomized $\FPT$ many-one reduction with one-sided error $R:\Pi \to \Pi'$;
    \item $R$ has SAT-circuitizable success predicates.
\end{enumerate}
Then there exists a deterministic $\FPT$ many-one reduction $M:\Pi \to \Pi'$.
\end{lemma}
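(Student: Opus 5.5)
The plan is to mirror the proof of \Cref{lem:derandomize-np-main}, replacing the NP-side ingredients with their parameterized counterparts. Let $C$ be the FPT OR function of $\Pi'$, with computable functions $T_C, g_C$ and constant $c_C$. Let $R$ run in time $T_R(k)\cdot|x|^{c_R}$ with output parameter $k'\le g_R(k)$. Apply \Cref{lem:hitting-seed-fpt} to obtain constants $c,a$, the threshold $m_0$, and the generators $G_m$.

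On input $(x,k)$, set $n:=|\langle x,k\rangle|$, let $\mu$ be the number of random bits $R$ uses, and put $m:=n+\mu$. If $m<m_0$, let $S(x,k):=\{0,1\}^\mu$ and $r_s:=s$. Otherwise let $S(x,k):=\{0,1\}^{a\log m}$ and $r_s:=G_m(s)$. For each $s$ compute $(y_s,k_s):=R((x,k);r_s)$, feeding $R$ only the first $\mu$ bits of $r_s$. Output
\begin{align*}
    M(x,k):=C\bigl((y_s,k_s) : s\in S(x,k)\bigr).
\end{align*}

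\textbf{Correctness.} If $(x,k)\in\Pi_{\mathrm{YES}}$ and $m<m_0$, then since the success probability is at least $2/3$, some tape in $\{0,1\}^\mu$ yields a YES instance. If $m\ge m_0$, \Cref{lem:hitting-seed-fpt} supplies a seed $s^\star$ with $(y_{s^\star},k_{s^\star})\in\Pi'_{\mathrm{YES}}$. In both cases the OR function outputs a YES instance. If $(x,k)\in\Pi_{\mathrm{NO}}$, one-sidedness forces every $(y_s,k_s)$ to lie in $\Pi'_{\mathrm{NO}}$, whatever the tape, so $C$ outputs a NO instance.

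\textbf{Parameter bound.} Every $k_s\le g_R(k)$, so $\kappa:=\max_s k_s\le g_R(k)$. The OR function then gives $\ell\le g_C(\kappa)\le \max_{j\le g_R(k)} g_C(j)$, which is a computable function of $k$. (The max is used in case $g_C$ is not monotone.)

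\textbf{Running time.} This is the step needing the most care. Since $R$ writes its tape within its time bound, $\mu\le T_R(k)\,|x|^{c_R}$, so $m\le T'(k)\,n^{c'}$ (cf.\ \Cref{rem:fpt-time-encoding}).
\begin{enumerate}
    \item If $m<m_0$, the number of tapes is at most $2^{m_0}$, a constant.
    \item If $m\ge m_0$, there are $m^a\le T'(k)^a n^{ac'}$ seeds, each $G_m(s)$ is computable in $\mathrm{poly}(m)$ time, and each run of $R$ takes FPT time.
\end{enumerate}
Each output $(y_s,k_s)$ has length at most $T_R(k)\,|x|^{c_R}$, so the total length $N$ handed to $C$ is at most (number of seeds)$\cdot T_R(k)\,|x|^{c_R}$, which is of the form $f(k)\cdot n^{O(1)}$. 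Then $C$ runs in time $T_C(\kappa)\,N^{c_C}$, and since $\kappa\le g_R(k)$ this is bounded by $\max_{j\le g_R(k)}T_C(j)\cdot f(k)^{c_C}\,n^{O(1)}$.

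The main obstacle is precisely this point: keeping the seed count and the OR function's time, both functions of $\kappa$ and $N$, bounded by computable-in-$k$ times polynomial-in-$n$. The remedy is to majorize the possibly non-monotone $T_C$ and $g_C$ by their maxima over $[0,g_R(k)]$, which are again computable functions of $k$. Combining correctness, the parameter bound and the running time, $M$ is a deterministic FPT many-one reduction from $\Pi$ to $\Pi'$.
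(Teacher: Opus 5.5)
Your proposal is correct and follows the paper's proof essentially step for step. It uses the same split at $m_0$ between enumerating all tapes and enumerating PRG seeds, \Cref{lem:hitting-seed-fpt} for YES inputs, one-sidedness for NO inputs, and the same accounting of running time and parameter through the FPT OR function. Your majorization of $T_C$ and $g_C$ by their maxima over $\{0,\ldots,g_R(k)\}$ is slightly more careful than the paper. The paper writes $T_{\mathrm{OR}}(g_R(k))$ and $g_{\mathrm{OR}}(g_R(k))$ directly, which implicitly assumes these functions are monotone.
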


\begin{proof}
Let $C$ be an FPT OR function for $\Pi'$.
By \Cref{lem:hitting-seed-fpt}, there exist constants $c>1$, $a>1$, an integer $m_0$, and a fixed family
\begin{align*}
    G_m:\{0,1\}^{a\log m}\to\{0,1\}^m
    \qquad (m\ge m_0)
\end{align*}
such that the conclusion of that lemma holds.

Fix computable functions $T_R,g_R$ and a constant $c_R>0$ witnessing the definition of randomized $\FPT$ many-one reduction with one-sided error for $R$.
Fix computable functions $T_{\mathrm{OR}},g_{\mathrm{OR}}$ and a constant $c_{\mathrm{OR}}>0$ witnessing the definition of $\FPT$ OR function for $C$.
Let $d_R:=\max\{1,\lceil c_R\rceil\}$.

We define $M$ as follows.
On input $(x,k)$, let $n:=|\langle x,k\rangle|$, let $\mu=\mu(x,k)$ be the number of random bits used by $R$ on input $(x,k)$, and let $m:=n+\mu$.

Define an ordered index set $S(x,k)$ and strings $r_s$ for $s\in S(x,k)$ as follows.
If $m<m_0$, let $S(x,k):=\{0,1\}^\mu$ and set $r_s:=s$ for every $s\in S(x,k)$.
If $m\ge m_0$, let $S(x,k):=\{0,1\}^{a\log m}$, and for each $s\in S(x,k)$ set $r_s:=G_m(s)$.

For each $s\in S(x,k)$, run $R$ on input $(x,k)$ using the first $\mu$ bits of $r_s$ as its random tape, obtaining $(y_s,\ell_s):=R((x,k);r_s)$.
Finally, output
\begin{align*}
    M(x,k):=C\bigl((y_s,\ell_s)\ :\ s\in S(x,k)\bigr),
\end{align*}
where the tuple is listed in lexicographic order of $S(x,k)$.

We verify that $M$ is a deterministic $\FPT$ many-one reduction.
Since $R$ runs in time at most $T_R(k)\cdot n^{c_R}$, the number of random bits $\mu$ satisfies
\begin{align*}
    \mu\le T_R(k)\cdot n^{c_R}\le T_R(k)\cdot (n+1)^{d_R}.
\end{align*}
Hence
\begin{align*}
    m=n+\mu\le (1+T_R(k))\cdot (n+1)^{d_R}.
\end{align*}

If $m<m_0$, then $|S(x,k)|=2^\mu\le 2^{m_0}$.
If $m\ge m_0$, then
\begin{align*}
    |S(x,k)|=2^{a\log m}=m^a
    \le (1+T_R(k))^a\cdot (n+1)^{ad_R}.
\end{align*}
Therefore, in all cases, $|S(x,k)|\le F_1(k)\cdot (n+1)^{ad_R}$ for some computable function $F_1$.

For every $s\in S(x,k)$, the output pair $(y_s,\ell_s)$ is produced within the running time of $R$, and hence $|\langle y_s,\ell_s\rangle| \le T_R(k)\cdot (n+1)^{d_R}$.
Moreover, every output parameter satisfies $\ell_s\le g_R(k)$.
Let
\begin{align*}
    N_z:=\sum_{s\in S(x,k)} |\langle y_s,\ell_s\rangle|.
\end{align*}
Then $N_z\le F_2(k)\cdot (n+1)^{(a+1)d_R}$ for some computable function $F_2$.
Hence $C$ is called on a tuple whose maximum parameter is at most $g_R(k)$, and by the definition of $\FPT$ OR function, the running time of $C$ is at most $T_{\mathrm{OR}}(g_R(k))\cdot N_z^{c_{\mathrm{OR}}}$.
Since each call to $G_m$ is polynomial in $m$, each call to $R$ takes time at most $T_R(k)\cdot n^{c_R}$, and the number of calls is at most $F_1(k)\cdot (n+1)^{ad_R}$, it follows that the overall running time of $M$ is of the form $F_3(k)\cdot (n+1)^{d}$ for some computable function $F_3$ and some constant $d>0$.
Equivalently, after changing the computable dependence on the parameter if necessary, this is a running-time bound of the form $F_3'(k)\cdot n^{d'}$ for some computable function $F_3'$ and some constant $d'>0$.
Thus $M$ is deterministic $\FPT$.
Its output parameter is bounded by $k_{\mathrm{out}}\le g_{\mathrm{OR}}(g_R(k))$, which depends only on $k$.

We next verify correctness.
Assume first that $(x,k)\in \Pi_{\mathrm{YES}}$.
If $m<m_0$, then, since $R$ has one-sided error,
\begin{align*}
    \Pr_{u\leftarrow U_\mu}[R((x,k);u)\in \Pi'_{\mathrm{YES}}]\ge \frac23.
\end{align*}
Hence there exists some $u^\star\in\{0,1\}^\mu=S(x,k)$ such that $R((x,k);u^\star)\in \Pi'_{\mathrm{YES}}$.
Therefore one of the inputs to $C$ belongs to $\Pi'_{\mathrm{YES}}$, and so $M(x,k)\in \Pi'_{\mathrm{YES}}$.

Assume now that $m\ge m_0$.
By \Cref{lem:hitting-seed-fpt}, there exists a seed $s^\star\in S(x,k)$ such that
\begin{align*}
    (y_{s^\star},\ell_{s^\star})=R((x,k);G_m(s^\star))\in \Pi'_{\mathrm{YES}}.
\end{align*}
Therefore, by the definition of $\FPT$ OR function, $M(x,k)\in \Pi'_{\mathrm{YES}}$.

Assume next that $(x,k)\in \Pi_{\mathrm{NO}}$.
For each $s\in S(x,k)$, let $u_s$ denote the first $\mu$ bits of $r_s$.
By one-sidedness of $R$, for every random tape $u\in\{0,1\}^\mu$, we have $R((x,k);u)\in \Pi'_{\mathrm{NO}}$.
Hence, for every $s\in S(x,k)$, $(y_s,\ell_s)=R((x,k);u_s)\in \Pi'_{\mathrm{NO}}$.
Therefore, by the definition of $\FPT$ OR function, $M(x,k)\in \Pi'_{\mathrm{NO}}$.

Thus $M$ is a deterministic $\FPT$ many-one reduction from $\Pi$ to $\Pi'$.
\end{proof}

Thus, we obtain the desired deterministic hardness statement for parameterized promise problems.
\begin{proof}[Proof of \Cref{thm:derandomize-fpt}]
By \Cref{lem:derandomize-fpt-main}, there exists a deterministic $\FPT$ many-one reduction $M:\Pi \to \Pi'$.
Since $\Pi$ is $\W[1]$-hard under deterministic $\FPT$ many-one reductions, it follows that $\Pi'$ is $\W[1]$-hard under deterministic $\FPT$ many-one reductions.
\end{proof}
\section{Conditional Hardness of MDP and SVP}
\label{sec:conditional}
In this section, we apply the deterministic hardness results from \Cref{sec:derandomize} to the coding and lattice problems studied in \Cref{sec:hardness}.
For each target problem, we first construct the required OR function and then verify that the randomized reduction used earlier satisfies the remaining condition needed to remove randomness.
\subsection{OR Functions of MDP}
We first consider $\gamma\text{-}\GapMDP_q$.
To apply the deterministic hardness result for parameterized promise problems, we need an FPT OR function for $\gamma\text{-}\GapMDP_q$.
The construction combines several generator matrices into one block-diagonal generator matrix after rescaling the distance thresholds to a common value.

The next lemma gives the required FPT OR function for $\gamma\text{-}\GapMDP_q$.
The common threshold is obtained by taking the least common multiple of the input thresholds, and each code is repeated so that its threshold becomes this common value.
\begin{lemma}[FPT OR function for $\gamma\text{-}\GapMDP_q$]
\label{lem:gapmdp-fpt-or-function}
Let $q\ge 2$ be a prime power, and let $\gamma \geq 1$ be a constant.
Then $\gamma\text{-}\GapMDP_q$, parameterized by the distance threshold $k$, has an FPT OR function.
\end{lemma}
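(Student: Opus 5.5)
We use the block-diagonal construction sketched in the technical overview. Given a tuple $\bigl((\mathbf G_1,k_1),\ldots,(\mathbf G_t,k_t)\bigr)$ with $\mathbf G_i\in\mathbb F_q^{m_i\times n_i}$, set $\kappa:=\max_i k_i$ and $K:=\operatorname{lcm}(1,2,\ldots,\kappa)$, so $K$ is a multiple of every $k_i$ and $K\le \kappa!$ (or $e^{O(\kappa)}$). For each $i$, let $\widetilde{\mathbf G}_i\in\mathbb F_q^{(K/k_i)m_i\times n_i}$ be obtained by repeating every row of $\mathbf G_i$ exactly $K/k_i$ times. Output
\begin{align*}
    C\bigl((\mathbf G_1,k_1),\ldots,(\mathbf G_t,k_t)\bigr):=\bigl(\operatorname{diag}(\widetilde{\mathbf G}_1,\ldots,\widetilde{\mathbf G}_t),\,K\bigr).
\end{align*}
The output parameter is $K\le g(\kappa):=\operatorname{lcm}(1,\ldots,\kappa)$, which is computable. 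The output matrix has $\sum_i (K/k_i)m_i\le K\sum_i m_i$ rows and $\sum_i n_i$ columns, so it is computed in time $T(\kappa)\cdot N^{O(1)}$ with $T(\kappa)=\mathrm{poly}(K)$.

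The key step is a weight identity for codewords. For each $\vec x_i\in\mathbb F_q^{n_i}$ we have $\|\widetilde{\mathbf G}_i\vec x_i\|_0=(K/k_i)\|\mathbf G_i\vec x_i\|_0$, because each coordinate of $\mathbf G_i\vec x_i$ appears exactly $K/k_i$ times. A codeword of the block-diagonal code is a concatenation $(\widetilde{\mathbf G}_1\vec x_1,\ldots,\widetilde{\mathbf G}_t\vec x_t)$, and its weight is the sum of the block weights. So it is nonzero iff some block is nonzero. Hence
\begin{align*}
    \lambda\bigl(\mathcal C(\operatorname{diag}(\widetilde{\mathbf G}_i))\bigr)=\min_i \frac{K}{k_i}\,\lambda(\mathcal C(\mathbf G_i)),
\end{align*}
with the convention $\lambda(\{\mathbf 0\})=+\infty$ for blocks generating the zero code.

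Completeness and soundness follow directly. If some $(\mathbf G_i,k_i)$ is YES, then $\lambda(\mathcal C(\mathbf G_i))\le k_i$, so the minimum is at most $K$ and the output is YES. If every input is NO, then $\lambda(\mathcal C(\mathbf G_i))>\gamma k_i$ for all $i$, so each term exceeds $\gamma K$. The minimum over finitely many terms also exceeds $\gamma K$, so the output is NO.

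Two details need care. First, a block $\mathbf G_i$ may generate the zero code, giving $\lambda=+\infty$; such an instance is automatically NO, and the convention handles it. Second, if all blocks are zero the output code is $\{\mathbf 0\}$, which is still NO. The only genuine concern is keeping the parameter bounded by a function of $\kappa$ alone while using repetition rather than scaling; taking $K=\operatorname{lcm}(1,\ldots,\kappa)$ instead of $\operatorname{lcm}(k_1,\ldots,k_t)$ is what makes $K$ depend only on $\kappa$. I expect no real obstacle here.
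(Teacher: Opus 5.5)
Your proposal is correct and is essentially the paper's proof: same choice $K=\operatorname{lcm}(1,\ldots,\kappa)$, row repetition by $K/k_i$, block-diagonal output, and the identity $\lambda(\mathcal{C}(\mathbf{G}_{\mathrm{out}}))=\min_i (K/k_i)\,\lambda(\mathcal{C}(\mathbf{G}_i))$. The only formality the paper adds is to first discard syntactically invalid input strings and, if none remain, output a fixed NO instance (the all-ones $(\lfloor\gamma\rfloor+1)\times 1$ code with threshold $1$); your algorithm should do the same, since it must handle arbitrary tuples.
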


\begin{proof}
We construct a deterministic algorithm $f_{\GapMDP}$ as follows.

On input a tuple of strings $w_1,\ldots,w_t$, first retain only those $w_i$ that are syntactically valid instances $(\mathbf G_i,k_i)$ of $\gamma\text{-}\GapMDP_q$ with $k_i\ge 1$.
If no input is retained, let $L_0:=\lfloor\gamma\rfloor+1$ and output the fixed NO instance $(\vec{1}_{L_0},1)$, where $\vec{1}_{L_0}$ is viewed as an $L_0\times 1$ generator matrix over $\mathbb{F}_q$.
Indeed, this one-dimensional code has minimum distance $L_0>\gamma$.

Let $\kappa:=\max_i k_i$ over the retained instances, and define $K:=\operatorname{lcm}(1,2,\ldots,\kappa)$.
For each retained instance $(\mathbf G_i,k_i)$, let $c_i:=K/k_i$, and let $\widetilde{\mathbf G}_i$ be the matrix obtained from $\mathbf G_i$ by repeating each row exactly $c_i$ times.
Finally output
\begin{align*}
    \bigl(\mathbf G_{\mathrm{out}},k_{\mathrm{out}}\bigr)
    :=
    \left(
        \begin{pmatrix}
            \widetilde{\mathbf G}_1 &        &        & \mathbf 0 \\
                                    & \widetilde{\mathbf G}_2 &        & \\
                                    &        & \ddots & \\
            \mathbf 0               &        &        & \widetilde{\mathbf G}_s
        \end{pmatrix},
        \,K
    \right),
\end{align*}
where $s$ is the number of retained instances.

We verify correctness.
For each $i$, every codeword of $\widetilde{\mathbf G}_i$ is obtained from a codeword of $\mathbf G_i$ by repeating each coordinate $c_i$ times, and hence $\lambda(\mathcal{C}(\widetilde{\mathbf{G}}_i))=c_i\,\lambda(\mathcal{C}(\mathbf{G}_i))$.
Since $\mathbf{G}_{\mathrm{out}}$ is block diagonal,
\begin{align*}
    \lambda(\mathcal{C}(\mathbf{G}_{\mathrm{out}}))
    =
    \min_{i\in[s]} \lambda(\mathcal{C}(\widetilde{\mathbf G}_i))
    =
    \min_{i\in[s]} c_i\,\lambda(\mathcal{C}(\mathbf G_i)).
\end{align*}

If one of the original inputs is a YES instance, say $(\mathbf G_j,k_j) \textrm{ is a YES instance of } \gamma\text{-}\GapMDP_q$, then it is retained and $\lambda(\mathcal{C}(\mathbf G_j))\le k_j$.
Therefore
\begin{align*}
\lambda(\mathcal{C}(\widetilde{\mathbf G}_j))\le c_jk_j=K,
\end{align*}
so $(\mathbf{G}_{\mathrm{out}},K) \textrm{ is a YES instance of } \gamma\text{-}\GapMDP_q$.

If all original inputs are NO instances, then every retained instance is NO, and hence $\lambda(\mathcal{C}(\mathbf G_i))>\gamma k_i$ for every retained $i$.
Thus
\begin{align*}
    \lambda(\mathcal{C}(\widetilde{\mathbf{G}}_i))> \gamma c_i k_i=\gamma K
\end{align*}
for every retained $i$, and so $\lambda(\mathcal{C}(\mathbf G_{\mathrm{out}}))>\gamma K$.
Hence $(\mathbf G_{\mathrm{out}},K) \textrm{ is a NO instance of } \gamma\text{-}\GapMDP_q$.

We now verify the FPT requirements.
The output threshold is $k_{\mathrm{out}}=K$, which depends only on $\kappa$.
Moreover, the algorithm runs in time $T(\kappa)\cdot N^{O(1)}$, where $N$ is the total input length and $T(\kappa)$ accounts for computing $K$ and explicitly writing down the repeated rows.
Therefore $f_{\GapMDP}$ is an FPT OR function for $\gamma\text{-}\GapMDP_q$.
\end{proof}

We now combine the randomized hardness of $\gamma\text{-}\GapMDP_q$ with the FPT OR function constructed above.
The only remaining point is to verify the circuit condition for the randomized reduction.

\begin{proof}[Proof of \Cref{thm:MDP-is-W[1]hard-conditional}]
Fix a prime power $q\ge 2$ and a constant $\gamma\ge 1$.
By \Cref{thm:MDP-is-W[1]hard}, $\gamma\text{-}\GapMDP_q$ is $\W[1]$-hard under randomized $\FPT$ many-one reductions with one-sided error.
By \Cref{lem:gapmdp-fpt-or-function}, $\gamma\text{-}\GapMDP_q$ has an FPT OR function.
More concretely, in the application of \Cref{thm:derandomize-fpt}, we take the source problem to be $4q$-\GapNCP$_q$ and take $R$ to be the concrete reduction obtained in the proof of \Cref{thm:MDP-is-W[1]hard}, including the constant number of tensorings needed for the target approximation factor.
By \Cref{thm:NCP-is-hard}, this source problem is $\W[1]$-hard under deterministic $\FPT$ many-one reductions.

It remains to verify the SAT-circuitizable-success-predicate condition in \Cref{thm:derandomize-fpt} for this concrete reduction $R$.
Fix an input $(x,k)$ of the source problem.
Let $n:=|\langle x,k\rangle|$, let $\mu$ be the number of random bits used by $R$ on $(x,k)$, let $m:=n+\mu$, and define
\begin{align*}
    \phi_{x,k}(r)=1 \iff R((x,k);r) \textrm{ is a YES instance of } \gamma\text{-}\GapMDP_q,
\end{align*}
where only the first $\mu$ bits of $r$ are used as the random tape of $R$.
Since $q$ is fixed, the YES language of $\gamma\text{-}\GapMDP_q$ belongs to $\NP$: for an instance $(\mathbf G,\ell)$ with $\mathbf G\in\mathbb F_q^{m'\times n'}$, a witness is a vector $\vec z\in\mathbb F_q^{n'}$ such that
\begin{align*}
    \mathbf G\vec z\neq \vec 0
    \qquad\text{and}\qquad
    \|\mathbf G\vec z\|_0\le \ell.
\end{align*}

Moreover, since $R$ is a randomized $\FPT$ many-one reduction, there exist a computable function $T_R$ and a constant $c_R>0$ such that, on every input $(x,k)$ of length $n$, the running time of $R$ is at most $T_R(k)n^{c_R}$.
By padding the random tape if necessary, we may assume $\mu\ge T_R(k)n^{c_R}$.
Therefore, given $r\in\{0,1\}^m$, we can compute $R((x,k);r)$ deterministically in time $O(\mu)=\poly(m)$, and after hardwiring $(x,k)$ this yields a deterministic circuit of size $\poly(m)$ that computes $R((x,k);r)$ from $r$.

Composing this circuit with a polynomial-time many-one reduction from $(\gamma\text{-}\GapMDP_q)_{\mathrm{YES}}$ to $\SAT$ gives a polynomial-size nonadaptive $\SAT$-oracle circuit for $\phi_{x,k}$.
Thus $R$ has SAT-circuitizable success predicates.

All hypotheses of \Cref{thm:derandomize-fpt} are therefore satisfied with $\Pi'=\gamma\text{-}\GapMDP_q$.
Applying \Cref{thm:derandomize-fpt}, we conclude that $\gamma\text{-}\GapMDP_q$ is $\W[1]$-hard under deterministic $\FPT$ many-one reductions.
\end{proof}
\subsection{OR Functions of SVP}
We next turn to $\gamma\text{-}\GapSVP_p$.
We need two OR functions: a polynomial-time OR function for the ordinary promise-problem setting and an FPT OR function for the parameterized setting.
Both constructions use block-diagonal sums of the input bases.

For the ordinary promise-problem setting, the output threshold must be computable in polynomial time in the total input length.
The next lemma achieves this by scaling the input bases so that all thresholds become a common value, and then taking their block-diagonal sum.
\begin{lemma}[Polynomial-time OR function for $\gamma\text{-}\GapSVP_p$]
\label{lem:gapsvp-poly-or-function}
Let $p\in[1,\infty]$ and let $\gamma \geq 1$ be a constant.
Then $\gamma\text{-}\GapSVP_p$ has a polynomial-time OR function.
\end{lemma}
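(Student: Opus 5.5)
We build the OR function $C$ directly, following the block-diagonal idea from the technical overview, with one adjustment for the polynomial-time setting. On input strings $w_1,\ldots,w_t$, we first keep only those $w_i$ that are syntactically valid instances $(\mathbf B_i,d_i)$ of $\gamma$-$\GapSVP_p$, that is, integer matrices with linearly independent columns and $d_i\in\mathbb Z_{>0}$. Full column rank can be checked in polynomial time by Gaussian elimination over $\mathbb Q$. If nothing survives, we output a fixed NO instance, for example the $1\times 1$ basis $(\lfloor\gamma\rfloor+1)$ with $d=1$, whose shortest vector has length $\lfloor\gamma\rfloor+1>\gamma$ in every $\ell_p$ norm.

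\textbf{Common threshold.} The natural choice $D=\operatorname{lcm}(d_1,\ldots,d_s)$ may be exponentially large in bit length, so we instead take $D:=\prod_{i} d_i$. Its bit length is at most $\sum_i \log d_i +s$, which is polynomial in the total input length. We set $c_i:=D/d_i\in\mathbb Z_{>0}$; each $c_i$ also has polynomial bit length. We output
\begin{align*}
    \bigl(\operatorname{diag}(c_1\mathbf B_1,\ldots,c_s\mathbf B_s),\,D\bigr).
\end{align*}
This is a lattice basis, since block-diagonal stacking of full-column-rank blocks with nonzero scalings keeps full column rank. It is computable in polynomial time because the number of entries is polynomial and each entry has polynomial bit length.

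\textbf{Correctness.} The key identity is
\begin{align*}
    \lambda_1^{(p)}\bigl(\mathcal L(\operatorname{diag}(c_i\mathbf B_i))\bigr)=\min_i c_i\,\lambda_1^{(p)}(\mathcal L(\mathbf B_i)).
\end{align*}
For $\le$, take the shortest vector of one block padded with zeros. For $\ge$, a nonzero lattice vector is a concatenation $(c_1\vec v_1,\ldots,c_s\vec v_s)$ with some $\vec v_j\neq\vec 0$. For finite $p$ its $\ell_p$ norm is at least $\|c_j\vec v_j\|_p$, and the same monotonicity holds for $p=\infty$, since it takes the maximum over coordinates.

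If some input $(\mathbf B_j,d_j)$ is YES, it was retained, and the identity gives output minimum at most $c_jd_j=D$. If all inputs are NO, every retained input satisfies $\lambda_1^{(p)}>\gamma d_i$, so every block exceeds $\gamma c_id_i=\gamma D$. The minimum over finitely many blocks is then $>\gamma D$, and the output is NO.

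\textbf{Main obstacle.} The only delicate point is keeping the common threshold of polynomial bit length while the thresholds are encoded in binary. This is why we use the product, or any common multiple of polynomial size, rather than $\operatorname{lcm}(1,\ldots,\max d_i)$. We also need to handle the syntactic filtering so that the definition of a basis, with linearly independent columns, is preserved.
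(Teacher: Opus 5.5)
Your proposal is correct and follows the paper's proof essentially line for line: keep the valid instances, set $D=\prod_i d_i$ and $c_i=D/d_i$, output the block-diagonal basis of the $c_i\mathbf B_i$ with threshold $D$, and fall back to a fixed NO instance when nothing is retained. One small correction to your commentary: $\operatorname{lcm}(d_1,\ldots,d_s)$ divides $\prod_i d_i$, so it is not exponentially large and would also work; only $\operatorname{lcm}(1,\ldots,\max_i d_i)$, the choice used in the FPT variant, is too large in the polynomial-time setting.
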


\begin{proof}
We construct a deterministic polynomial-time algorithm $f_{\GapSVP}$ as follows.

On input a tuple of strings $w_1,\ldots,w_t$, first retain only those $w_i$ that are syntactically valid instances $(\mathbf B_i,d_i)$ of $\gamma\text{-}\GapSVP_p$ with $d_i\ge 1$.
If no input is retained, output the fixed NO instance $([\lfloor\gamma\rfloor+2],1)$.

Let
\begin{align*}
    D := \prod_{i=1}^s d_i,
\end{align*}
where $s$ is the number of retained instances.
For each retained instance, let
\begin{align*}
    c_i := D/d_i = \prod_{j\neq i} d_j \in \mathbb Z_{>0},
    \qquad
    \widetilde{\mathbf B}_i := c_i \mathbf B_i,
\end{align*}
and output
\begin{align*}
    \bigl(\mathbf B_{\mathrm{out}},d_{\mathrm{out}}\bigr)
    :=
    \left(
        \begin{pmatrix}
            \widetilde{\mathbf B}_1 &        &        & \mathbf 0 \\
                                    & \widetilde{\mathbf B}_2 &        & \\
                                    &        & \ddots & \\
            \mathbf 0               &        &        & \widetilde{\mathbf B}_s
        \end{pmatrix},
        \,D
    \right).
\end{align*}

We verify correctness.
For each $i$, $\lambda_1^{(p)}(\mathcal L(\widetilde{\mathbf B}_i))  = c_i\,\lambda_1^{(p)}(\mathcal L(\mathbf B_i))$.
Since $\mathbf B_{\mathrm{out}}$ is block diagonal, $\lambda_1^{(p)}(\mathcal L(\mathbf B_{\mathrm{out}})) = \min_{i\in[s]} c_i\,\lambda_1^{(p)}(\mathcal L(\mathbf B_i))$.

If one of the original inputs is a YES instance, say $(\mathbf B_j,d_j)$ is a YES instance of $\gamma\text{-}\GapSVP_p$, then
\begin{align*}
    \lambda_1^{(p)}(\mathcal L(\widetilde{\mathbf B}_j))
    \le c_j d_j = D,
\end{align*}
and hence $(\mathbf B_{\mathrm{out}},D)$ is a YES instance of $\gamma\text{-}\GapSVP_p$.

If all original inputs are NO instances, then $\lambda_1^{(p)}(\mathcal L(\mathbf B_i))>\gamma d_i$ for every retained $i$, and so
\begin{align*}
    \lambda_1^{(p)}(\mathcal L(\widetilde{\mathbf B}_i))
    >
    \gamma c_i d_i
    =
    \gamma D
\end{align*}
for every retained $i$.
Therefore $(\mathbf B_{\mathrm{out}},D)$ is a NO instance of $\gamma\text{-}\GapSVP_p$.

Finally, the bit length of $D$ is at most the sum of the bit lengths of the retained thresholds $d_i$, so the entire construction runs in deterministic polynomial time in the total input length.
Thus $f_{\GapSVP}$ is a polynomial-time OR function for $\gamma\text{-}\GapSVP_p$.
\end{proof}

We now prove the conditional deterministic $\NP$-hardness of $\gamma\text{-}\GapSVP_2$.
The randomized hardness result and the polynomial-time OR function above allow us to apply the deterministic hardness result for promise problems.
\begin{proof}[Proof of \Cref{thm:SVP-is-NPhard-conditional}]
Fix a constant $\gamma\ge 1$.
Let
\begin{align*}
    \Pi_{\SAT}:=(\SAT,\overline{\SAT})
\end{align*}
be the promise problem corresponding to the language $\SAT$.

By \Cref{thm:SVP-is-NPhard}, $\gamma\text{-}\GapSVP_2$ is $\NP$-hard under randomized many-one reductions with one-sided error.
Applying this statement to the NP-complete language $\SAT$, there exists a randomized polynomial-time many-one reduction with one-sided error
\begin{align*}
    R:\Pi_{\SAT}\to \gamma\text{-}\GapSVP_2 .
\end{align*}
Moreover, $\Pi_{\SAT}$ is $\NP$-hard under deterministic polynomial-time many-one reductions.

The YES language of $\gamma\text{-}\GapSVP_2$ belongs to $\NP$.
Indeed, a certificate for a YES instance $(\mathbf B,d)$ is a nonzero integer vector $\vec z$ such that $\|\mathbf B\vec z\|_2\le d$, or equivalently $\sum_i (\mathbf B\vec z)_i^2 \le d^2$.
A standard Cramer's-rule bound shows that, whenever such a certificate exists, one exists with polynomial bit length.
The inequality above can then be checked in deterministic polynomial time.

By \Cref{lem:gapsvp-poly-or-function}, $\gamma\text{-}\GapSVP_2$ has a polynomial-time OR function.
Therefore all hypotheses of \Cref{thm:derandomize-np} are satisfied with
\begin{align*}
    \Pi=\Pi_{\SAT},
    \qquad
    \Pi'=\gamma\text{-}\GapSVP_2 .
\end{align*}
Applying \Cref{thm:derandomize-np}, we obtain a deterministic polynomial-time many-one reduction from $\Pi_{\SAT}$ to $\gamma\text{-}\GapSVP_2$.
Since $\Pi_{\SAT}$ is $\NP$-hard under deterministic polynomial-time many-one reductions, $\gamma\text{-}\GapSVP_2$ is $\NP$-hard under deterministic polynomial-time many-one reductions.
\end{proof}

For the parameterized hardness result, we also need an FPT OR function.
Here the thresholds are positive integers, so we can use the least common multiple of the input thresholds as the common output threshold.
\begin{lemma}[FPT OR function for $\gamma\text{-}\GapSVP_p$]
\label{lem:gapsvp-fpt-or-function}
Let $p\in[1,\infty]$ and let $\gamma \geq 1$ be a constant.
Then $\gamma\text{-}\GapSVP_p$, parameterized by the threshold $d\in\mathbb N$, has an FPT OR function.
\end{lemma}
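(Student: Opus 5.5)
The plan is to follow the polynomial-time OR function of \Cref{lem:gapsvp-poly-or-function}, replacing the product of thresholds $D=\prod_i d_i$ by $D:=\operatorname{lcm}(1,2,\ldots,\kappa)$ with $\kappa:=\max_i d_i$. This makes the output parameter depend only on $\kappa$, exactly as in \Cref{lem:gapmdp-fpt-or-function}.

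\emph{Construction.} On input $w_1,\ldots,w_t$, first discard the strings that are not syntactically valid instances $(\mathbf B_i,d_i)$ with $d_i\ge 1$. If nothing remains, output the fixed NO instance $([\lfloor\gamma\rfloor+2],1)$, whose lattice $(\lfloor\gamma\rfloor+2)\Z$ has minimum $\lfloor\gamma\rfloor+2>\gamma$ in every $\ell_p$ norm. Otherwise let $s$ be the number of retained instances and compute $\kappa$ and $D$. For each retained $i$, set $c_i:=D/d_i\in\Z_{>0}$ (an integer since $d_i\le\kappa$) and $\widetilde{\mathbf B}_i:=c_i\mathbf B_i$. Output the block-diagonal basis $\mathbf B_{\mathrm{out}}:=\operatorname{diag}(\widetilde{\mathbf B}_1,\ldots,\widetilde{\mathbf B}_s)$ with threshold $d_{\mathrm{out}}:=D$. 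Since block-diagonal matrices with full-column-rank blocks have full column rank, $\mathbf B_{\mathrm{out}}$ is again a lattice basis.

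\emph{Correctness.} First, $\lambda_1^{(p)}(\mathcal L(c_i\mathbf B_i))=c_i\lambda_1^{(p)}(\mathcal L(\mathbf B_i))$. Second, the lattice of a block-diagonal basis is the direct sum of the block lattices on disjoint coordinates, so
\begin{align*}
\lambda_1^{(p)}(\mathcal L(\mathbf B_{\mathrm{out}}))=\min_{i\in[s]} c_i\lambda_1^{(p)}(\mathcal L(\mathbf B_i)).
\end{align*}
The inequality $\le$ comes from embedding a shortest vector of one block and padding with zeros. For $\ge$, any nonzero vector has some nonzero block component, and its norm is at least the $\ell_p$ norm of that component (this holds for $p=\infty$ as well). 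If some input $(\mathbf B_j,d_j)$ is YES, then that block gives $\lambda_1^{(p)}\le c_jd_j=D$, so the output is YES. If all inputs are NO, every block exceeds $\gamma c_id_i=\gamma D$, so the output is NO. Only syntactically valid inputs can be YES or NO, so discarding invalid strings affects neither case.

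\emph{FPT requirements.} The output parameter is $D=\operatorname{lcm}(1,\ldots,\kappa)\le\kappa!$, a computable function $g(\kappa)$. The main point to check is the running time, since the scaling multiplies entries by $c_i\le D$. Each entry of $\mathbf B_i$ grows by at most $\log D\le\kappa\log\kappa$ bits, so the output length is at most $O(N\cdot\kappa\log\kappa)$ plus padding zeros. The padding zeros contribute at most $O(N^2)$ entries of constant size. Computing $D$ takes time $T(\kappa)$, and the multiplications are polynomial in $N$ and $\log D$. The total time is therefore $T(\kappa)\cdot N^{O(1)}$, as required by \Cref{def:or-para}.
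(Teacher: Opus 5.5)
Your proposal is correct and follows the paper's proof essentially verbatim: the same fixed NO instance, the same scaling $c_i=D/d_i$ with $D=\operatorname{lcm}(1,\ldots,\kappa)$, and the same block-diagonal minimum argument. You also make explicit several points the paper leaves implicit, namely that the block-diagonal matrix is still a lattice basis, that the direct-sum bound holds for $p=\infty$, and the bit-length accounting for the running time; all of these are handled correctly.
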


\begin{proof}
We use the same block-diagonal construction as in \Cref{lem:gapsvp-poly-or-function}, but now all retained thresholds are positive integers.

On input a tuple of strings $w_1,\ldots,w_t$, first retain only those $w_i$ that are syntactically valid instances $(\mathbf B_i,d_i)$ of $\gamma\text{-}\GapSVP_p$ with $d_i\ge 1$.
If no input is retained, output the fixed NO instance $([\lfloor \gamma \rfloor + 2], 1)$.

Let $\kappa:=\max_i d_i$ over the retained instances, and define $D:=\operatorname{lcm}(1,2,\ldots,\kappa)$.
For each retained instance, let $c_i:=D/d_i$, define $\widetilde{\mathbf B}_i:=c_i\mathbf B_i$, and output
\begin{align*}
    \bigl(\mathbf B_{\mathrm{out}},d_{\mathrm{out}}\bigr)
    :=
    \left(
        \begin{pmatrix}
            \widetilde{\mathbf B}_1 &        &        & \mathbf 0 \\
                                    & \widetilde{\mathbf B}_2 &        & \\
                                    &        & \ddots & \\
            \mathbf 0               &        &        & \widetilde{\mathbf B}_s
        \end{pmatrix},
        \,D
    \right).
\end{align*}

Exactly as above, $\lambda_1^{(p)}(\mathcal L(\mathbf B_{\mathrm{out}})) = \min_{i\in[s]} c_i\,\lambda_1^{(p)}(\mathcal L(\mathbf B_i))$.

If one of the original inputs is a YES instance, say $(\mathbf B_j,d_j) \textrm{ is a YES instance of } \gamma\text{-}\GapSVP_p$, then
\begin{align*}
    \lambda_1^{(p)}(\mathcal L(\widetilde{\mathbf B}_j))
    \le c_j d_j = D,
\end{align*}
and hence $(\mathbf B_{\mathrm{out}},D) \textrm{ is a YES instance of } \gamma\text{-}\GapSVP_p$.

If all original inputs are NO instances, then $\lambda_1^{(p)}(\mathcal L(\mathbf B_i))>\gamma d_i$ for every retained $i$, and so
\begin{align*}
    \lambda_1^{(p)}(\mathcal L(\widetilde{\mathbf B}_i))
    >
    \gamma c_i d_i
    =
    \gamma D
\end{align*}
for every retained $i$.
Therefore $(\mathbf B_{\mathrm{out}},D) \textrm{ is a NO instance of } \gamma\text{-}\GapSVP_p$.

The output threshold is $d_{\mathrm{out}}=D$, which depends only on $\kappa$.
The construction runs in time $T(\kappa)\cdot N^{O(1)}$, where $N$ is the total input length and $T(\kappa)$ accounts for computing $D$ and scaling the basis matrices.
Therefore this is an FPT OR function for $\gamma\text{-}\GapSVP_p$.
\end{proof}

Finally, we prove the conditional deterministic $\W[1]$-hardness of $\gamma\text{-}\GapSVP_p$.
The randomized $\FPT$ hardness reduction was proved earlier, and the previous lemma provides the required FPT OR function.
As in the MDP case, it remains to verify that the random tapes producing YES outputs can be recognized by a small nonadaptive $\SAT$-oracle circuit.

\begin{proof}[Proof of \Cref{thm:SVP-is-W[1]hard-conditional}]
Fix an integer $p\ge 1$, and let $\gamma\in[1,2^{1/p})$ be a constant.
Choose a rational number $\eta$ such that $\gamma < \eta < 2^{1/p}$.
It suffices to prove that $\eta$-$\GapSVP_p$ is $\W[1]$-hard under deterministic $\FPT$ many-one reductions, because any algorithm for $\gamma$-$\GapSVP_p$ also solves $\eta$-$\GapSVP_p$ on the same promised instances.

By \Cref{thm:SVP-is-W[1]hard}, $\eta$-$\GapSVP_p$ is $\W[1]$-hard under randomized $\FPT$ many-one reductions with one-sided error.
By \Cref{lem:gapsvp-fpt-or-function}, $\eta$-$\GapSVP_p$ has an FPT OR function.
More concretely, let $\Gamma$ be the source approximation factor in \Cref{thm:cvp-to-svp} corresponding to the rational target factor $\eta$.
In the application of \Cref{thm:derandomize-fpt}, we take the source problem to be $\Gamma$-\GapCVP$_p$ and take $R$ to be the concrete reduction from \Cref{thm:cvp-to-svp}.
By \Cref{thm:CVP-is-hard}, this source problem is $\W[1]$-hard under deterministic $\FPT$ many-one reductions.

It remains to verify the SAT-circuitizable-success-predicate condition in \Cref{thm:derandomize-fpt} for this concrete reduction $R$.
Fix an input $(x,k)$ of the source problem.
Let $n:=|\langle x,k\rangle|$, let $\mu$ be the number of random bits used by $R$ on $(x,k)$, let $m:=n+\mu$, and define
\begin{align*}
    \phi_{x,k}(r)=1 \iff R((x,k);r) \textrm{ is a YES instance of } \eta\text{-}\GapSVP_p,
\end{align*}
where only the first $\mu$ bits of $r$ are used as the random tape of $R$.

Since $p$ is a fixed integer, the YES language of $\eta$-$\GapSVP_p$ belongs to $\NP$.
Indeed, the YES language does not depend on the approximation factor and is exactly
\begin{align*}
    \{(\mathbf B,d) : \mathbf B \text{ is a lattice basis and } d\in\mathbb Z_{>0},\ \lambda_1^{(p)}(\mathcal L(\mathbf B)) \le d\}.
\end{align*}
Hence there exists a polynomial-time many-one reduction
\begin{align*}
    f:(\eta\text{-}\GapSVP_p)_{\mathrm{YES}}\le_m^p \SAT .
\end{align*}

Moreover, since $R$ is a randomized $\FPT$ many-one reduction, there exist a computable function $T_R$ and a constant $c_R>0$ such that on every input $(x,k)$ of length $n$ the running time of $R$ is at most $T_R(k)\cdot n^{c_R}$.
By padding the random tape (if necessary), we may assume that on input $(x,k)$ the algorithm reads at least $T_R(k)\cdot n^{c_R}$ random bits; in particular $\mu \ge T_R(k)\cdot n^{c_R}$.
Therefore, given $r\in\{0,1\}^m$ we can compute $R((x,k);r)$ deterministically in time $O(\mu)=\textrm{poly}(m)$, and after hardwiring $(x,k)$ this yields a deterministic circuit of size $\poly(m)$ that computes $R((x,k);r)$ from $r$.

Consequently,
\begin{align*}
\phi_{x,k}(r)=1 \iff R((x,k);r)\textrm{ is a YES instance of }\eta\text{-}\GapSVP_p
           \iff f(R((x,k);r))\in \SAT.
\end{align*}
Thus $\phi_{x,k}$ is computable by a polynomial-size nonadaptive $\SAT$-oracle circuit: first compute $R((x,k);r)$, then compute $f(\cdot)$, and finally make one $\SAT$ query.
Hence $R$ has SAT-circuitizable success predicates.

All hypotheses of \Cref{thm:derandomize-fpt} are therefore satisfied with $\Pi'=\eta\text{-}\GapSVP_p$.
Applying \Cref{thm:derandomize-fpt}, we conclude that $\eta\text{-}\GapSVP_p$ is $\W[1]$-hard under deterministic $\FPT$ many-one reductions.
By the first paragraph, $\gamma\text{-}\GapSVP_p$ is also $\W[1]$-hard under deterministic $\FPT$ many-one reductions.
\end{proof}

\printbibliography
\end{document}